\documentclass{article}
\usepackage{amsmath}       
\usepackage{amsthm}        
\usepackage{amssymb}       
\usepackage{braket}       
\usepackage{amsfonts}
\usepackage[all]{xy}
\usepackage{xspace}
\usepackage{calc}
\usepackage{comment}
\usepackage{pgfplots}
\usepgflibrary{fpu}
\usepackage{mathtools}
\usepackage{tabularx}
\usepackage{ifthen}             
\usepackage{graphicx}
\usepackage{docmute}
\usepackage{array}
\usepackage{subfloat} 
\usepackage{textcomp}
\usepackage{bbm}            
\usepackage{etoolbox}  
\usepackage{verbatim}
\usepackage{stmaryrd}
\usepackage{cancel}
\usepackage{xcolor}
\usepackage{tensor}
\usepackage{enumerate}  
\usepackage{thm-restate}
\usepackage{enumitem}
\usepackage[export]{adjustbox}

\usepackage[colorlinks=true, linkcolor=black, citecolor=black, urlcolor=black, hypertexnames=false
        ]{hyperref}
\usepackage{dcolumn}
\usepackage{bm}
\usepackage{authblk}
\usepackage[margin=3cm]{geometry}

\pagestyle{plain}




\DeclareRobustCommand{\SkipTocEntry}[5]{}


\usepackage{tikz}
\usetikzlibrary{matrix}
\usetikzlibrary{decorations.pathreplacing}
\usetikzlibrary{decorations.markings}
\usetikzlibrary{arrows}
\usetikzlibrary{calc}
\usetikzlibrary{shapes.misc}
\usetikzlibrary{fit}
\usepgflibrary{decorations.pathmorphing}
\usepgflibrary{shapes.geometric}

\pgfdeclarelayer{edgelayer}
\pgfdeclarelayer{nodelayer}
\pgfdeclarelayer{foreground}
\pgfsetlayers{edgelayer,nodelayer,main,foreground}

\tikzstyle{none}=[inner sep=0pt]

\tikzstyle{rn}=[circle,fill=Red,draw=Black,line width=0.8 pt]
\tikzstyle{gn}=[circle,fill=Lime,draw=Black,line width=0.8 pt]
\tikzstyle{bl}=[circle,fill=Blue,draw=Black,line width=0.8 pt]

\tikzstyle{simple}=[-,draw=Black,thick]
\tikzstyle{arrow}=[-,draw=Black,postaction={decorate},decoration={markings,mark=at position .5 with {\arrow{>}}},thick]
\tikzstyle{tick}=[-,draw=Black,postaction={decorate},decoration={markings,mark=at position .5 with {\draw (0,-0.1) -- (0,0.1);}},line width=2.000]
\makeatother\def\thickness{0.7pt}
\tikzstyle{dot}=[circle, draw=black, fill=black, inner sep=.5ex, line width=\thickness, node on layer=foreground]




\makeatletter
\pgfkeys{%
  /tikz/on layer/.code={
    \pgfonlayer{#1}\begingroup
    \aftergroup\endpgfonlayer
    \aftergroup\endgroup
  },
  /tikz/node on layer/.code={
     \gdef\node@@on@layer{%
      \setbox\tikz@tempbox=\hbox\bgroup\pgfonlayer{#1}\unhbox\tikz@tempbox\endpgfonlayer\egroup}
    \aftergroup\node@on@layer
  },
  /tikz/end node on layer/.code={
    \endpgfonlayer\endgroup\endgroup
  }
}

\def\node@on@layer{\aftergroup\node@@on@layer}

\makeatletter
\def\calign@preamble{%
   &\hfil\strut@
    \setboxz@h{\@lign$\m@th\displaystyle{##}$}%
    \ifmeasuring@\savefieldlength@\fi
    \set@field
    \hfil
    \tabskip\alignsep@
}
\let\cmeasure@\measure@
\patchcmd\cmeasure@{\divide\@tempcntb\tw@}{}{}{}
\patchcmd\cmeasure@{\divide\@tempcntb\tw@}{}{}{}
\patchcmd\cmeasure@{\ifodd\maxfields@
  \global\advance\maxfields@\@ne
  \fi}{}{}{}    
\newenvironment{calign}
{%
  \let\align@preamble\calign@preamble
  \let\measure@\cmeasure@
  \align
}
{%
  \endalign
}  
\makeatother
\tikzset{
    master/.style={
        execute at end picture={
            \coordinate (lower right) at (current bounding box.south east);
            \coordinate (upper left) at (current bounding box.north west);
        }
    },
    slave/.style={
        execute at end picture={
            \pgfresetboundingbox
            \path (upper left) rectangle (lower right);
        }
    }
}

\theoremstyle{plain} 

\newtheorem{theorem}{Theorem}[section]
\newtheorem{lemma}[theorem]{Lemma}
\newtheorem{corollary}[theorem]{Corollary}          
\newtheorem{proposition}[theorem]{Proposition}

\newtheorem*{theorem*}{Theorem}
\newtheorem*{proposition*}{Proposition}

\theoremstyle{definition} 
\newtheorem{definition}[theorem]{Definition}
\newtheorem{remark}[theorem]{Remark}

\newtheorem*{definition*}{Definition}

\theoremstyle{remark}  
\newtheorem{example}[theorem]{Example}


\newtheoremstyle{special_statement} 
        {\topskip}
        {\topskip}
        {\addtolength{\leftskip}{2.5em} \itshape }
        {}
        {\bfseries}
        {:}
        {.5em}
        {}
\theoremstyle{special_statement}



\DeclareMathOperator{\Hom}{Hom}
\DeclareMathOperator{\End}{End}

\newcommand{\id}{\mathrm{id}}

\newcommand{\Image}{\mathrm{Im}}
\newcommand{\Tr}{\mathrm{Tr}}

\newcommand{\Aut}{\ensuremath{\mathrm{Aut}}}

\newcommand{\Rep}{\mathrm{Rep}}

\newcommand{\Chan}{\mathrm{Chan}}
\newcommand{\Mod}{\mathrm{Mod}}

\newcommand{\Ker}{\mathrm{Ker}}

\newcommand{\CP}{\ensuremath{\mathrm{CP}}}

\newcommand{\Hilb}{\ensuremath{\mathrm{Hilb}}}
\newcommand{\TwoHilb}{\ensuremath{\mathrm{2Hilb}}}

\newcommand\supp{\mathrm{supp}}

\newcommand\ignore[1]{}

\newcommand{\QBij}{\ensuremath{\mathrm{QBij}}}

\newcommand{\F}{\ensuremath{\mathrm{SSFA}}}

\makeatletter
\DeclareFontFamily{OMX}{MnSymbolE}{}
\DeclareSymbolFont{MnLargeSymbols}{OMX}{MnSymbolE}{m}{n}
\SetSymbolFont{MnLargeSymbols}{bold}{OMX}{MnSymbolE}{b}{n}
\DeclareFontShape{OMX}{MnSymbolE}{m}{n}{
    <-6>  MnSymbolE5
   <6-7>  MnSymbolE6
   <7-8>  MnSymbolE7
   <8-9>  MnSymbolE8
   <9-10> MnSymbolE9
  <10-12> MnSymbolE10
  <12->   MnSymbolE12
}{}
\DeclareFontShape{OMX}{MnSymbolE}{b}{n}{
    <-6>  MnSymbolE-Bold5
   <6-7>  MnSymbolE-Bold6
   <7-8>  MnSymbolE-Bold7
   <8-9>  MnSymbolE-Bold8
   <9-10> MnSymbolE-Bold9
  <10-12> MnSymbolE-Bold10
  <12->   MnSymbolE-Bold12
}{}

\let\llangle\@undefined
\let\rrangle\@undefined
\DeclareMathDelimiter{\llangle}{\mathopen}%
                     {MnLargeSymbols}{'164}{MnLargeSymbols}{'164}
\DeclareMathDelimiter{\rrangle}{\mathclose}%
                     {MnLargeSymbols}{'171}{MnLargeSymbols}{'171}
\makeatother



\usepackage[utf8]{inputenc}
\usepackage[T1]{fontenc}
\usepackage{mathptmx}
\usepackage{etoolbox}

\makeatletter
\def\@email#1#2{%
 \endgroup
 \patchcmd{\titleblock@produce}
  {\frontmatter@RRAPformat}
  {\frontmatter@RRAPformat{\produce@RRAP{*#1\href{mailto:#2}{#2}}}\frontmatter@RRAPformat}
  {}{}
}%
\makeatother

\begin{document}

\title{Entanglement-invertible channels}
\author{Dominic Verdon\thanks{dominic.verdon@bristol.ac.uk}}
\affil{School of Mathematics, University of Bristol}

\date{\today}

\maketitle

\begin{abstract}
In a well-known result~(R. Werner, J. Phys. A, 34(35):7081, 2001), Werner classified all \emph{tight} quantum teleportation and dense coding schemes, showing that they correspond to \emph{unitary error bases}. Here tightness is a certain dimensional restriction: the quantum system to be teleported and the entangled resource must be of dimension $d$, and the measurement must have $d^2$ outcomes.  Here we generalise this classification so as to remove the dimensional restriction altogether, thereby resolving an open problem raised in that work. In fact, we classify not just teleportation and dense coding schemes, but \emph{entanglement-reversible channels}. These are channels between finite-dimensional $C^*$-algebras which are reversible with the aid of an entangled resource state, generalising ordinary reversibility of a channel. We show that such channels correspond to families of linear maps which are bi-isometric with respect to a duality defined by the resource state. In particular, in Werner's classification, a bijective correspondence between tight teleportation and dense coding schemes was shown: swapping Alice and Bob's operations turns a teleportation scheme into a dense coding scheme and vice versa. We observe that this property generalises ordinary invertibility of a channel; we call it \emph{entanglement-invertibility}. We show that entanglement-invertible channels are precisely the \emph{quantum bijections} previously studied in noncommutative topology~(B. Musto et al., J. Math. Phys, 59(8):081706, 2018), and therefore admit a classification in terms of Wang's quantum permutation group~(S. Wang, Comm. Math. Phys., 195:195-211, 1998).
\end{abstract}

\section{Introduction}

\subsection{A full classification of quantum teleportation and dense coding protocols}

Quantum teleportation and dense coding (sometimes called superdense coding) protocols are of vital importance in quantum computation and information theory. These protocols are in fact so foundational that we will assume the reader is familiar with them; if not, a nice exposition of the standard qubit teleportation and dense coding schemes can be found in the very first chapter of~\cite{Nielsen2010}. 

The papers in which quantum teleportation and dense coding were first defined~\cite{Bennett1992,Bennett1993} gave the standard qubit schemes; it was then natural to ask what other teleportation and dense coding schemes exist, and whether there is some classification of these schemes. In~\cite{Werner2001}, Werner gave a partial answer to this question; he restricted his attention to the \emph{tight} case. For teleportation this means that the state to be teleported is in a Hilbert space of dimension $d$, the shared entangled state is of two $d$-dimensional Hilbert spaces, and the measurement Alice performs has $d^2$ possible outcomes. Under these conditions, Werner showed that teleportation and dense coding schemes are in bijection with \emph{unitary error bases}, bases of unitary matrices orthogonal under the Hilbert-Schmidt inner product. That is, a unitary error basis yields both a tight teleportation scheme and a tight dense coding scheme, and all schemes are obtained this way. 

Despite the name of the paper~\cite{Werner2001}, this is only a classification of \emph{tight} teleportation and dense coding schemes, and it is natural to ask whether a classification exists for general schemes. Although there was some progress in this direction (in particular, we note~\cite{Albeverio2000,Albeverio2002,Mozes2005,Wu2006,Feng2006}) the question has 
not to our knowledge been solved before now.

The primary goal of this work is to classify teleportation and dense coding schemes in full generality. In fact, we will find that our classification extends more generally to \emph{entanglement-reversible} and \emph{entanglement-invertible channels}, which we will define shortly. The key tool we will use to state and prove the classification is an extended version of the graphical calculus of tensor network diagrams. It has long been observed (see e.g.~\cite{Abramsky2004,Heunen2019}) that the diagrammatic calculus is a convenient tool for studying quantum teleportation and dense coding, and here we find it absolutely essential; while it is in theory possible to state and prove our results without it, all intuition would be lost.

\subsection{Entanglement-reversibility and entanglement-invertibility}

We will now define what we mean by \emph{entanglement-reversible channels}. As is standard in quantum information theory, when we talk about \emph{channels} we mean completely positive trace-preserving maps between finite-dimensional (f.d.) $C^*$-algebras. 

Let $A,B$ be f.d. $C^*$-algebras. Recall that a channel $M: A \to B$ is \emph{reversible} if there exists a channel $N: B \to A$ such that $N \circ M = \id_A$; the channel $N$ is called a \emph{left inverse} for $M$. The channel is furthermore \emph{invertible} if $M \circ N = \id_B$; in this case $\dim(A) = \dim(B)$ and the left inverse $N$ is uniquely defined.

We generalise these definitions to account for an entangled resource state.  
\begin{definition}
Let $H_1,H_2$ be two Hilbert spaces, let $B(H_1)$ and $B(H_2)$ be the $C^*$-algebras of operators on these spaces and let $\sigma: B(H_1) \otimes B(H_2) \to B(H_2) \otimes B(H_1)$ be the swap channel. Let $W: \mathbb{C} \to B(H_1) \otimes B(H_2)$ be any channel (i.e. any state of $B(H_1) \otimes B(H_2)$). 

Let $M: A \otimes B(H_1) \to B$ be a channel. We say that $M$ is \emph{entanglement-reversible} with respect to $W$ if there exists a channel $N: B \otimes B(H_2) \to A$ satisfying the left equation of~\eqref{eq:defintro}. (The diagrams are read from bottom to top.) In this case we say that $N$ is an \emph{entanglement-left inverse of $M$ w.r.t. $W$}. If the right equation of~\eqref{eq:defintro} is additionally satisfied we say that $M$ is \emph{entanglement-invertible} with respect to $W$, and that $N$ is an \emph{entanglement-inverse} for $M$ w.r.t. $W$.
\begin{align}\nonumber
\includegraphics[valign=c]{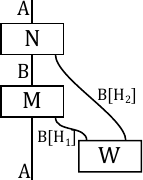}
~~=~~
\includegraphics[valign=c]{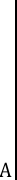}
&&
\includegraphics[valign=c]{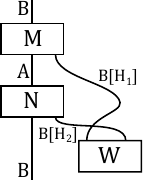}
~~=~~
\includegraphics[valign=c]{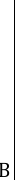}
\\\label{eq:defintro}
N \circ (M \otimes \id_{B(H_2)}) \circ  (\id_A \otimes W)
= 
\id_A
&&
M \circ (N \otimes \id_{B(H_1)}) \circ (\id_B \otimes \sigma) \circ (\id_B \otimes W)
=
\id_B
\end{align}
It is clear that these definitions reduce to ordinary reversibility and invertibility when $\dim(H_1) = \dim(H_2) = 1$.
\end{definition}
\begin{example}
The standard examples of entanglement-reversible channels are teleportation and dense coding schemes~\cite{Bennett1992,Bennett1993}. Let $K$ be some Hilbert space, and let $[n]$ be the $n$-dimensional commutative $C^*$-algebra. (Throughout we use the same notation for commutative $C^*$-algebras as for finite sets, since the two are equivalent by Gelfand duality.) Then:
\begin{itemize}
\item Let $A := B(K)$, and $B := [n]$. Then an entanglement-reversible channel $M: A \otimes B(H_1) \to B$ is precisely a quantum teleportation scheme. Using one half of the resource state $W$, a state $\sigma$ of the system $B(K)$ is transformed into classical information, from which $\sigma$ can be recovered using the other half of the resource state $W$.
\item Let $A:= [n]$, and let $B:= B(K)$. Then an entanglement-reversible channel $M: A \otimes B(H_1) \to B$ is precisely a quantum dense coding scheme. Using one half of the resource state $W$, some state in $i \in \{1,\dots,n\}$ is transformed into a quantum state $\omega_i \in B(K)$, from which $i$ can be recovered using the other half of the resource state $W$.
\end{itemize}
Of course, entanglement-reversibility is more general than teleportation and dense coding; we could consider entanglement-reversible classical-to-classical or quantum-to-quantum channels, for instance. 
\end{example}
\noindent
In~\cite[Thm. 1]{Werner2001}, Werner classified \emph{tight} teleportation and dense coding schemes. Tightness is a dimensional restriction: the Hilbert spaces $K,H_1,H_2$ all have the same dimension $d$, and one fixes $n:= d^2$. In this case it was shown that:
\begin{itemize}
\item For entanglement-reversibility, $W$ must be a maximally entangled pure state. 
\item Any entanglement-reversible channel is furthermore entanglement-invertible, yielding a bijective correspondence between tight teleportation and tight dense coding schemes. 
\item A tight teleportation or dense coding scheme is precisely specified by the data of a \emph{unitary error basis} (a basis of unitary operators in $B(K)$ orthogonal under the trace inner product).
\end{itemize}
In this work we extend Werner's classification to general entanglement-reversible channels, without any dimensional restriction.

\subsection{Results}\label{sec:summary}

\subsubsection{Classification of entanglement-reversible and entanglement-invertible channels}

To obtain this classification we use the notions of \emph{bi-isometry} and \emph{minimal dilation}, which we will shortly define. We also use the graphical calculus of shaded tensor network diagrams~\cite{Reutter2019}, where diagrams represent indexed families of linear maps. In this short summary of the results we will not use any tensor network diagrams, since we have not introduced the graphical calculus yet; however, the reader can find the diagrammatic statements of these results by consulting the statements in the body of the paper.

We call an indexed family of Hilbert spaces a \emph{1-morphism} and an indexed family of linear maps a \emph{2-morphism}. (The language of 1- and 2-morphisms reflects the fact that the shaded calculus is the graphical calculus of $\TwoHilb$, the semisimple $C^*$-2-category of finite-dimensional 2-Hilbert spaces and linear maps~\cite{Vicary2012,Verdon2021}. However, no category theory is required in order to understand our results; we give a full introduction to the shaded calculus which does not mention categories.) There are notions of tensor product, duality and dimension for 1-morphisms, and Hermitian adjunction for 2-morphisms, extending the corresponding notions for single Hilbert spaces and linear maps. A dual for a 1-morphism $X$ is a triple $(X^*,\epsilon,\eta)$, where $X^*$ is a dual 1-morphism and $\epsilon$ and $\eta$ are `cap' and `cup' 2-morphisms defining the duality; there is always a \emph{standard} choice of dual, which is unique up to unitary isomorphism. 

It is well-known that every f.d. $C^*$-algebra $A$ decomposes as a multimatrix algebra $A \cong \bigoplus_{i\in I} B(H_i)$ for some Hilbert spaces $\{H_i\}_{i \in I}$, where $I$ is a finite index set; this indexed family of Hilbert spaces defines a 1-morphism, which we will call $X_A$.  
\begin{definition*}
Let $A,B$ be f.d. $C^*$-algebras. By a generalised Stinespring's theorem (Theorem~\ref{thm:stinespring}), a channel $F: A \otimes B(H) \to B$ corresponds to a family of \emph{dilations} $\{(E,\tau)\}$, where $E$ is a 1-morphism and $\tau: H \otimes X_A \to X_B \otimes E$ is a 2-morphism. The dilation minimising $\dim(E)$ is unique up to unitary isomorphism, and we call it the \emph{minimal dilation} of the channel. 
\end{definition*}
\begin{definition*}
Let $\tau: H \otimes X_A \to X_B \otimes E$ be a 2-morphism and let $(H^*,\eta,\epsilon)$, $(E^*,\eta,\epsilon)$ be duals for $H$ and $E$. Let $\tau^T: X_A \otimes E^* \to H^* \otimes X_B$ be the partial transpose with respect to these duals. We say that $\tau$ is a \emph{bi-isometry} with respect to these duals if $\tau^{\dagger} \circ \tau = \id_{H \otimes X_A}$ and $(\tau^{T})^{\dagger} \circ \tau^T = \id_{X_A \otimes E^*}$. We say that $\tau$ is a \emph{biunitary} with respect to these duals if it is a bi-isometry and additionally $\tau \circ \tau^{\dagger} = \id_{X_B \otimes E}$ and $\tau^T \circ (\tau^T)^{\dagger} = \id_{H^* \otimes X_B}$.
\end{definition*}
\noindent
We first obtain a classification of channels entanglement-invertible w.r.t. the maximally entangled pure state, in terms of biunitary 2-morphisms. 
\begin{proposition*}[Proposition~\ref{prop:qbijmindil}]
A channel $F: A \otimes B(H) \to B$ is entanglement-invertible w.r.t. the maximally entangled pure state of $B(H) \otimes B(H)$ if and only if its minimal dilation $\tau: H \otimes X_A \to X_B \otimes E$ is (up to normalisation) biunitary w.r.t. the standard duality on $H$ and $E$. In particular, this implies $\dim(A) = \dim(B)$.
\end{proposition*}
\noindent
Channels whose minimal dilation is a biunitary were called \emph{quantum bijections} in~\cite[Def. 4.3]{Musto2018}; our results therefore give an operational interpretation of this mathematical definition. As was shown in~\cite{Musto2019}, quantum bijections possess a nice compositional structure; a map between quantum bijections is called an \emph{intertwiner}.

We now proceed to classify entanglement-reversible and entanglement channels w.r.t. a fixed resource state $W$ in terms of their minimal dilation. Note that for any linear map $\omega: H \to H$ satisfying $\Tr(\omega^{\dagger} \omega) \neq 0$ one can define a corresponding pure state of $H \otimes H$ by normalising $(\omega \otimes \mathbbm{1}) \ket{\Phi}$, where $\ket{\Phi} \in H \otimes H$ is the canonical maximally entangled state. We say that this is the `state defined by the map $\omega$'.
\begin{theorem*}[Theorem~\ref{thm:genpure}]
Let $W: \mathbb{C} \to B(H) \otimes B(H)$ be a pure state defined by an invertible map $\omega: H \to H$, and let $M: A \otimes B(H) \to B$ be a channel. Then:
\begin{itemize}
\item The channel $M$ is entanglement-reversible w.r.t $W$ precisely when its minimal dilation is a bi-isometry w.r.t. a certain duality defined by the state $W$. In particular, it is a necessary condition that $\dim(A) \leq \dim(B)$.
\item If additionally $\dim(A) = \dim(B)$, then the minimal dilation of $M$ is furthermore a biunitary w.r.t. the duality defined by $W$. Moreover, the entanglement-left inverse is uniquely defined. 
\item The channel $M$ is furthermore entanglement-invertible w.r.t. $W$ precisely when the following conditions are satisfied:
\begin{itemize}
\item $M$ is a quantum bijection.
\item The map $\omega^{\dagger} \circ \omega$ is an intertwiner of quantum bijections $M \to M$.
\end{itemize}
\end{itemize}
\end{theorem*}
\noindent
Once this theorem is proved it is straightforward to extend the result to general pure and mixed states $W \in B(H_1) \otimes B(H_2)$, since any mixed state can be decomposed as a convex combination of pure states and, up to a quotient and an injection, all of these pure states are defined by an invertible map; the general result is given as Corollary~\ref{cor:genmixed}. 

Finally, we show in Section~\ref{sec:wernerex} how Werner's classification of tight teleportation and dense coding schemes in terms of unitary error bases emerges straightforwardly from our more general result. We expect similar methods can be used to extract concrete descriptions of entanglement-reversible channels in other special cases.
\subsection{Related work}

\paragraph{Teleportation and dense coding outside of the tight scenario.} We highlight some relevant previous work on this problem; this list is not exhaustive. With regard to dense coding: the papers~\cite{Shadman2010,Situ2013} dealt with superdense coding over noisy quantum channels or with noisy encoding operations; this can be brought within our framework by considering entanglement-reversibility of $N \circ M$ w.r.t. a state $W$, where $M$ is the encoding channel and $N$ is a channel representing the noise.  The papers~\cite{Mozes2005,Wu2006} provide dimensional bounds for dense coding with arbitrary entangled pure state $W$. The paper~\cite{Feng2006} studies tight dense coding with an arbitrary entangled pure state $W$, in the case where some nonzero probability of failure is allowed.  With regard to teleportation: the papers~\cite{Albeverio2000,Albeverio2002} give conditions for entanglement-reversibility of a channel $(M,H): B(K) \to{} [d]$ w.r.t. a general pure state $W$ with no dimensional restriction when the channel $M$ is a complete projective measurement. 

\paragraph{Categorical quantum mechanics.} This work makes use of the technology of categorical quantum mechanics, in particular the 2-categorical diagrammatic calculus~\cite{Selinger2010} which was applied to quantum mechanics in~\cite{Vicary2012,Vicary2012a} and further developed in~\cite{Reutter2019,Heunen2019}; we also use the covariant Stinespring theorem~\cite[Thm. 4.9]{Verdon2021} (although since there is no symmetry group here the special case in this paper also follows straightforwardly from~\cite[Cor. 4.13]{Selinger2007}). Our treatment of `splitting' finite-dimensional $C^*$-algebras is based on Q-system completion for rigid $C^*$-tensor categories~\cite{Chen2022}. 

However, no category theory is required in order to understand this paper; in particular, we present an introduction to the diagrammatic calculus without ever referring to categories. 

\subsection{Acknowledgments}
We are grateful to David Reutter and Jamie Vicary for useful discussions. We thank Ashley Montanaro for his support of this work. This work has been funded by the European Research Council (ERC) under the European Union’s Horizon 2020 research and innovation programme (grant agreement No. 817581). This work has also been funded by EPSRC.

\section{Background}

\subsection{Diagrammatic calculus}

In this work we will make use of a diagrammatic calculus of shaded tensor network diagrams, which has appeared before in~\cite{Heunen2019,Reutter2019}. We now provide an elementary introduction which requires no background in category theory.

\subsubsection{The unshaded calculus}\label{sec:unshaded}
We will first review the well-known tensor network diagram calculus (see e.g.~\cite{Selinger2010,Heunen2019}), in which wires correspond to finite-dimensional Hilbert spaces and boxes correspond to linear maps.  \ignore{The horizontal direction in the diagram corresponds to monoidal product and the vertical direction to composition.} We read diagrams from bottom to top, so input wires come in from the bottom and output wires exit at the top of the diagram.\ignore{ For example, for $f: V_1 \to V_2$ and  $g:V_2 \to V_3$, the  composition $g \circ f: V_1 \to V_3$ and monoidal product $f \otimes g: V_1 \otimes  V_2 \to V_2 \otimes V_3$ are depicted as follows:}
Composition and tensor product are depicted by vertical and horizontal juxtaposition respectively. For instance, let $f: V_1 \to V_2$ and $g: V_2 \to V_3$ be linear maps; then they can be composed as follows:
\begin{align*}
\includegraphics{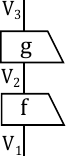}
&&
\includegraphics{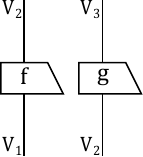}
\\
g \circ f: V_1 \to V_3 && f \otimes g: V_1 \otimes V_2 \to V_2 \otimes V_3 
\end{align*}
The reader will notice the boxes have an offset edge; this is so we can represent the transpose, dagger and complex conjugate of a linear map, as we will discuss shortly.

Wires corresponding to the one-dimensional Hilbert space $\mathbb{C}$ are not drawn. A diagram with no input and no output wires therefore represents a linear map $\mathbb{C} \to \mathbb{C}$, i.e. a scalar. Likewise, a diagram with no input wires represents a linear map $\psi: \mathbb{C} \to V$, where $V$ is the Hilbert space specified by its output wires; such linear maps obviously correspond to vectors $\ket{\psi} \in V$, where $\ket{\psi}:= \psi(1)$. Likewise, a diagram with no output wires represents a vector $\bra{\psi} \in V^*$, where $V$ is the Hilbert space specified by the input wires of the diagram. From now on we will use the braket notation for both the vector and the associated linear map, so we will write (for instance) $\ket{\psi}: \mathbb{C} \to V$.

Every f.d. Hilbert space $V$ is self-dual. (We remark that using the self-duality of f.d. Hilbert spaces in the way we do here is \emph{evil} in the sense of category theory, since it relies on a unnatural choice of orthonormal basis for each Hilbert space. We find that it simplifies the graphical calculus, but those who shun evil will prefer to distinguish between a Hilbert space and its dual.) Let $\{\ket{i}\}$ be some orthonormal basis of $V$, and let 
\begin{align}\label{eq:etadef}
\ket{\eta_V}:= \sum_{i =1}^{d}\ket{i} \otimes \ket{i} \in V \otimes V.\end{align} 
(We call the normalisation $\frac{1}{\sqrt{\dim(V)}} \ket{\eta_V}$ the \emph{canonical maximally entangled state} of $V \otimes V$; it is sometimes known as the Bell state.) Then the self-duality of $V$ is characterized by the vectors $\ket{\eta_V} \in V \otimes V$ and $\bra{\eta_V} \in (V \otimes V)^*$; in the graphical calculus we represent these linear maps topologically as \textit{cups and caps}:
\begin{calign}\label{eq:cupscapsHilb}
\includegraphics[scale=0.75,valign=c]{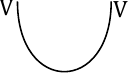}
&
\includegraphics[scale=0.75,valign=c]{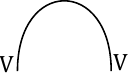}
\\
\ket{\eta_V}: \mathbb{C} \to V \otimes V
& 
\bra{\eta_V}: V \otimes V \to \mathbb{C} 
\end{calign}
These maps fulfill the following \textit{snake equations}:
\begin{calign}\label{eq:snake}
\includegraphics[scale=0.75,valign=c]{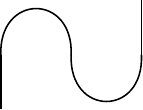}
~~~=~~~
\includegraphics[scale=0.75,valign=c]{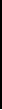}
~~~= ~~~
\includegraphics[scale=0.75,valign=c]{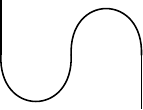}
\end{calign}
Together with the swap map $\sigma_{V,W}:\ignore{V\otimes W\to W\otimes V$,~$}v\otimes w\mapsto w\otimes v$, depicted as a crossing of wires, this leads to an extremely flexible topological calculus, in which we can untangle arbitrary diagrams and straighten out any twists:
\begin{calign}
\includegraphics[scale=1,valign=c]{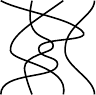}
~~=~~
\includegraphics[scale=1,valign=c]{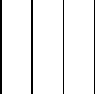}
&
\includegraphics[scale=1,valign=c]{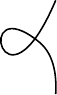}
~~~=~~~\includegraphics[scale=1,valign=c]{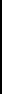}
~~~=~~~
\includegraphics[scale=1,valign=c]{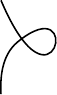}
\ignore{
\input{unwindtwists.tikz}}
\end{calign}
\ignore{We will refer to linear maps of the form $\C \to V$ and $V \to\C$ for a Hilbert space $V$ as \emph{states} and \emph{effects}, where the more traditional definition of  a state as a vector $\ket{\psi} \in V$ is retrieved as $\ket{\psi} = \psi(1)$.}
Let $D$ be a diagram representing a linear map $f:V_1 \otimes \dots \otimes V_m  \to W_1 \otimes \dots \otimes W_n$ between Hilbert spaces. Then the Hermitian adjoint (colloquially, the \emph{dagger}) \mbox{$f^\dagger:W_1 \otimes \dots \otimes W_n \to V_1 \otimes \dots \otimes V_{m}$} is represented by the reflection of the diagram $D$ across a horizontal axis:
\begin{align*}
\includegraphics[scale=0.75,valign=c]{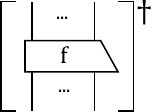}
~~=:~~
\includegraphics[scale=0.75,valign=c]{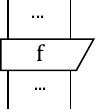}
\end{align*}
The transpose $f^T: W_n \otimes \dots \otimes W_1 \to V_m \otimes \dots \otimes V_1$ with respect to the orthonormal basis defining the self-duality is represented by means of a $\pi$-rotation of the corresponding diagram:
\begin{align*}
\includegraphics[scale=0.75,valign=c]{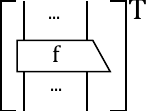}
~~=~~
\includegraphics[scale=0.75,valign=c]{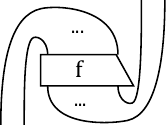}
~~=~~
\includegraphics[scale=0.75,valign=c]{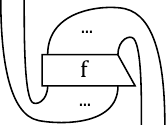}
~~=:~~
\includegraphics[scale=0.75,valign=c]{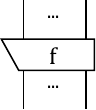}
\end{align*}
Finally, the complex conjugate $f^*: V_m \otimes \dots \otimes V_1 \to W_n \otimes \dots \otimes W_1$ with respect to the orthonormal basis defining the self-duality is represented by means of a reflection in a vertical axis:
\begin{align*}
\includegraphics[scale=0.75,valign=c]{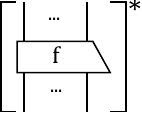}
~~=~~
\includegraphics[scale=0.75,valign=c]{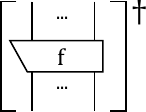}
~~=~~
\includegraphics[scale=0.75,valign=c]{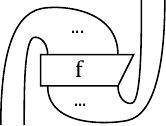}
~~=:~~
\includegraphics[scale=0.75,valign=c]{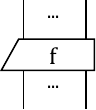}
\end{align*}
With this notation, the boxes slide along the wires as one would expect (where the cups and caps in the diagrams are those of~\eqref{eq:cupscapsHilb}:
\begin{align*}
\includegraphics[valign=c]{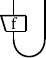}
~~=~~
\includegraphics[valign=c]{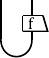}
&&
\includegraphics[valign=c]{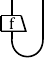}
~~=~~
\includegraphics[valign=c]{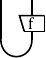}
&&
\includegraphics[valign=c]{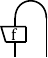}
~~=~~
\includegraphics[valign=c]{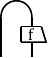}
&&
\includegraphics[valign=c]{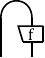}
~~=~~
\includegraphics[valign=c]{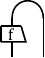}
\\[5pt]
\includegraphics[valign=c]{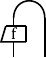}
~~=~~
\includegraphics[valign=c]{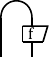}
&&
\includegraphics[valign=c]{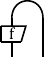}
~~=~~
\includegraphics[valign=c]{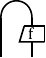}
&&
\includegraphics[valign=c]{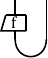}
~~=~~
\includegraphics[valign=c]{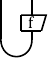}
&&
\includegraphics[valign=c]{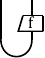}
~~=~~
\includegraphics[valign=c]{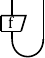}
\end{align*}
For any Hilbert space $V$, we note the following expression for the trace of a linear map $f \in \End(V)$, and in particular for the dimension $\dim(V) = \Tr(\id_V)$:
\begin{align}\label{eq:tracedef}
\Tr(f)~~=~~\includegraphics[valign=c]{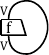}
&&
\dim(V)~~=~~ \includegraphics[valign=c]{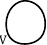}
\end{align}
\subsubsection{The shaded calculus}

We now extend to the graphical calculus of shaded tensor network diagrams.  Formally, this is the graphical calculus of the semisimple rigid $C^*$-2-category $\TwoHilb$ of finite-dimensional 2-Hilbert spaces. In this work, however, we will avoid category theory altogether and introduce the shaded calculus simply as an indexed version of the unshaded calculus. 

\paragraph{Wires and boxes.}
In the shaded calculus, the regions in a tensor network diagram can be shaded. These shaded regions correspond to finite index sets. Wires now correspond to \emph{families} of Hilbert spaces, indexed by the parameters of the regions to the left and right of the wire. We call these indexed families of Hilbert spaces \emph{1-morphisms}. For example, let $[m]$ and $[n]$ be two index sets. From now on we shade regions corresponding to the set $[m]$ with wavy lines and regions corresponding to the set $[n]$ with polka dots. Consider the following wire:
\begin{align*}
\includegraphics{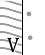}
\end{align*}
We see that an $[m]$-region is on the left of the $V$-wire and an $[n]$-region on the right. Hence the wire $V$ is an $[m] \times [n]$-indexed family of Hilbert spaces; these can be arranged in an $[m] \times [n]$ matrix $(V_{ij})_{(i,j) \in [m] \times [n]}$.  We write $V: [m] \to{} [n]$ to indicate that the $[m]$-region is on the left and the $[n]$-region on the right of the wire $V$.

Boxes now correspond to \emph{families} of linear maps, which are indexed by the parameters of the adjoining regions. This is best explained by example. Here are two boxes:
\begin{align}
\label{eq:twoboxes}
\includegraphics{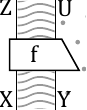}
&&
\includegraphics{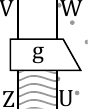}
\end{align}
On the left, let us look at the wires first; we see that $X = (X_m)_{m \in [m]}$ and $Z = (Z_m)_{z \in [m]}$ are $[m]$-indexed families of Hilbert spaces, and $U = (U_{mn})_{(m,n) \in [m] \times [n]}$ and $Y=(Y_{mn})_{(m,n) \in [m] \times [n]}$ are $[m]\times[n]$-indexed families of Hilbert spaces. The morphism $f: X \otimes Y \to Z \otimes U$ is a family of linear maps $\{f_{ijk}\}_{(i,j,k) \in [m] \times  [n] \times [m]}$, where the indices correspond to the regions to the bottom, the right and the top of the box, in that order. With this indexing, we see that the map $f_{ijk}$ has type $X_{i} \otimes Y_{ij} \to Z_{k} \otimes U_{kj}$. 

On the right, we see similarly that $V$ is a single Hilbert space (there being no adjacent shaded regions) and $W = (W_n)_{n \in [n]}$ is an $[n]$-indexed family of Hilbert spaces. Then $g: Z \otimes U \to V \otimes W$ is an $[m] \times [n]$-indexed family of linear maps $\{g_{ij}\}_{(i,j) \in [m] \times [n]}$, where $g_{ij}: Z_i \otimes U_{ij} \to V \otimes W_j$.

We call an indexed family of linear maps a \emph{2-morphism}.

\paragraph{Composition.} 
We can compose boxes to create new 2-morphisms. We refer to a general planar diagram of wires and boxes as a \emph{2-morphism diagram}. The family of linear maps represented by a 2-morphism diagram is indexed by the parameters of the open regions, while the closed regions are summed over. Composition is given by vertical juxtaposition, as in the unshaded case. Again, an example is probably the best explanation.  Looking at the 2-morphisms~\eqref{eq:twoboxes}, we see that the output 1-morphism of $f$ is the same as the input 1-morphism of $g$, so we can form the composite:
\begin{align*}
\includegraphics{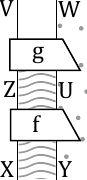}
\end{align*}
Let us compute this composite. Looking at the open regions in the diagram, we see that $g \circ f$ is an $[m] \times [n]$-indexed family of linear maps, where the first index corresponds to the bottom open shaded region and the second to the open shaded region on the right. There is one closed region, whose index is to be summed over. We therefore see that these linear maps are defined as follows:
$$(g \circ f)_{ij} = \sum_{k =1}^{m} g_{kj} \circ f_{ijk}$$

\paragraph{Identity wires.} Recall that, in the unshaded calculus, the wire corresponding to the one-dimensional Hilbert space is invisible. In the shaded calculus, for every index set $[m]$, there is a canonical \emph{identity} 1-morphism $\id_{[m]}: [m] \to{} [m]$, specified by the following $[m] \times [m]$ matrix of Hilbert spaces:
$$(\id_{[m]})_{ij} = 
\begin{cases}
{\bf 0} \quad & i \neq j
\\
\mathbb{C} \quad & i=j
\end{cases}$$ 
This wire is invisible in the shaded calculus. We draw boxes $\alpha: \id_{[m]} \to \id_{[m]}$ as discs surrounded by a dotted line:
\begin{align}\label{eq:floatingdiscs}
\includegraphics{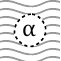}
\end{align}
These discs may be moved around freely inside their containing region. For each value of the index set associated to their region, they specify a scalar.

\paragraph{Duality.}
We now extend duality to the shaded setting. Let $V= (V_{ij})_{(i,j) \in [m] \times [n]}$ be a family of Hilbert spaces indexed by $[m]$ on the left and $[n]$ on the right. Then we define the \emph{dual} $V^*$ to be the family $V^* = (V_{ji})_{(i,j) \in [n] \times [m]}$ indexed by $[n]$ on the left and $[m]$ on the right. In the diagrammatic calculus we draw a wire $V$ with an upwards-facing arrow and its dual $V^*$ with a downwards facing arrow. We now define cup and cap morphisms generalising~\eqref{eq:cupscapsHilb}, depicted as follows: 
\begin{align*}
\includegraphics{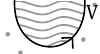}
&&
\includegraphics{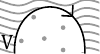}
&&
\includegraphics{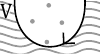}
&&
\includegraphics{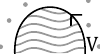}
\\
\eta_V: \id_{[n]} \to V^* \otimes V
&&
\epsilon_V:  V \otimes V^* \to \id_{[m]}
&&
\epsilon_V^{\dagger}: \id_{[m]} \to V \otimes V^* 
&&
\eta_V^{\dagger}: V^* \otimes V \to \id_{[n]}
\end{align*}
Let us first define $\eta$. Drawing in the invisible input wire $\id_{[n]}$, we see that $\eta= (\eta_{ijk})_{(i,j,k) \in [n] \times [m] \times [n]}$, where $\eta_{ijk}: (\id_{[n]})_{ik} \to V_{ji} \otimes V_{jk}$. Clearly if $i \neq k$ then $\eta_{ijk}$ must be the zero morphism, since $(\id_{[n]})_{ik}$ is the zero Hilbert space. If $i = k$ then we define $\eta_{iji} = \ket{\eta_{V_{ji}}}: \mathbb{C} \to V_{ji} \otimes V_{ji}$, recalling the definition of $\ket{\eta_{V_{ji}}}$ from~\eqref{eq:etadef}.

We define $\epsilon$ similarly. Drawing in the invisible output wire $\id_{[m]}$, we see that $\epsilon = (\epsilon_{ijk})_{(i,j,k) \in [m] \times [n] \times [m]}$, where $\epsilon_{ijk}: V_{ij} \otimes V_{kj} \to (\id_{[m]})_{ik}$. Again, if $i \neq k$ then $\epsilon_{ijk}$ must be the zero morphism, since $(\id_{[m]})_{ik}$ is the zero Hilbert space. If $i=k$ then we define $\epsilon_{iji} = \bra{\eta_{V_{ij}}}: V_{ij} \otimes V_{ij} \to \mathbb{C}$.

The 2-morphisms $\epsilon^{\dagger}$ and $\eta^{\dagger}$ are defined similarly. Alternatively, they can be defined as the daggers of the 2-morphisms $\eta$ and $\epsilon$; the dagger will be defined in the next paragraph. We call $\eta$ and $\epsilon$ the \emph{right} cup and cap (since the arrow goes from left to right) and $\epsilon^{\dagger}$ and $\eta^{\dagger}$ the \emph{left} cup and cap. It is straightforward to check that the 2-morphisms $\eta, \epsilon, \eta^{\dagger}, \epsilon^{\dagger}$ obey the following snake equations:
\begin{align}\label{eq:shadedsnake}
\includegraphics[valign=c]{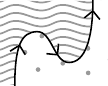}
~~=~~
\includegraphics[valign=c]{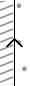}
~~=~~
\includegraphics[valign=c]{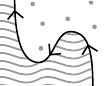}
&&
\includegraphics[valign=c]{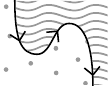}
~~=~~
\includegraphics[valign=c]{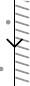}
~~=~~
\includegraphics[valign=c]{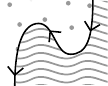}
\end{align}
We are therefore able to deform wires topologically as desired. There is a swap map in this calculus (see~\cite{Reutter2019,Heunen2019} for more details), but we will not use it in this paper.

\paragraph{Dagger, transpose and conjugate.} The notions of dagger, transposition and complex conjugation extend straightforwardly to the shaded calculus. 

\begin{itemize}
\item Let $f: X_1 \otimes \cdots \otimes X_m \to Y_1 \otimes \cdots \otimes Y_n$ be a box. This box represents an indexed family of linear maps. The \emph{dagger} of $f$ is the 2-morphism $f^{\dagger}: Y_1 \otimes \cdots \otimes Y_n \to X_1 \otimes \cdots \otimes X_m$ specified by taking the dagger of each linear map in the family for every choice of the indices. The dagger is represented by reflecting the diagram containing the box in a horizontal axis, so that the offset corner is at the top right, while preserving the orientation of the arrows on the wires.

Again, this is best illuminated by an example. Recall the box $g: Z \otimes U \to V \otimes W$ from~\eqref{eq:twoboxes}. The box $g^{\dagger}: V \otimes W \to Z \otimes U$ is depicted as follows:
\begin{align*}
\includegraphics{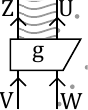}
\end{align*}
In our notation from before, $Z = (Z_m)_{m \in [m]}$, $W = (W_n)_{n \in [n]}$, and $U = (U_{mn})_{(m,n) \in [m] \times [n]}$, and $g = (g_{ij})_{(i,j) \in [m] \times [n]}$, where $g_{ij}: Z_i \otimes U_{ij} \to V \otimes W_j$. Now $g^{\dagger}$ is also an $[m] \times [n]$-indexed family, where now the region $[m]$ is above the box. So $g^{\dagger} = ((g^{\dagger})_{ij})_{(i,j) \in [m] \times [n]}$. Then $g^{\dagger}$ is defined by setting $(g^{\dagger})_{ij}:= (g_{ij})^{\dagger}$.

We extend the dagger to general 2-morphism diagrams by flipping the whole diagram in a horizontal axis, while preserving the orientation of any arrows. This is consistent, in the sense that the resulting family of linear maps can be obtained either by computing the composition associated to the flipped diagram, or equivalently  by taking the dagger of each of the linear maps associated to the original diagram.

\item Let $f: X_1 \otimes \cdots \otimes X_m \to Y_1 \otimes \cdots \otimes  Y_n$ be a box. The \emph{transpose} of the $f$ is a box $f^T: Y_n^{*} \otimes \cdots \otimes Y_1^{*} \to X_{m}^* \otimes \cdots \otimes X_1^{*}$, represented by a $\pi$-rotation of the box $f$, and defined using the duality as follows:
\begin{align}\label{eq:transposedef}
\includegraphics[valign=c]{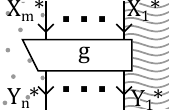}
:=
\includegraphics[scale=0.9,valign=c]{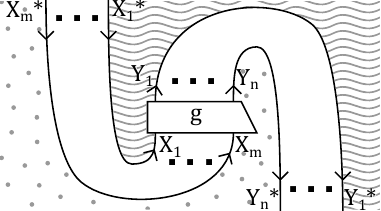}
=
\includegraphics[scale=0.9,valign=c]{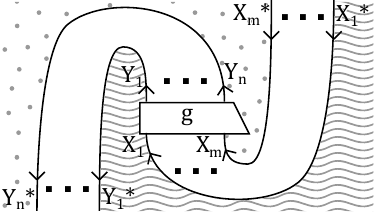}
\end{align}
This transpose may equivalently be defined as the componentwise transpose; that is, for each value of the indices, one takes the transpose of the corresponding linear map. The equality between the left and right transpose in~\eqref{eq:transposedef} therefore follows immediately from the equality of the left and right transpose in the unshaded calculus. 

\item Let $f: X_1 \otimes \cdots \otimes X_m \to Y_1 \otimes \cdots \otimes Y_n$ be a box. The \emph{complex conjugate} of $f$ is a box $f^*: X_m^{*} \otimes \cdots \otimes X_1^* \to Y_n^* \otimes \cdots \otimes Y_1^*$, represented by flipping the box $f$ in a vertical axis and reversing the orientations of the wires. It is defined as the dagger of the transpose, or equivalently the transpose of the dagger:
\begin{align*}
\includegraphics[valign=c]{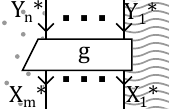}
~~:=~~
\includegraphics[valign=c]{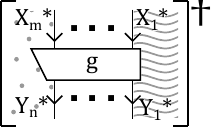}
~~=~~
\includegraphics[valign=c]{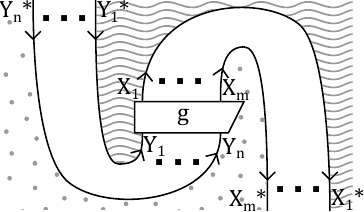}
\end{align*}
This may equivalently be defined as the componentwise complex conjugate, defined by taking the complex conjugate of the linear map corresponding to each value of the indices. 
\end{itemize}
With these definitions, the boxes slide around the wires as in the unshaded calculus:
\begin{align}\nonumber
\includegraphics[valign=c]{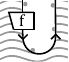}
~~=~~
\includegraphics[valign=c]{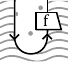}
&&
\includegraphics[valign=c]{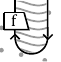}
~~=~~
\includegraphics[valign=c]{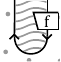}
&&
\includegraphics[valign=c]{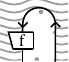}
~~=~~
\includegraphics[valign=c]{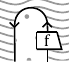}
&&
\includegraphics[valign=c]{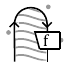}
~~=~~
\includegraphics[valign=c]{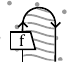}
\\[5pt]
\label{eq:shadedslide}
\includegraphics[valign=c]{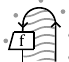}
~~=~~
\includegraphics[valign=c]{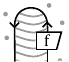}
&&
\includegraphics[valign=c]{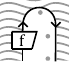}
~~=~~
\includegraphics[valign=c]{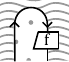}
&&
\includegraphics[valign=c]{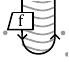}
~~=~~
\includegraphics[valign=c]{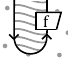}
&&
\includegraphics[valign=c]{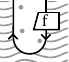}
~~=~~
\includegraphics[valign=c]{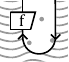}
\end{align}
In what follows we will use~\eqref{eq:shadedslide} together with the snake equations to deform and manipulate diagrams topologically. We will say that equalities arrived at in this way are `by isotopy of the diagram'.

\paragraph{Linear structure and endomorphism $C^*$-algebras.}

Consider the 2-morphism $f$ defined in~\eqref{eq:twoboxes}, with type $X \otimes Y \to Z \otimes U$. We observed above that it corresponds to a family of linear maps $\{f_{ijk}\}_{(i,j,k) \in [m] \times [n] \times [m]}$, where $f_{ijk}: X_i \otimes Y_{ij} \to Z_{k} \otimes U_{kj}$. 

Let $\Hom(X \otimes Y, Y \otimes Z)$ be the set of all 2-morphisms $X \otimes Y \to Y \otimes Z$. Such a 2-morphism is specified by a choice of linear map for each value of the indices $(i,j,k) \in [m] \times [n] \times [m]$. We therefore observe that 
\begin{align}\label{eq:homdecomp}
\Hom(X \otimes Y, Y \otimes Z) = \bigoplus_{(i,j,k) \in [m] \times [n] \times [m]} \Hom(X_{i} \otimes Y_{ij}, Z_{k} \otimes U_{kj}).
\end{align} 
The set $\Hom(X \otimes Y, Y \otimes Z)$ thereby acquires the structure of a Banach space; scalar multiplication and summation are defined componentwise, and the norm is the sum of the norms for each of the factors. This observation generalises in the obvious way to $\Hom(X_1 \otimes \cdots \otimes X_m, Y_1 \otimes \cdots \otimes Y_n)$, where $X_1 \otimes \cdots \otimes X_m$ and $Y_1 \otimes \cdots \otimes Y_n$ are any choice of input and output wires. In particular, we can consider sums and scalar multiples of  2-morphisms, which we will indicate by writing the diagrams as terms in algebraic expressions. 

The dagger $\dagger: \Hom(X_1 \otimes \cdots \otimes X_m, Y_1 \otimes \cdots \otimes Y_n) \to \Hom(Y_1 \otimes \cdots \otimes Y_n,X_1 \otimes \cdots \otimes X_m)$ which was defined above is just the componentwise dagger with respect to the decomposition~\eqref{eq:homdecomp}. In particular, it satisfies $||f^{\dagger} \circ f|| = ||f||^2$, and it follows that the endomorphism algebra $\End(X_1 \otimes \cdots \otimes X_m):= \Hom(X_1 \otimes \cdots \otimes X_m, X_1 \otimes \cdots \otimes X_m)$ is a finite-dimensional $C^*$-algebra, where the involution is given by the dagger. 

We note two facts about these endomorphism $C^*$-algebras:
\begin{itemize}
\item Let $f: X_1 \otimes \cdots \otimes X_m \to Y_1 \otimes \cdots \otimes Y_n$ be any 2-morphism. Then $f^{\dagger} \circ f$ is a positive element of the $C^*$-algebra $\End(X_1 \otimes \cdots \otimes X_m)$.
\item For any index set $[m]$, $\End(\id_{[m]})$ is a commutative $C^*$-algebra. (This fact is clear from the graphical calculus; since the endomorphisms are represented by floating discs~\eqref{eq:floatingdiscs} we can simply move one round the other.)
\end{itemize}

\paragraph{Isometries, unitaries, projection and partial isometries.} These notions generalise straightforwardly to 2-morphisms. Let $f:  X_1 \otimes \cdots \otimes X_m \to Y_1 \otimes \cdots \otimes Y_n$ be a 2-morphism. We say that $f$ is:
\begin{itemize}
\item An \emph{isometry} if $f^{\dagger} \circ f = \mathbbm{1}_{X_1 \otimes \cdots \otimes X_m}$.
\item A \emph{coisometry} if $f \circ f^{\dagger} = \mathbbm{1}_{Y_1 \otimes \cdots \otimes Y_n}$. 
\item A \emph{unitary} if it is both an isometry and a coisometry.
\item A \emph{partial isometry} if $f^{\dagger} \circ f \in \End(X_1 \otimes \cdots \otimes X_m)$ is a projection (equivalently, if $f \circ f^{\dagger} \in \End(Y_1 \otimes \cdots \otimes Y_n)$ is a projection).
\end{itemize}

\paragraph{Left dimension.}
For any wire $X: [m] \to{} [n]$, we define the \emph{left dimension} $d_{X} \in \End(\id_{[n]})$ as follows:
\begin{align*}
\includegraphics[valign=c]{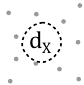}
~~:=~~
\includegraphics[valign=c]{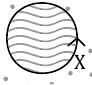}
\end{align*}
We observe that $d_X = \eta_{X}^{\dagger} \circ \eta_X$. In particular, by the first fact about endomorphism $C^*$-algebras noted above, it is a positive element of $\End(\id_{[n]})$. We write $n_X:= \sqrt{d_X} \in \End(\id_{[n]})$ for the positive square root of the left dimension. We assume throughout without loss of generality that $d_X$ and $n_X$ are invertible.

\paragraph{More general dualities.} Let $V: [m] \to{} [n]$ be a 1-morphism. Above we defined the canonical dual $V^*: [n] \to{} [m]$, together with cup and cap 2-morphisms $\eta_V: \id_{[n]} \to V^* \otimes V$ and $\epsilon_V: V \otimes V^* \to \id_{[m]}$ obeying the snake equations~\eqref{eq:shadedsnake}.

In fact, we can make a more general definition. We say that a 1-morphism $\overline{V^*}: [n] \to{} [m]$ is a \emph{dual} for $V$ if there exist cup and cap morphisms $\overline{\eta_V}: \id_{[n]} \to \overline{V^*} \otimes V$ and $\overline{\epsilon_V}: V \otimes \overline{V^*} \to \id_{[m]}$  obeying the snake equations:
\begin{align}\label{eq:gendual}
\includegraphics[valign=c]{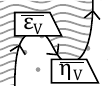}
~~=~~
\includegraphics[valign=c]{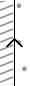}
~~=~~
\includegraphics[valign=c]{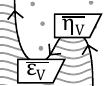}
&&
\includegraphics[valign=c]{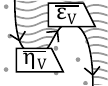}
~~=~~
\includegraphics[valign=c]{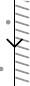}
~~=~~
\includegraphics[valign=c]{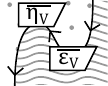}
\end{align}
We are particularly interested in duals which are \emph{standard}~\cite{Longo1997,Giorgetti2019}. Let $f \in \End(V)$ be some 1-morphism. We define the following elements $f_L \in \End(\id_{[m]})$ and $f_R \in \End(\id_{[n]})$:
\begin{align}\label{eq:standard}
\includegraphics{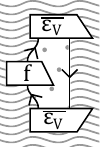}
&&
\includegraphics{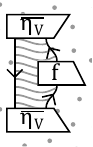}
\end{align}
Since $\End(\id_{[m]})$ and $\End(\id_{[n]})$ are commutative f.d. $C^*$-algebras, they possess a canonical trace which takes the central idempotents to $1$; we write these traces as $\Tr_{[m]}: \End(\id_{[m]}) \to \mathbb{C}$ and $\Tr_{[n]}: \End(\id_{[n]}) \to \mathbb{C}$.  We say that the duality is standard iff $\Tr_{[m]}(f_L) = \Tr_{[n]}(f_R)$ for all $f \in \End(V)$. In this case, we obtain a positive faithful trace on the $C^*$-algebra $\End(V)$.

The canonical dual we defined above is standard. In fact, standard duals are unique up to unitary equivalence; a dual $\overline{V^*}$ is standard precisely when there exists a unitary 2-morphism $U: V^* \to \overline{V^*}$ from the canonical dual such that:
\begin{align*}
\overline{\eta_V} = (U \otimes \mathbbm{1}_V) \circ \eta_V 
&&
\overline{\epsilon_V} = \epsilon_V \circ (\mathbbm{1}_V \otimes U^{\dagger})
\end{align*}

\subsection{Stinespring's theorem}

In quantum information theory, channels are identified with completely positive trace-preserving linear maps between $C^*$-algebras. In this paper we restrict ourselves to finite-dimensional (f.d.) $C^*$-algebras. We now give a brief summary of dilation theory in this setting. This is a special case of a more general theory which holds in an arbitrary rigid $C^*$-tensor category~\cite{Verdon2021,Chen2022}.

\paragraph{Splitting f.d. $C^*$-algebras.}
We will first show that every f.d. $C^*$-algebra can be \emph{split} as a \emph{pair of pants} algebra. It is well-known that every f.d. $C^*$-algebra is $*$-isomorphic to a multimatrix algebra $\bigoplus_{i} B(H_i)$, where $\{H_i\}$ are some finite-dimensional Hilbert spaces and the involution is the componentwise Hermitian adjoint.

We will first consider the case of a simple matrix algebra $B(H)$, and then generalise to an arbitrary multimatrix algebra. Recall the definition of the vector $\ket{\eta_H} \in H \otimes H$ from~\eqref{eq:etadef}. Consider the following linear isomorphism: 
\begin{align*}
\phi: B(H) &\to[\sim] H \otimes H
\\
M &\mapsto \sqrt{d}(M \otimes \mathbbm{1}) \ket{\eta_H}
\end{align*}
We will define a $*$-algebra structure on $H \otimes H$ so that $\phi$ is an isomorphism of $*$-algebras. The multiplication and unit of the algebra are defined as follows (where we use the unshaded graphical calculus for Hilbert spaces and linear maps from Section~\ref{sec:unshaded}):
\begin{align*}
\frac{1}{\sqrt{d}}\includegraphics[valign=c]{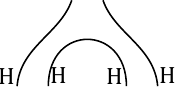}
&&
\sqrt{d}\includegraphics[valign=c]{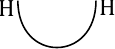}
\\
m: (H \otimes H) \otimes (H \otimes H) \to H \otimes H
&&
u: \mathbb{C} \to H \otimes H
\end{align*}
We now need a $*$-structure. For any state $\ket{\psi} \in H \otimes H$, its involution $\ket{\psi^*} \in H \otimes H$ is defined as follows:
\begin{align*}
\includegraphics[valign=c]{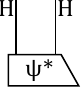}
~~:=~~
\includegraphics[valign=c]{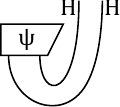}
\end{align*}
With these definitions it is very straightforward to show that $\phi$ is a unital $*$-isomorphism~\cite[Ex. 3.13]{Verdon2020b}. For obvious reasons this algebra structure on $H \otimes H$ is often called a \emph{pair of pants} algebra. 

Note that the adjoint of the unit is a linear map $u^{\dagger}: H \otimes H \to \mathbb{C}$; the composition $u^{\dagger} \circ \phi: B(H) \to \mathbb{C}$ is a trace, namely the \emph{special} trace $\overline{\Tr}:= d \, \Tr$, where $\Tr$ is the matrix trace. More generally, we define the special trace on a multimatrix algebra to be the sum of the special traces on each of the factors. We will use the special trace from now on, since it means we can directly apply results from~\cite{Verdon2021}, and it does not make any difference to the theory apart from a few scalar factors.

We now generalise to multimatrix algebras. Let $A = \bigoplus_{i=1}^m B(H_i)$ be a multimatrix algebra, where $\{H_i\}$ are some f.d. Hilbert spaces. Recall that in the shaded calculus we represent $[m]$ by wavy lines. We define a wire $H$ with the following type:
\begin{align*}
\includegraphics[valign=c]{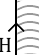}
\end{align*}
Here $H$ is an $[m]$-indexed family of Hilbert spaces, namely the Hilbert spaces $(H_i)_{i \in [m]}$. Now consider the 1-morphism $H \otimes H^*$. This is an $[m]$-indexed family $(H_i \otimes H_i)_{i \in [m]}$, where each choice of index specifies a factor of the multimatrix algebra. We define the following structure of a $*$-algebra (again called a pair of pants algebra) on $H \otimes H^*$:
\begin{align*}
\includegraphics[valign=c]{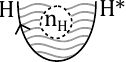}
&&
\includegraphics[valign=c]{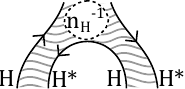}
&&
\includegraphics[valign=c]{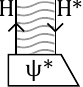}
~~=~~
\includegraphics[valign=c]{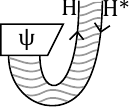}
\end{align*}
There is a $*$-isomorphism $\phi: A \cong \bigoplus_{i=1}^m B(H_i) \to H \otimes H^*$. Indeed, observe that $\End(H) = \bigoplus_{i=1}^m B(H_i)$ (where the index set for $H$ corresponds to the choice of factor). We then define:
\begin{align*}
\phi(M):= \includegraphics[valign=c]{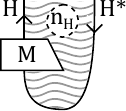}
\end{align*}
We say that either the pair of pants algebra $H \otimes H^*$, or the 1-morphism $H: 1 \to{} [m]$ itself, is a \emph{splitting} of the algebra $A$. The trace $u^{\dagger} \circ \phi$ is the special trace.

\begin{example}[Commutative $C^*$-algebras.]
We use the notation $[n]$ for the commutative $C^*$-algebra $[n]:= \bigoplus_{i \in [n]} \mathbb{C}$. (We are aware that this is the same as the notation for denoting index sets, but the context should adequately distinguish between the two uses.) Clearly this has a splitting $[n] = X \otimes X^*$, where $X = (\mathbb{C})_{i \in [n]}$.
\end{example}

\begin{example}[Splitting tensor products.]
In what follows we will often want to split the $C^*$-algebra $A \otimes B(H_1) \otimes B(H_2)$, where $A$ is some f.d. $C^*$-algebra. We will always use the splitting $(H_2 \otimes H_1 \otimes X) \otimes (X^* \otimes H_1 \otimes H_2)$, where $X: 1 \to{}[m]$ is some splitting of $A$.
\end{example}

\paragraph{Dilating channels.}
A \emph{channel} is a completely positive trace-preserving linear map. Let $A \cong \bigoplus_{i =1}^m B(H_i)$ and $B \cong \bigoplus_{j=1}^n B(K_j)$ be two multimatrix algebras. Let $H: 1 \to{}[m]$ and $K: [1] \to{} [n]$ be splittings of these algebras. By definition of the direct sum, linear maps $A \to B$ correspond precisely to 2-morphisms $H \otimes H^* \to K \otimes K^*$, which specify a linear map $B(H_i) \cong H_i \otimes H_i \to K_j \otimes K_j \cong B(K_j)$ for every choice of indices $(i,j) \in [m] \times [n]$.

We want to know when a 2-morphism $f: H \otimes H^* \to K \otimes K^*$ is completely positive and trace preserving as a linear map $A \to B$ (from now on we will simply apply these predicates to the 2-morphism). This is answered by the following theorem. 
\begin{theorem}[{Stinespring's theorem. Follows straightforwardly from~\cite[Cor. 4.13]{Selinger2007}, also proved explicitly in~\cite[Thm. 4.9]{Verdon2021}}]\label{thm:stinespring}
A 2-morphism $f: H \otimes H^* \to K \otimes K^*$ is completely positive precisely when there exists an \emph{environment} 1-morphism $E: [n] \to{} [m]$ and a \emph{dilation} 2-morphism $\tau: H \to K \otimes E$ such that the following equation is obeyed:
\begin{align*}
\includegraphics[valign=c]{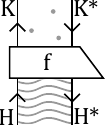}
~~=~~
\includegraphics[valign=c]{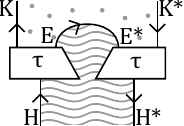}
\end{align*}
The 2-morphism $f$ is additionally trace-preserving (for the special trace) precisely when the following 2-morphism is an isometry:
\begin{align}\label{eq:tracepresisom}
\includegraphics[valign=c]{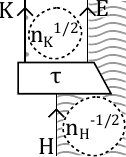}
\end{align} 

The dilation of a completely positive 2-morphism $f$ is unique up to partial isometry on the environment. That is, for any dilations $\tau_1: H \to K \otimes E_1$, $\tau_2: H \to K \otimes E_2$ of $f$, there exists a partial isometry $\alpha: E_1 \to E_2$ such that the following equations hold:
\begin{align*}
(\id_K \otimes \alpha) \circ \tau_1 = \tau_2
&&
(\id_K \otimes \alpha^{\dagger}) \circ \tau_2 = \tau_1
\end{align*} 
\end{theorem}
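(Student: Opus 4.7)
The plan is to reduce the theorem to the classical Kraus/Stinespring decomposition for CP maps between matrix algebras, applied componentwise, and then to package the resulting Kraus operators into a single 2-morphism of the shaded calculus.

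First I would unpack the data of a 2-morphism $f\colon H \otimes H^* \to K \otimes K^*$ in terms of linear maps between matrix algebras. Using the decomposition~\eqref{eq:homdecomp}, $f$ splits into components $f_{ij}\colon H_i \otimes H_i \to K_j \otimes K_j$ for $(i,j) \in [m]\times[n]$, and under the pair-of-pants $*$-isomorphisms $\phi$ on either side these assemble into the block structure of the associated linear map $A \to B$ (up to the $\sqrt{d}$ factors in $\phi$). Complete positivity of the assembled linear map is then equivalent to complete positivity of each component $f_{ij}$ as a map of simple matrix algebras.

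For the $(\Leftarrow)$ direction of the first equivalence, any 2-morphism of the displayed dilation form is manifestly componentwise CP: unwinding the shaded diagram, each component of $f$ is a sum of maps of the form $a \mapsto V a V^\dagger$. For the $(\Rightarrow)$ direction I would apply the Kraus theorem to each component $f_{ij}$ to obtain a Hilbert space $E_{ji}$ and operators $\{V_{ij}^{(k)}\colon H_i \to K_j\}_k$ implementing $f_{ij}$. These assemble into a 1-morphism $E\colon[n]\to[m]$ with components $E_{ji}$, and a 2-morphism $\tau\colon H \to K \otimes E$ whose $(i,j)$-component is $\sum_k V_{ij}^{(k)} \otimes \ket{k}$; one then checks by reading the shaded diagram component by component that $\tau$ satisfies the dilation equation. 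The trace-preservation statement is a componentwise computation: the isometry condition on~\eqref{eq:tracepresisom} unpacks on component $i$ to the Kraus completeness relation $\sum_{j,k} (V_{ij}^{(k)})^\dagger V_{ij}^{(k)} = \mathbbm{1}_{H_i}$, which is exactly preservation of the special trace. Uniqueness up to partial isometry follows from the classical uniqueness of Stinespring dilations in each matrix-algebra component: the resulting componentwise partial isometries $(E_1)_{ji} \to (E_2)_{ji}$ assemble into a single 2-morphism $\alpha\colon E_1 \to E_2$ with the required intertwining properties.

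The main obstacle I anticipate is bookkeeping: verifying that the shaded-diagram dilation and isometry equations really do unpack componentwise to the expected matrix-algebraic statements, given the $\sqrt{d}$ scalars built into the splittings~$\phi$ and the distinction between the special trace $\overline{\operatorname{Tr}} = d \operatorname{Tr}$ and the matrix trace. A related subtlety is that the dilation is only characterised up to partial isometry rather than unitary, reflecting the fact that we do not impose minimality on $\tau$; extracting the partial isometry $\alpha$ from the classical minimal-dilation uniqueness and checking that it packages coherently as a 2-morphism of the shaded calculus (rather than a mere family of operators) requires careful compatibility with the dualities $\eta_V,\epsilon_V$ on $E_1$ and $E_2$.
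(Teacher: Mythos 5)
Your proposal is correct in outline, but note that the paper does not actually prove this statement: it is imported wholesale from~\cite[Thm.~4.9]{Verdon2021}, where it is established as a \emph{covariant} Stinespring theorem in an arbitrary rigid $C^*$-tensor category (the case relevant here being $\TwoHilb \cong \Mod(\Hilb)$). Your route is genuinely different and more elementary: since this paper deliberately introduces the shaded calculus as nothing more than an indexed version of the unshaded calculus, a 2-morphism $f: H \otimes H^* \to K \otimes K^*$ really is just a matrix of linear maps $B(H_i) \to B(K_j)$, complete positivity really is componentwise (the block inclusions and compressions are CP, so both directions of that equivalence go through), and the classical Kraus/Choi theorem plus the standard uniqueness-up-to-partial-isometry of Stinespring dilations then assemble exactly as you describe. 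What the componentwise argument buys is self-containedness for the finite-dimensional multimatrix setting of this paper; what it gives up is the covariance under a compact quantum group action that the cited general theorem provides and that the broader programme of~\cite{Verdon2020b,Verdon2021} relies on. One point to tighten: preservation of the \emph{special} trace $\overline{\Tr} = d\,\Tr$ is not literally the plain Kraus completeness relation $\sum_{j,k}(V^{(k)}_{ij})^\dagger V^{(k)}_{ij} = \mathbbm{1}_{H_i}$ when $\dim H_i \neq \dim K_j$; it acquires ratios of the factor dimensions, and these are precisely what the discs $n_X^{1/2}$ and $n_Y^{-1/2}$ decorating~\eqref{eq:tracepresisom} encode (compare the normalisation computed in Example~\ref{ex:dilstates}). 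You flag this bookkeeping as your anticipated obstacle, which is the right instinct, but the displayed completeness relation as written is only the $\dim H_i = \dim K_j$ case.
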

\begin{remark}[Minimal dilations]\label{rem:mindil}
By uniqueness of the dilation up to partial isometry, every CP morphism $f: H \otimes H^* \to K \otimes K^*$ has a \emph{minimal dilation}, unique up to a \emph{unitary} on the environment, such that the following element of the $C^*$-algebra $\End(E)$ is invertible:
\begin{align}\label{eq:mindil2morph}
\includegraphics[valign=c]{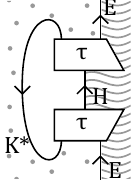}
\end{align}
We observe that this is a positive element of $\End(E)$, since it is of the form $\tilde{\tau} \circ \tilde{\tau}^{\dagger}$ for a 2-morphism $\tilde{\tau}: K^* \otimes H \to E$, where $\tilde{\tau}$ is $\tau$ with the top left leg bent down. We define $\lambda \in \End(E)$ to be the positive square root of this positive element (we will only use this notation much later on, in the proof of Theorem~\ref{thm:genpure}).
\end{remark}
\begin{remark}[Kraus maps]
The reader unaccustomed to the diagrammatic calculus might find it helpful to relate this to the description of a completely positive map in terms of Kraus operators. Let $A \cong \bigoplus_{i =1}^m B(H_i)$ and $B \cong \bigoplus_{j=1}^n B(K_j)$ be two multimatrix algebras. A completely positive map $f: A \to B$ corresponds precisely to a set of completely positive maps $f_{ij}: B(H_i) \to B(K_j)$, one for each pair of factors of $A,B$.

Let $\tau: H \to K \otimes E$ be a dilation of $f$. For each choice of indices $(i,j) \in [n] \times [m]$, let $\{\ket{v_{ijk}}\}_{k}$ be an orthonormal basis of $E_{ij}$. Then the Kraus maps of $f_{ij}: B(H_j) \to B(K_i)$ associated to this dilation and this choice of basis for $E_{ij}$ are precisely the morphisms
$$
M_{ijk} = (\mathbbm{1} \otimes \bra{v_{ijk}}) \circ \tau_{ij} : H_j \to K_i.
$$
\end{remark}

\begin{example}[Dilating states.]\label{ex:dilstates}
Channels $W: \mathbb{C} \to B(H_1) \otimes B(H_2)$ precisely correspond to states (density matrices) $\rho_W \in B(H_2 \otimes H_1) \cong B(H_1) \otimes B(H_2)$. We observe that $\mathbb{C}$ and $B(H_2 \otimes H_1)$ split as pairs of pants $\mathbb{C} \otimes \mathbb{C}$ and $(H_2 \otimes H_1) \otimes (H_1 \otimes H_2)$ respectively.

Suppose that the state is pure, i.e. $\rho_W = \ket{w} \bra{w}$ for some state $\ket{w} \in H_2 \otimes H_1$. Then the minimal dilation of $W$ has environment $E \cong \mathbb{C}$, and dilating 2-morphism $\tau = a \ket{w}: \mathbb{C} \to H_2 \otimes H_1$ for some normalising constant $a \in \mathbb{R}$. 

Let us work out the normalising constant $a$. We need~\eqref{eq:tracepresisom} to be an isometry. Since the index sets are singletons, the discs $n_H^{1/2}$ and $n_{\mathbb{C}}^{-1/2}$ are just scalars; we have $n_H^{1/2} = \dim(H_2 \otimes H_1)^{1/4} = \dim(H_2)^{1/4} \dim(H_1)^{1/4}$ and $n_{\mathbb{C}}^{-1/2}= 1$. That~\eqref{eq:tracepresisom} should be an isometry is then precisely to say that 
$$
\sqrt{\dim(H_2)\dim(H_1)} a^2 \braket{\psi | \psi} = 1
$$
which implies that $a = (\dim(H_2)\dim(H_1))^{-1/4}$.  We observe in particular that the canonical maximally entangled state of $H \otimes H$ has minimal dilation $\frac{1}{\dim(H)} \ket{\eta_H}$.
\end{example}

\section{Entanglement-invertible channels}

\subsection{Definition}
For convenience we restate the definition of entanglement-reversible and entanglement-invertible channels from the introduction.

\begin{definition}
Let $H_1,H_2$ be two Hilbert spaces, let $B(H_1)$ and $B(H_2)$ be the $C^*$-algebras of operators on these spaces and let $\sigma: B(H_1) \otimes B(H_2) \to B(H_2) \otimes B(H_1)$ be the swap channel. Let $W: \mathbb{C} \to B(H_1) \otimes B(H_2)$ be any channel (i.e. any state of $B(H_1) \otimes B(H_2)$). 

Let $M: A \otimes B(H_1) \to B$ be a channel. We say that $M$ is \emph{entanglement-reversible} with respect to $W$ if there exists a channel $N: B \otimes B(H_2) \to A$ satisfying the left equation of~\eqref{eq:entreventinv}. (The diagrams are read from bottom to top.) In this case we say that $N$ is an \emph{entanglement-left inverse of $M$ w.r.t. $W$}. If the right equation of~\eqref{eq:entreventinv} is additionally satisfied we say that $M$ is \emph{entanglement-invertible} with respect to $W$, and that $N$ is an \emph{entanglement-inverse}.
\begin{align}\nonumber
\includegraphics[valign=c]{pictures/entinvchans/entinvgeneral11.pdf}
~~=~~
\includegraphics[valign=c]{pictures/entinvchans/entinvgeneral12.pdf}
&&
\includegraphics[valign=c]{pictures/entinvchans/entinvgeneral21.pdf}
~~=~~
\includegraphics[valign=c]{pictures/entinvchans/entinvgeneral22.pdf}
\\\label{eq:entreventinv}
N \circ (M \otimes \id_{B(H_2)}) \circ  (\id_A \otimes W)
= 
\id_A
&&
M \circ (N \otimes \id_{B(H_1)}) \circ (\id_B \otimes \sigma) \circ (\id_B \otimes W)
=
\id_B
\end{align}
\end{definition}

\ignore{
\begin{definition}
Let $A,B$ be f.d. $C^*$-algebras, let $H_1, H_2$ be some Hilbert spaces, and let $W: \mathbb{C} \to B(H_1) \otimes B(H_2)$ be some channel. 

Fix a channel $M: A \otimes B(H_1) \to B$. From our perspective $H_1$ is an auxiliary Hilbert space used to perform a channel $A \to B$, and we will therefore write $(M,H_1): A \to B$ to refer to the channel $M$. 

Consider the equations~\eqref{eq:entreventinv} (the diagrams are read from bottom to top). We say that $(M,H_1): A \to B$ is \emph{entanglement-reversible with respect to (w.r.t.) $W$} when there is some $(N,H_2): B \to A$ satisfying the first equation of~\eqref{eq:entreventinv}, in which case we call $(N,H_2)$ an \emph{entanglement-left inverse of $(M,H_1)$ w.r.t. $W$}. We say that $(M,H_1): A \to B$ is furthermore \emph{entanglement-invertible  w.r.t. $W$} when the second equation of~\eqref{eq:entreventinv} is additionally satisfied, in which case we call $(N,H_2)$ the \emph{entanglement-inverse of $(M,H_1)$ w.r.t. $W$}.  
\begin{align}\nonumber
\includegraphics{pictures/entinvchans/entinvgeneral11.pdf}
~~=~~
\includegraphics{pictures/entinvchans/entinvgeneral12.pdf}
&&
\includegraphics{pictures/entinvchans/entinvgeneral21.pdf}
~~=~~
\includegraphics{pictures/entinvchans/entinvgeneral22.pdf}
\\\label{eq:entreventinv}
N \circ (M \otimes \mathbbm{1}_{B(H_2)}) \circ  (\mathbbm{1}_A \otimes W)
= 
\mathbbm{1}_A
&&
M \circ (N \otimes \mathbbm{1}_{B(H_1)}) \circ (\mathbbm{1}_B \otimes \sigma) \circ (\mathbbm{1}_B \otimes W)
=
\mathbbm{1}_B
\end{align}
(Here $\sigma: B(H_1) \otimes B(H_2) \to B(H_2) \otimes B(H_1)$ is the swap channel.)
\end{definition}
\noindent
We already saw in the introduction how this definition generalises ordinary reversibility and invertibility of channels. We also saw how teleportation and dense coding schemes are examples of entanglement-reversible channels.}
\ignore{
\begin{remark}
To recover quantum teleportation~\cite{}, let $A = B(K)$ for some Hilbert space $K$, and let $B = [n]$ for some $n \in \mathbb{N}$. The entanglement-reversibility equation is then precisely the teleportation equation ($W$ is the shared state, Alice will perform the channel $M$, and Bob the channel $N$).

To recover quantum dense coding~\cite{}, let $A = [n]$ for some $n \in \mathbb{N}$, and let $B = B(K)$ for some Hilbert space $K$. The entanglement-reversibility equation is then precisely the dense coding equation (again, $W$ is the shared state, Alice will perform the channel $M$, and Bob the channel $N$).
\end{remark}
}
\subsection{Quantum bijections}\label{sec:qbij}

We will begin by considering an important special case: channels which are entanglement-invertible w.r.t. the canonical maximally entangled pure state. 

\subsubsection{Characterisation in terms of minimal dilation}
We will first characterise these channels in terms of their minimal dilation. 
\begin{proposition}\label{prop:qbijmindil}
Let $M: A \otimes B(H) \to B$ be a channel, and let $\tau: H \otimes X \to Y \otimes E_{\tau}$ be a minimal dilation of $M$. Then $(M,H): A \to B$ is entanglement-invertible w.r.t. the canonical maximally entangled state of $H \otimes H$ precisely when the following 2-morphisms are unitary:
\begin{calign}\label{eq:qbijbiueqs}
\frac{1}{\dim(H)^{1/4}}~
\includegraphics[valign=c]{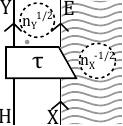}
&&
\frac{1}{\dim(H)^{1/4}}~
\includegraphics[valign=c]{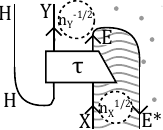}
\end{calign}
Moreover, the entanglement-inverse $(N,H): B \to A$ is uniquely determined, with the following minimal dilation:
\begin{calign}\label{eq:entinvmindil}
\includegraphics[valign=c]{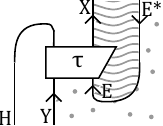}
\end{calign}
(Recall here that $n_X$ and $n_Y$ are defined as the positive square roots of the left dimensions of $X$ and $Y$ respectively.)
\end{proposition}
\begin{remark}
The 2-morphism on the left of~\eqref{eq:qbijbiueqs} is a normalised version of $\tau$; the 2-morphism on the right of~\eqref{eq:qbijbiueqs} is a normalised version of the partial transpose of $\tau$. Unitarity of a 2-morphism and its partial transpose is known as \emph{biunitarity} (see e.g.~\cite{Jones1999,Reutter2019}). A simple way to state Proposition~\ref{prop:qbijmindil} is therefore to say that a channel is entanglement-invertible w.r.t. the maximally entangled pure state precisely when its minimal dilation is biunitary (up to normalisation).
\end{remark}
\begin{proof} The following proof is partly due to D. Reutter. Let us suppose that $(M,H): A \to B$ is entanglement-invertible, and let $(N,H): B \to A$ be the entanglement-inverse. Let $\tau: H \otimes X \to Y \otimes E_{\tau}$ and $\sigma: H \otimes Y \to X \otimes E_{\sigma}$ be minimal dilations of $M$ and $N$ respectively. Then the entanglement-invertibility equations~\eqref{eq:entreventinv} reduce to the following equations for the dilations $\tau$ and $\sigma$: 
\begin{calign}\label{eq:dilentinveqs}
\frac{1}{\dim(H)}
\includegraphics[valign=c]{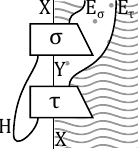}
~~=~~
\includegraphics[valign=c]{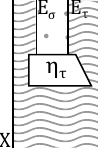}
&&
\frac{1}{\dim(H)}
\includegraphics[valign=c]{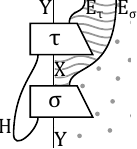}
~~=~~
\includegraphics[valign=c]{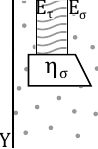}
\end{calign}
Here $\eta_{\tau}: \id_{[m]} \to E_{\sigma} \otimes E_{\tau}$ and $\eta_{\sigma}: \id_{[n]} \to E_{\tau} \otimes E_{\sigma}$ are some isometries. Let us explain how these equations were obtained. The first equation of~\eqref{eq:dilentinveqs} corresponds to the first equation of~\eqref{eq:entreventinv}. Indeed, the LHS of the first equation of~\eqref{eq:dilentinveqs} is simply a dilation of the LHS of the first equation of~\eqref{eq:entreventinv}, where we have used the fact (Example~\ref{ex:dilstates}) that the minimal dilation of the canonical maximally entangled state is $\frac{1}{\dim(H)}\ket{\eta_H}$. On the other hand, the RHS of the first equation of~\eqref{eq:dilentinveqs} is the general form for a dilation of the identity channel on $A$. Indeed, the minimal dilation of the identity channel on $A$ has trivial environment $\id_{[m]}$ and trivial 2-morphism $\id_X: X \to X$, and every other dilation is related to the minimal dilation by an isometry on the environment. The second equation of~\eqref{eq:dilentinveqs} corresponds to the second equation of~\eqref{eq:entreventinv} by the same argument. 

Now we observe the following equation:
\begin{calign}\label{eq:sigmatotaudag}
\includegraphics[valign=c]{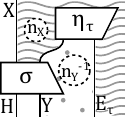}
~~=~~
\frac{1}{\dim(H)}
\includegraphics[valign=c]{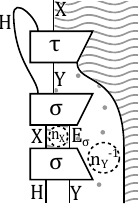}
~~=~~
\frac{1}{\sqrt{\dim(H)}}~
\includegraphics[valign=c]{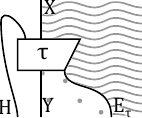}
\end{calign}
Here the first equality uses the dagger of the first equation of~\eqref{eq:dilentinveqs} (that is, reflect both the diagrams in that equation in a horizontal axis); the second equality uses the trace-preservation condition~\eqref{eq:tracepresisom} for $\sigma$. The following equation may be proven in the same way:
\begin{calign}\label{eq:tautosigmadag}
\includegraphics[valign=c]{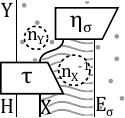}
~~=~~
\frac{1}{\sqrt{\dim(H)}}~
\includegraphics[valign=c]{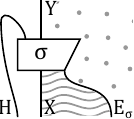}
\end{calign} 
We then obtain the following equation:
\begin{calign}
\includegraphics[valign=c]{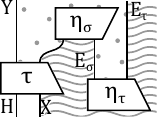}
~~=~~
\frac{1}{\sqrt{\dim(H)}}~
\includegraphics[valign=c]{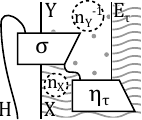}
~~=~~
\frac{1}{\dim(H)}~
\includegraphics[valign=c]{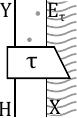}
\end{calign}
Here the first equality uses~\eqref{eq:tautosigmadag}, and the second equality uses the dagger of~\eqref{eq:sigmatotaudag}. A similar equation may be proven for $\sigma$. Since the dilations $\tau$ and $\sigma$ are minimal, we recall from Remark~\ref{rem:mindil} that $\tilde{\tau}$ and $\tilde{\sigma}$ are right-invertible, so this yields the snake equations~\eqref{eq:gendual}:
\begin{calign}\label{eq:mindilsnakes}
\includegraphics[valign=c]{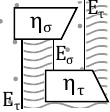}
~~=~~
\frac{1}{\dim(H)}~
\includegraphics[valign=c]{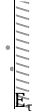}
&&
\includegraphics[valign=c]{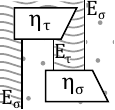}
~~=~~
\frac{1}{\dim(H)}~
\includegraphics[valign=c]{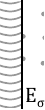}
\end{calign}
It follows that $E_{\sigma}$ is a dual for $E_{\tau}$. We now draw the wires with an upwards pointing arrow for $E_{\tau}$ and a downwards pointing arrow for $E_{\sigma}$. By~\eqref{eq:mindilsnakes}, the following morphisms are the cup and cap of a duality:
\begin{align}\label{eq:mindildualcupcap}
\sqrt{\dim(H)}~
\includegraphics[valign=c]{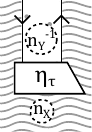}
&&
\sqrt{\dim(H)}~
\includegraphics[valign=c]{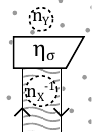}
\end{align} 
We will show that this duality is standard (this was defined in the paragraph following~\eqref{eq:standard}). We first observe the following equation:
\begin{calign}\label{eq:standardlem}
\includegraphics[valign=c]{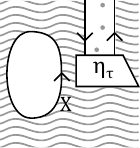}
~~=~~
\frac{1}{\dim(H)}~
\includegraphics[valign=c]{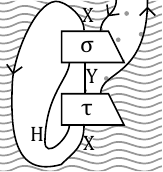}
\\
=~~
\frac{1}{\dim(H)}~
\includegraphics[valign=c]{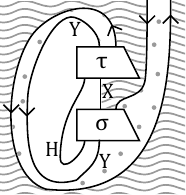}
~~=~~
\includegraphics[valign=c]{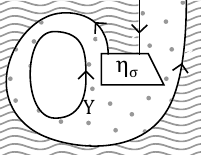}
\end{calign}
Here the first equality is by the first equation of~\eqref{eq:dilentinveqs}; the second equality is by isotopy of the diagram (pulling $\tau$ around the $X$-wire); and the third equality is by the second equation of~\eqref{eq:dilentinveqs}. 
Now standardness of the duality~\eqref{eq:mindildualcupcap} is seen as follows. Let $a \in \End(E_{\tau})$ be any morphism. Then:
\begin{align*}
\dim(H)\Tr_{[m]}\left(
\includegraphics[valign=c]{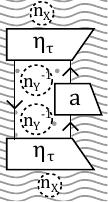}\right)
~~=~~
\dim(H)\Tr_{[m]}\left(
\includegraphics[valign=c]{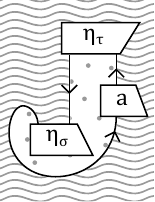}\right)
~~=~~
\Tr_{[m]}\left(
\includegraphics[valign=c]{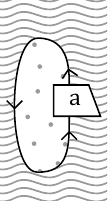}\right)
\\
=~~
\Tr_{[n]}\left(
\includegraphics[valign=c]{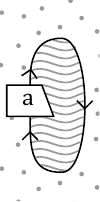}\right)
~~=~~
\dim(H)\Tr_{[n]}\left(
\includegraphics[valign=c]{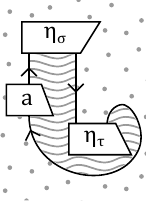}\right)
~~=~~
\dim(H)\Tr_{[n]}\left(
\includegraphics[valign=c]{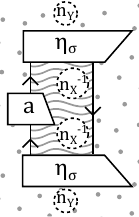}\right)
\end{align*}
Here the first equality is by~\eqref{eq:standardlem} (recall that $n_X^2=d_X$ and $(n_Y)^{-2} = (d_Y)^{-1}$); the second equality is by~\eqref{eq:mindilsnakes}; the third equality is by standardness of the canonical duality; the fourth equality is by~\eqref{eq:mindilsnakes}; and the final equality is by~\eqref{eq:standardlem} and a snake equation for the canonical duality. Since standard duals are related to the canonical dual by a unitary isomorphism, and minimal dilations are defined up to a unitary on the environment, we can identify $E_{\sigma}$ with the canonical dual of $E_{\tau}$, and the cup and cap~\eqref{eq:mindildualcupcap} with the canonical cup and cap. From~\eqref{eq:sigmatotaudag} we thereby obtain the following equation: 
\begin{calign}\label{eq:sigmabiudef}
\includegraphics[valign=c]{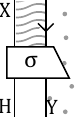}
~~=~~
\includegraphics[valign=c]{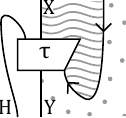}
\end{calign}
We have therefore shown that the entanglement-inverse has minimal dilation~\eqref{eq:entinvmindil}. We will now show that the first 2-morphism of~\eqref{eq:qbijbiueqs} is unitary. 
We know that it is an isometry by Theorem~\ref{thm:stinespring}, because $M$ is trace-preserving. To see that it is a coisometry:
\begin{calign}
\frac{1}{\sqrt{\dim(H)}}
\includegraphics[valign=c]{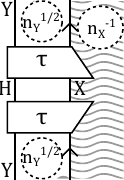}
~~=~~
\frac{1}{\sqrt{\dim(H)}}
\includegraphics[valign=c]{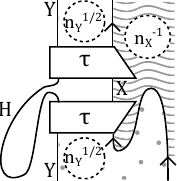}
\\
~~=~~
\frac{1}{\sqrt{\dim(H)}}
\includegraphics[valign=c]{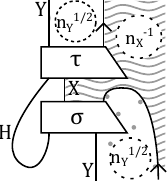}
~~=~~
\includegraphics[valign=c]{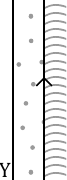}
\end{calign}
Here the first equality is by isotopy of the diagram; the second equality is by~\eqref{eq:sigmabiudef}; and the third equality is by~\eqref{eq:entinvmindil} and the fact that~\eqref{eq:mindildualcupcap} are standard. 

Unitarity of the second 2-morphism of~\eqref{eq:qbijbiueqs} follows immediately by symmetry of the entanglement-invertibility equations in $\tau$ and $\sigma$. 

We now need only prove the other direction: if the 2-morphisms~\eqref{eq:qbijbiueqs} are unitary, then $(M,H): A \to B$ is entanglement-invertible. We claim that the dilation~\eqref{eq:entinvmindil} specifies an entanglement-inverse. By Theorem~\ref{thm:stinespring} it indeed dilates a channel, since the right-hand 2-morphism of~\eqref{eq:qbijbiueqs} is a coisometry. The entanglement-invertibility equations~\eqref{eq:dilentinveqs} are then seen as follows:
\begin{align*}
\frac{1}{\dim(H)}~
\includegraphics[valign=c]{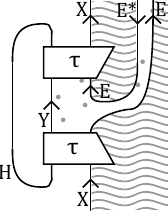}
~~=~~
\frac{1}{\sqrt{\dim(H)}}~
\includegraphics[valign=c]{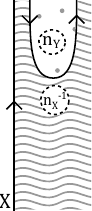}
\\
\frac{1}{\dim(H)}~
\includegraphics[valign=c]{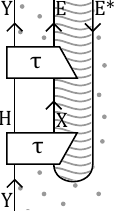}~~=~~
\frac{1}{\sqrt{\dim(H)}}~
\includegraphics[valign=c]{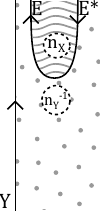}
\end{align*}
Here the first equation is by the fact that the second 2-morphism of~\eqref{eq:qbijbiueqs} is an isometry. The second equation is by the fact that the first 2-morphism of~\eqref{eq:qbijbiueqs} is a coisometry.
\end{proof}

\subsubsection{Compositional structure}
We will now show that these channels entanglement-invertible w.r.t. the canonical maximally entangled state are precisely the quantum bijections which were previously studied in the setting of noncommutative combinatorics~\cite{Musto2018}. We can then directly apply results about their compositional structure from that work.
\begin{definition}[{\cite[Def. 4.3]{Musto2018}}]\label{def:qbij}
Let $A \cong X \otimes X^*$ and $B \cong Y \otimes Y^*$ be f.d. $C^*$-algebras, and let $H$ be a Hilbert space. A \emph{quantum bijection} $(M,H): A \to B$ is a channel $A \otimes B(H) \to B$ whose minimal dilation $\tau: H \otimes X \to Y \otimes E$ obeys the following  additional equations:
\begin{align}\label{eq:qbijmult}
\frac{1}{\dim(H)}
\includegraphics[valign=c]{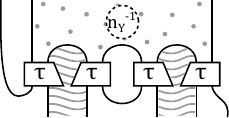}
~~=~~
\frac{1}{\sqrt{\dim(H)}}
\includegraphics[valign=c]{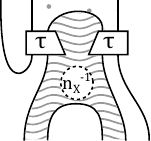}
\end{align}
\begin{align}\label{eq:qbijcomult}
\frac{1}{\dim(H)}
\includegraphics[valign=c]{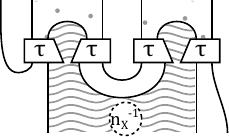}
~~=~~
\frac{1}{\sqrt{\dim(H)}}
\includegraphics[valign=c]{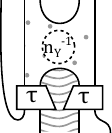}
\end{align}
\begin{align}\label{eq:qbijunit}
\frac{1}{\sqrt{\dim(H)}}
\includegraphics[valign=c]{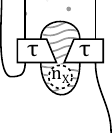}
~~=~~
\includegraphics[valign=c]{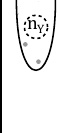}
\end{align}
\end{definition}
\begin{remark}
Definition~\ref{def:qbij} is more concise than~\cite[Def. 4.3]{Musto2018}, which had five equations; the two omitted equations are implied by the statement that $(M,H)$ is a channel.
\end{remark}
\begin{lemma}\label{lem:qbijentinv}
Entanglement-invertible channels $(M,H): A \to B$ are precisely quantum bijections.
\end{lemma}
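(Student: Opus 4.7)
The plan is to combine Theorem~\ref{thm:qbijmindil}, which characterises entanglement-invertibility via biunitarity of the minimal dilation, with a direct diagrammatic translation of each biunitarity condition into one of the three defining equations of a quantum bijection. Both sides of the equivalence assume that $(M,H)$ is a channel, so the content is really the equivalence, at the level of the minimal dilation $\tau$, between biunitarity and equations~\eqref{eq:qbijmult}--\eqref{eq:qbijunit}.

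First I would apply Theorem~\ref{thm:qbijmindil} to restate entanglement-invertibility as: the two 2-morphisms displayed in~\eqref{eq:qbijbiueqs} are both unitary. This decomposes into four (co)isometry conditions. One of them, isometry of the first 2-morphism of~\eqref{eq:qbijbiueqs}, is exactly the trace-preservation condition~\eqref{eq:tracepresisom} from Theorem~\ref{thm:stinespring}, so it is automatic from $(M,H)$ being a channel. It therefore remains to show that the other three conditions — coisometry of the first 2-morphism, and isometry and coisometry of the second — are jointly equivalent to the three equations~\eqref{eq:qbijmult}, \eqref{eq:qbijcomult} and~\eqref{eq:qbijunit}.

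Next, I would expand each of these three (co)isometry conditions as a diagrammatic equation in $\tau$ and $\tau^{\dagger}$. Using the sliding rules~\eqref{eq:shadedslide}, the snake equations~\eqref{eq:shadedsnake}, and the explicit description of the pair-of-pants multiplication, comultiplication and unit on $X\otimes X^*$ and $Y\otimes Y^*$ in terms of the cups and caps for $X$ and $Y$, each such equation rearranges, by isotopy alone, into precisely one of the three quantum bijection equations. The normalisation prefactors $\dim(H)^{-1/4}$ appearing on the 2-morphisms in~\eqref{eq:qbijbiueqs} square to give the $\dim(H)^{-1/2}$ and $\dim(H)^{-1}$ factors in Definition~\ref{def:qbij}, matching the normalisations exactly. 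Each of the two implications of the lemma is then obtained by reading the diagrammatic manipulation in the appropriate direction.

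The main obstacle is the matching: correctly pairing each of the three remaining biunitarity conditions with one of~\eqref{eq:qbijmult}, \eqref{eq:qbijcomult}, \eqref{eq:qbijunit}, and tracking the normalisations. Once that bookkeeping is fixed, no new conceptual step beyond Theorem~\ref{thm:qbijmindil} and the two-dimensional graphical calculus is required; the verification reduces to short topological rearrangements of diagrams.
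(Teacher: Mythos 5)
Your overall strategy coincides with the paper's: invoke Theorem~\ref{thm:qbijmindil} to replace entanglement-invertibility by unitarity of the two 2-morphisms in~\eqref{eq:qbijbiueqs}, observe that isometry of the first of these is exactly the trace-preservation condition~\eqref{eq:tracepresisom}, and match the remaining three (co)isometry conditions with the three equations of Definition~\ref{def:qbij}. For the record, the paper's pairing is: coisometry of the first 2-morphism with~\eqref{eq:qbijcomult}, isometry of the second with~\eqref{eq:qbijmult}, and coisometry of the second with~\eqref{eq:qbijunit}; your normalisation count is also essentially right.

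There is, however, a genuine gap in the claim that each equivalence follows ``by isotopy alone'' and that the two implications are the same manipulation read in opposite directions. This is true only in the direction from biunitarity to the quantum-bijection equations: there one isotopes the diagram until a subdiagram of the form $f^{\dagger}\circ f$ (or $f\circ f^{\dagger}$) for the relevant 2-morphism of~\eqref{eq:qbijbiueqs} appears, and then replaces it by an identity. That replacement is not an isotopy and cannot be reversed, since reversing it would mean inserting $f^{\dagger}\circ f$ for a 2-morphism whose isometry property is precisely what is to be proved. For the converse the paper argues differently: each of~\eqref{eq:qbijmult}--\eqref{eq:qbijunit} is an equation in which $\tau$, after bending wires, occurs as a common outer factor on both sides, and this factor is cancelled using the right-invertibility of $\tilde{\tau}$ supplied by minimality of the dilation (Remark~\ref{rem:mindil}); only after this cancellation does isotopy yield the (co)isometry conditions. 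Minimality is thus an essential ingredient of the direction from quantum bijections to entanglement-invertibility, and your proposal as written omits it.
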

\begin{proof}
Suppose that $(M,H)$ is an entanglement-invertible channel, and therefore the 2-morphisms~\eqref{eq:qbijbiueqs} are unitary. For~\eqref{eq:qbijmult}:
\begin{align*}
\frac{1}{\dim(H)}
\includegraphics[valign=c]{pictures/entinvchans/qbijpfmult1.pdf}
~~=~~
\frac{1}{\dim(H)}
\includegraphics[valign=c]{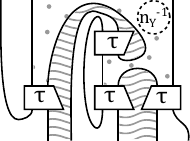}
~~=~~
\frac{1}{\sqrt{\dim(H)}}
\includegraphics[valign=c]{pictures/entinvchans/qbijpfmult4.pdf}
\end{align*}
Here the first equality is by isotopy of the diagram; the second equality is by the fact that the second 2-morphism of~\eqref{eq:qbijbiueqs} is an isometry.
The equation~\eqref{eq:qbijcomult} is immediate from the fact that the first 2-morphism of~\eqref{eq:qbijbiueqs} is a coisometry. For~\eqref{eq:qbijunit}:
\begin{align*}
\frac{1}{\sqrt{\dim(H)}}
\includegraphics[valign=c]{pictures/entinvchans/qbijpfunit1.pdf}
~~=~~
\frac{1}{\sqrt{\dim(H)}}
\includegraphics[valign=c]{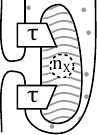}~~=~~
\includegraphics[valign=c]{pictures/entinvchans/qbijpfunit4.pdf}
\end{align*}
Here the first equality is by isotopy of the diagram, and the second equality is by the fact that the second 2-morphism of~\eqref{eq:qbijbiueqs} is a coisometry. 

In the other direction, suppose that $(M,H): A \to B$ is a quantum bijection, and let $\tau: H \otimes X \to Y \otimes E$ be a minimal dilation of $M$. We know that the first 2-morphism of~\eqref{eq:qbijbiueqs} is an isometry, since $\tau$ is a channel. The other three biunitarity equations are shown by a process which is essentially the inverse of the first half of this proof. For example,~\eqref{eq:qbijmult} implies that the second 2-morphism of~\eqref{eq:qbijbiueqs} is an isometry:
\begin{align*}
\frac{1}{\dim(H)}
\includegraphics[valign=c]{pictures/entinvchans/qbijpfmult1.pdf}
~~&=~~
\frac{1}{\sqrt{\dim(H)}}
\includegraphics[valign=c]{pictures/entinvchans/qbijpfmult4.pdf}
\\
\Leftrightarrow \qquad
\frac{1}{\dim(H)}~
\includegraphics[valign=c]{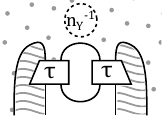}
~~&=~~
\frac{1}{\sqrt{\dim(H)}}~
\includegraphics[valign=c]{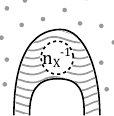}
\\
\Leftrightarrow \qquad
\frac{1}{\sqrt{\dim(H)}}
\includegraphics[valign=c]{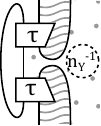}
~~&=~~
\includegraphics[valign=c]{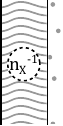}
\end{align*}
Here the first implication is by bending the output wires down and using minimality of the dilation, which implies right invertibility of $\tilde{\tau}$ (Remark~\ref{rem:mindil}); the second implication is by bending the two leftmost input wires upwards and isotopy of the diagram. The last equation is clearly the isometry condition for the second 2-morphism of~\eqref{eq:qbijbiueqs}. The other two biunitarity equations are shown similarly.
\end{proof}
\noindent
We can therefore apply the compositional framework developed in~\cite{Musto2018} to the study of these entanglement-invertible channels. We showed in that work that quantum bijections properly form a 2-category $\QBij$ whose objects are f.d. $C^*$-algebras, whose 1-morphisms are quantum bijections, and whose morphisms are \emph{intertwiners}; moreover, the relationship between a quantum bijection and its entanglement-inverse is one of 2-categorical duality. Here we will highlight two facts.
\begin{itemize}
\item Let $(M_1,H_1), (M_2,H_2): A \to B$ be quantum bijections, with minimal dilations $\tau_1: H_1 \otimes X \to Y \otimes E_1$ and $\tau_2: H_2 \otimes X \to Y \otimes E_2$ respectively. We define an \emph{intertwiner} $f: (M_1,H_1) \to (M_2,H_2)$ to be a linear map $f: H_1 \to H_2$ satisfying the following equation:
\begin{align}\label{eq:intertwinerdef}
\includegraphics[valign=c]{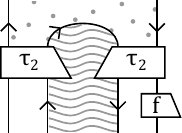}
~~=~~
\includegraphics[valign=c]{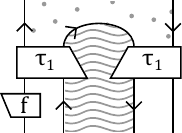}
\end{align}
Quantum bijections $A \to B$ are the objects of a category $\QBij(A,B)$, whose morphisms are these intertwiners. We say that two quantum bijections are \emph{isomorphic} if they are related by a unitary intertwiner.

\item Let $(M_1,H_1), (M_2,H_2): A \to B$ be quantum bijections. The \emph{direct sum} $(M_1 \oplus M_2, H_1 \oplus H_2): A \to B$ is the quantum bijection whose defining channel is $M_1 \oplus M_2: A \otimes B(H_1 \oplus H_2) \to B$. We say that a quantum bijection is \emph{simple} if it cannot be decomposed as a nontrivial direct sum. We showed in~\cite[Thm. 6.4]{Musto2018} that every quantum  bijection is isomorphic to a finite direct sum of simple quantum bijections.
\end{itemize}
The following lemma will be useful later on.
\begin{lemma}\label{lem:qbijexists}
Let $A$ and $B$ be f.d. $C^*$-algebras. There exists a quantum bijection $A \to B$ precisely when $\dim(A) = \dim(B)$. 
\end{lemma}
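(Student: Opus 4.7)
The plan is to prove each direction separately, with the forward direction being essentially a dimension count from biunitarity, and the backward direction requiring an explicit construction.

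For the forward direction, suppose $(M,H):A\to B$ is a quantum bijection, and let $\tau: H\otimes X\to Y\otimes E$ be its minimal dilation, where $X:1\to[m]$ and $Y:1\to[n]$ are splittings of $A$ and $B$. By Theorem~\ref{thm:qbijmindil}, both 2-morphisms in~\eqref{eq:qbijbiueqs} are unitary. Reading each as an indexed family of unitaries between Hilbert spaces and equating the dimensions of source and target componentwise yields
\begin{align*}
\dim(H)\dim(X_i) = \sum_{j}\dim(Y_j)\dim(E_{ji})
\qquad\text{and}\qquad
\dim(H)\dim(Y_j) = \sum_{i}\dim(X_i)\dim(E_{ji})
\end{align*}
for every $i\in[m]$ and $j\in[n]$ (the first coming from the unitarity of $\tau$ itself at the open $[m]$-region $i$, the second from the unitarity of the rotated 2-morphism at the open $[n]$-region $j$, where the rotation uses the duality of $X$ and $E$). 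Multiplying the first identity by $\dim(X_i)$ and summing over $i$, and the second by $\dim(Y_j)$ and summing over $j$, both sides reduce to the common triple sum $\sum_{i,j}\dim(X_i)\dim(Y_j)\dim(E_{ji})$; dividing by $\dim(H)$ then gives $\dim(A)=\sum_i\dim(X_i)^2 = \sum_j\dim(Y_j)^2 = \dim(B)$.

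For the backward direction, given $\dim(A)=\dim(B)=d$, the task is to exhibit a quantum bijection $A\to B$. The approach is to factor through a common reference algebra: using that $\QBij$ is a 2-category in which every 1-morphism admits an entanglement-inverse, it suffices to construct a quantum bijection $A\to\mathbb{C}^{\dim A}$ for every multimatrix algebra $A$, so that the desired morphism is obtained as the composite $A\to\mathbb{C}^d\to B$, where the second factor is the entanglement-inverse of an analogous quantum bijection $B\to\mathbb{C}^d$. For a single matrix factor $B(K)\to\mathbb{C}^{\dim(K)^2}$, such a quantum bijection is precisely a tight teleportation scheme, and its existence follows from the existence of a unitary error basis on $B(K)$, which is known in every finite dimension~\cite{Werner2001}. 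For a general direct sum $A=\bigoplus_i B(K_i)$, the forced dimensional shape $\dim(E_{ji})=\dim(K_i)\dim(L_j)$ (with $B=\bigoplus_j B(L_j)$) satisfies both biunitarity balance equations derived above, and the per-factor tight schemes can be assembled along the block structure to produce the required biunitary.

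The main obstacle will be the backward direction. The dimensional arithmetic dictates the shape of the environment but does not by itself produce a biunitary satisfying all the quantum bijection conditions~\eqref{eq:qbijmult}--\eqref{eq:qbijunit}; when $A$ has several matrix factors of differing sizes, one must verify that the cross-factor structure of $\tau$ is simultaneously isometric, coisometric, and compatible with the multiplicative and unit data of $A$ and $B$. A cleaner alternative, should the block-diagonal assembly turn out awkward, is to appeal to the representation-theoretic classification of quantum bijections~\cite{Musto2019,Verdon2020} to deduce existence abstractly from the fact that the only numerical obstruction has been removed.
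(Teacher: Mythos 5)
Your forward direction is correct and is actually more self-contained than the paper's, which simply cites \cite[Thm.~4.8]{Musto2018}: the componentwise dimension count $\dim(H)\dim(X_i)=\sum_j\dim(Y_j)\dim(E_{ji})$ and $\dim(H)\dim(Y_j)=\sum_i\dim(X_i)\dim(E_{ji})$ from unitarity of the two 2-morphisms in~\eqref{eq:qbijbiueqs}, followed by the weighted summation, is a valid derivation of $\dim(A)=\dim(B)$.

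The backward direction, however, has a genuine gap, and it sits exactly where you flagged it. You correctly reduce to constructing quantum bijections $A\to[D]$ and $[D]\to B$ and correctly handle a single matrix factor via a unitary error basis, but ``assembling the per-factor tight schemes along the block structure'' does not work as stated: a quantum bijection $A\to[D]$ uses one fixed auxiliary Hilbert space $H$, while the tight teleportation scheme for the factor $B(H_i)$ requires auxiliary dimension $\dim(H_i)$, which varies with $i$. Indeed your own second balance equation forces $\dim(E_{ji})=\dim(H)/\dim(H_i)$, so the environment cannot be trivial on every block and the naive block-diagonal of tight schemes is not even dimensionally consistent. The paper's resolution is the missing idea: take $\mu$ to be the lowest common multiple of the $\dim(H_i)$, let $\dim(H)=\mu$, first perform a projective measurement onto the factors of $A$, and then on factor $i$ apply the direct sum of $\mu/\dim(H_i)$ copies of the tight teleportation scheme $B(H_i)\to[\dim(H_i)^2]$. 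Your proposed fallback of appealing to the representation-theoretic classification~\cite{Musto2019,Verdon2020} does not close the gap either: that classification describes $\QBij(A,B)$ as finite-dimensional $*$-representations of a Hopf--Galois object, but the existence of a nonzero finite-dimensional $*$-representation is essentially the existence question you are trying to answer, so nothing is deduced ``abstractly'' from the numerical condition alone.
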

\begin{proof}
That $\dim(A) = \dim(B)$ if there exists a quantum bijection $A \to B$ was shown in~\cite[Thm. 4.8]{Musto2018}. We will show the other direction now. Let $D = \dim(A) = \dim(B)$. We know that $A$ and $B$ are multimatrix algebras, i.e. $A \cong \bigoplus_{i \in [m]} B(H_i)$ and $B \cong \bigoplus_{j \in [n]} B(K_j)$. The composition of two quantum bijections is a quantum bijection, so it is sufficient to define entanglement-invertible channels $A \to{} [D]$ and $[D] \to{} B$. 

We now describe how to construct the first quantum bijection $A \to{} [D]$. We first observe that if $A$ is a matrix algebra, then the tight teleportation scheme of~\cite{Werner2001} is already a quantum bijection $A \to{} [D]$. To extend this to the case of a multimatrix algebra, let $\mu$ be the lowest common multiple of all the $\{\dim(H_i)\}_{i \in [m]}$. Let $H$ be a Hilbert space of dimension $\mu$. The quantum bijection is defined as follows: first perform a projective measurement onto the factors of $A$, which will produce an outcome $i \in [m]$; then perform the direct sum of $\mu/\dim(H_i)$ copies of a tight teleportation scheme $B(H_i) \to{} [\dim(H_i)^2]$. 

A quantum bijection $[D] \to{} B$ may be constructed similarly. 
\end{proof}
\noindent
Finally, we note that, since the category $\QBij(A,B)$ has a semisimple structure, one might expect it to be the category of representations of some algebraic object. This is indeed the case; $\QBij(A,B)$ is the category of f.d. $*$-representations of a \emph{Hopf-Galois object} for the \emph{quantum permutation group} of $A$~\cite{Musto2019, Verdon2020}.

\subsection{General entanglement-reversible and entanglement-invertible channels}

Having considered channels entanglement-invertible w.r.t. the canonical maximally entangled state in some detail, we now turn our attention to general entanglement-reversible and entanglement-invertible channels. In Section~\ref{sec:generalclassification} we will characterise these channels in terms of their minimal dilations, while in Section~\ref{sec:wernerex} we will show how this generalises Werner's classification of tight teleportation and dense coding protocols in terms of unitary error bases. 

\subsubsection{Characterisation in terms of minimal dilation}\label{sec:generalclassification}

We will now answer the question: given a channel $(M,H_1): A \to B$ and a state $W: \mathbb{C} \to B(H_1) \otimes B(H_2)$, when is the channel $M$ entanglement-reversible/entanglement-invertible w.r.t. $W$? 

For clarity, we will split the result into two parts. In Theorem~\ref{thm:genpure} we will assume that $W$ is pure.  Then, in Corollary~\ref{cor:genmixed}, we will extend the result to mixed $W$. 

\paragraph{Result for pure states.}  As discussed in Example~\ref{ex:dilstates}, for any pure state $W: \mathbb{C} \to B(H_1) \otimes B(H_2)$ there exists some state $\ket{w} \in H_2 \otimes H_1$ such that $W$ has minimal dilation $(\dim(H_2)\dim(H_1))^{-1/4} \ket{w}: \mathbb{C} \to H_2 \otimes H_1$. There is a uniquely defined linear map $\omega: H_1 \to H_2$  such that $(\dim(H_2)\dim(H_1))^{-1/4} \ket{w} = (\omega \otimes \mathbbm{1}_{H_1}) \ket{\eta_{H_1}}$, where $\ket{\eta_{H_1}}: \mathbb{C} \to H_1 \otimes H_1$ is defined as in~\eqref{eq:etadef}. This yields a bijective correspondence between pure states and such linear maps. We will from now on refer to $W: \mathbb{C} \to B(H_2) \otimes B(H_1)$ as `the pure state defined by $\omega: H_1 \to H_2$'.

The following lemma will allow us to reduce to the case where $\omega$ is invertible, at least when $W$ is pure. We first define some notation. A general $\omega$ can obviously be decomposed as $\omega = i_{\omega} \circ \bar{\omega} \circ q_{\omega}$, where $i_{\omega}: \Image(\omega) \to H_2$ is an isometry, $q_{\omega}: H_1 \to H_1/\Ker({\omega})$ is a coisometry, and $\bar{\omega}: H_1/\Ker({\omega}) \to \Image(\omega)$ is an isomorphism.  Let $M: A \otimes B(H_1) \to B$ be a channel, and let $\tau: H_1 \otimes X \to Y \otimes E$ be the minimal dilation. We define a channel $\bar{M}: A \otimes B(H_1/\Ker(\omega)) \to B$ whose dilation is a scalar multiple of $\tau \circ (q_{\omega}^T \otimes \mathbbm{1}_{X}): H_1/\Ker(\omega) \otimes X \to Y \otimes E$ (where the scalar multiple is chosen so that the dilation satisfies the trace-preservation condition~\eqref{eq:tracepresisom}).
\ignore{$\frac{(\dim(H_1))^{1/4}}{(\dim(H_1/\Ker(\omega)))^{1/4}} \tau \circ (q_{\omega}^T \otimes \mathbbm{1}_{X}): H_1/\Ker(\omega) \otimes X \to Y \otimes E$ (it may readily checked that this dilation satisfies the trace-preservation condition~\eqref{}).}
Finally, we define $\bar{W}: \mathbb{C} \to  B(H_1/ \Ker(\omega)) \otimes B(\Image(\omega))$ to be the pure state defined by $\bar{\omega}$.
\begin{lemma}\label{lem:entrevquotient}
The channel $M$ is entanglement-reversible/entanglement-invertible w.r.t $W$ precisely when $\bar{M}$ is entanglement-reversible/entanglement-invertible w.r.t. $\bar{W}$.
\end{lemma}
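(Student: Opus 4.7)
The idea is that the state $W$ factors, at the level of dilations, through the canonical decomposition $\omega = i_\omega \circ \bar\omega \circ q_\omega$, so the entanglement-reversibility/invertibility equations for $(M,W)$ are equivalent to a ``trimmed'' version of the same equations in which $q_\omega^T$ has been absorbed into $M$ (giving $\bar M$) and $i_\omega$ into the entanglement-inverse.

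First I would compute, using Example~\ref{ex:dilstates} together with the snake equation~\eqref{eq:snake} applied on the $H_1$-side, that the minimal dilation of $W$ equals, up to an explicit scalar depending on the dimensions, $(i_\omega \otimes q_\omega^T) \circ \ket{w'}$, where $\ket{w'}$ is the minimal dilation of $\bar W$ (itself equal to a scalar times $(\bar\omega \otimes \mathbbm{1})\ket{\eta_{H_1/\Ker(\omega)}}$). Diagrammatically, the effective ``pure entangled resource'' supplied by $W$ to $M$ and to $N$ is obtained from the dilation of $\bar W$ by hitting the $H_1$-leg with $q_\omega^T$ and the $H_2$-leg with $i_\omega$.

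For the forward direction, given an entanglement-left inverse $N$ with minimal dilation $\sigma: H_2 \otimes Y \to X \otimes E_\sigma$, define $\bar N: B \otimes B(\Im(\omega)) \to A$ as the channel whose dilation is a suitably normalised $\sigma \circ (i_\omega \otimes \mathbbm{1}_Y)$; this is automatically completely positive, and the correct scalar can be read off by requiring the trace-preservation condition~\eqref{eq:tracepresisom}. Substituting the factored form of $W$ into the left equation of~\eqref{eq:entreventinv}, the factor $q_\omega^T$ on the $H_1$-leg is absorbed into $\tau$ to produce the dilation of $\bar M$, while the factor $i_\omega$ on the $H_2$-leg is absorbed into $\sigma$ to produce the dilation of $\bar N$. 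The scalar factors cancel, and the resulting equation is precisely the entanglement-reversibility equation for $(\bar M, \bar N, \bar W)$. The second equation of~\eqref{eq:entreventinv}, which after the swap $\sigma$ uses the same factorization of $W$ but with the roles of the two legs interchanged, is handled by the symmetric manipulation, so entanglement-invertibility transfers as well.

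For the backward direction, given $\bar N: B \otimes B(\Im(\omega)) \to A$, extend it to a channel $N: B \otimes B(H_2) \to A$ by precomposing on the $B(H_2)$-factor with any channel $\mathcal{E}: B(H_2) \to B(\Im(\omega))$ that restricts to $\mathrm{id}$ on states supported on $\Im(\omega)$ (for example, $\mathcal{E}(x) = i_\omega^\dagger x\, i_\omega + \mathrm{Tr}((\mathbbm{1} - i_\omega i_\omega^\dagger)x)\,\rho_0$ for some fixed state $\rho_0$ on $\Im(\omega)$). The reduced state of $W$ on $B(H_2)$ is $\omega\omega^\dagger$ up to normalisation, hence supported on $\Im(\omega)$; consequently $N$ acts exactly as $\bar N$ (after identification via $i_\omega$) on every state appearing in the entanglement-reversibility equation for $(M,W)$. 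Running the diagrammatic argument of the forward direction in reverse yields the required equation for $M,N,W$, and the same applies to the invertibility equation. The main subtlety is keeping track of the normalisation scalars and verifying that the extension $\bar N \mapsto N$ preserves the entanglement-invertibility equation (not just the reversibility one); this is immediate because both equations only probe $N$ via the marginal of $W$, which lives in $\Im(\omega)$ regardless of whether the swap $\sigma$ is present.
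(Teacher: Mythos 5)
Your proof is correct and follows the paper's strategy almost exactly: the forward direction is the same computation (factor the dilation of $W$ as $(i_\omega\otimes q_\omega^T)$ applied to the dilation of $\bar W$, absorb $q_\omega^T$ into $\tau$ and $i_\omega$ into $\sigma$, and read off that the restricted channel $\bar N$ inverts $\bar M$). The one place you diverge is the backward direction: the paper obtains the extension $N: B\otimes B(H_2)\to A$ of $\bar N$ abstractly, by observing that $B\otimes B(\Im(\omega))$ embeds as a unital $*$-subalgebra of $B\otimes B(H_2)$ via $\iota_\omega$ and invoking Arveson's extension theorem, whereas you construct the extension explicitly as $N=\bar N\circ(\id_B\otimes\mathcal E)$ with $\mathcal E(x)=i_\omega^\dagger x\, i_\omega+\Tr((\mathbbm 1-i_\omega i_\omega^\dagger)x)\,\rho_0$. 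Your construction is more elementary and self-contained, and it makes the non-uniqueness of the extension (noted in the paper's later remark) completely transparent; the paper's route is shorter on the page but leans on a theorem stated for unital CP maps in the Heisenberg picture. One small imprecision in your write-up: what justifies replacing $N$ by $\bar N$ in the reversibility and invertibility equations is not literally that they ``only probe $N$ via the marginal of $W$'' but that the full (correlated) state fed into $N$ has support contained in $B\otimes\Im(\omega)$ (respectively $B\otimes\Im(\omega)\otimes H_1$ before tracing), which follows because the support of a multipartite state is contained in the tensor product of the supports of its marginals; with that correction the argument goes through.
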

\begin{proof}
Suppose that $(M,H_1): A \to B$ is entanglement-reversible/entanglement-invertible w.r.t. $W$. Let $(N,H_2): B \to A$ be the entanglement-left inverse/entanglement-inverse, and let $\sigma: H_2 \otimes Y \to X \otimes E_{\sigma}$ be a minimal dilation of $N$. 
As discussed in the proof of Proposition~\ref{prop:qbijmindil}, in terms of the dilations, the entanglement-reversibility/entanglement-invertibility equations~\eqref{eq:entreventinv} are as follows:
\begin{align}\nonumber
\includegraphics[valign=c]{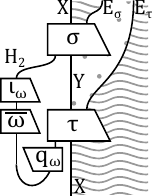}
~~=~~
\includegraphics[valign=c]{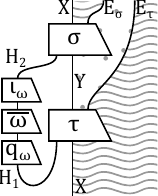}
~~=~~
\includegraphics[valign=c]{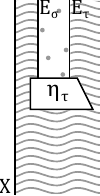}
\\\label{eq:entrevinvquotient}
\includegraphics[valign=c]{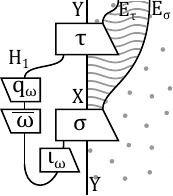}
~~=~~
\includegraphics[valign=c]{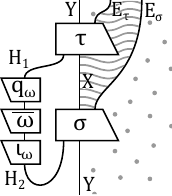}
~~=~~
\includegraphics[valign=c]{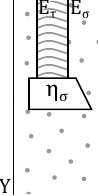}
\end{align}
Here the first equalities are by isotopy of the diagram. Let us define a channel $\bar{N}: B \otimes B(\Image(\omega)) \to A$ whose dilation is a scalar multiple of $\sigma \circ (\iota_{\omega} \otimes \mathbbm{1}_Y): \Image(\omega) \otimes Y \to X \otimes E$ (where, again, the scalar multiple is chosen so that the trace-preservation condition~\eqref{eq:tracepresisom} is satisfied). But now the equations~\eqref{eq:entrevinvquotient} (where we consider the first and third terms in each equation) precisely state that $\bar{N}$ is an entanglement-left inverse/entanglement-inverse for $\bar{M}$ w.r.t. $\bar{W}$.

On the other hand, suppose that $\bar{M}$ is entanglement-reversible/entanglement-invertible w.r.t $\bar{W}$. Then there is a channel $\bar{N}: B \otimes B(\Image(\omega)) \to A$ which is an entanglement-left inverse/entanglement-inverse of $\bar{M}$. But $B \otimes B(\Image(\omega))$ is a unital $*$-subalgebra of $B \otimes B(H_2)$ by the isometry $\iota_{\omega}: \Image(\omega) \to H_2$; so by Arveson's extension theorem~\cite[Thm. 1.2.3]{Arveson1969} there is a (non-unique) extension $N: B \otimes B(H_2) \to A$. Let $\sigma: H_2 \otimes Y \to X \otimes E$ be a minimal dilation of $N$; then the fact that $N$ is an extension of $\bar{N}$ with respect to the isometry $\iota_{\omega}$ implies the relevant equations~\eqref{eq:entrevinvquotient}.
\end{proof}
\noindent
Lemma~\ref{lem:entrevquotient} implies that, at least in the case where $W$ is pure, we can reduce to the case where $\omega$ is invertible. In this case, we identify $H_1 = H_2 = H$. This is the context for the following theorem. 
\begin{theorem}\label{thm:genpure}
Let $H$ be an f.d. Hilbert space, and let $W: \mathbb{C} \to B(H) \otimes B(H)$ be the pure state defined by an invertible linear map $\omega: H \to H$. Let $A$ and $B$ be any f.d. $C^*$-algebras, and let $X: [1] \to{} [m]$ and $Y: [1] \to{} [n]$ be splittings of $A$ and $B$ respectively.

Let $M: A \otimes B(H) \to B$ be a channel, and let $\tau: H \otimes X \to Y \otimes E$ be a minimal dilation of $M$. Then:
\begin{enumerate}
\item The channel $(M,H)$ is entanglement-reversible with respect to $W$ precisely when there exists an positive invertible element $\kappa \in \End(E^*)$ such that the following 2-morphisms are isometries:
\begin{align}\label{eq:biisomeqs}
\frac{1}{\dim(H)^{1/4}}
\includegraphics[valign=c]{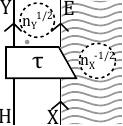}
&&
\frac{1}{\dim(H)^{1/4}}
\includegraphics[valign=c]{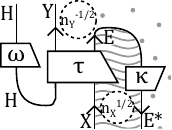}
\end{align}
\item Suppose that $(M,H)$ is entanglement-reversible with respect to $W$. Then $\dim(A) \leq \dim(B)$. The isometries~\eqref{eq:biisomeqs} are unitary precisely when $\dim(A) = \dim(B)$; in this case the entanglement-left inverse $N: B \otimes B(H) \to A$ is uniquely defined, with the following minimal dilation:
\begin{align}\label{eq:entinvgenmindil}
\includegraphics[valign=c]{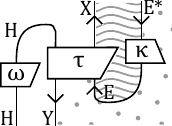}
\end{align}
\item The channel $(M,H)$ is entanglement-invertible with respect to $W$ precisely when the following conditions are satisfied:
\begin{itemize}
\item $(M,H)$ is a quantum bijection.
\item The linear map $\omega^{\dagger} \circ \omega :H \to H$ is an intertwiner $(M,H) \to (M,H)$.
\end{itemize}
\end{enumerate}
\end{theorem}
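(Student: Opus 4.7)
The plan is to adapt the proof of Theorem~\ref{thm:qbijmindil}, which handled the special case $\omega = \mathbbm{1}_H$. Recall that the minimal dilation of the pure state $W$ is a scalar multiple of $(\omega \otimes \mathbbm{1}_H)\ket{\eta_H}$, so every occurrence of the canonical maximally entangled state in the earlier argument acquires a factor of $\omega$. As a result, the induced duality on the environment $E$ of $\tau$ is no longer canonical but twisted by an invertible element $\kappa \in \End(E^*)$; this is exactly the $\kappa$ appearing in~\eqref{eq:biisomeqs} and~\eqref{eq:entinvgenmindil}.

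For part 1, suppose $(M,H)$ is entanglement-reversible with entanglement-left inverse $N$ having minimal dilation $\sigma: H \otimes Y \to X \otimes E_\sigma$. Translating the first equation of~\eqref{eq:entreventinv} into a dilation equation gives the analogue of the first equation of~\eqref{eq:dilentinveqs}, but with $\omega$ inserted on the $H$-wires of the state and an isometry $\eta_\tau: \id_{[m]} \to E_\sigma \otimes E$ in place of the cup. Bending wires as in~\eqref{eq:sigmatotaudag} and using trace-preservation of $\sigma$, I would identify $E_\sigma$ with $E^*$ via $\eta_\tau$ and express $\sigma$ as the result of inserting $\omega^{-1}$ on the $H$-wire and an invertible $\kappa \in \End(E^*)$ on the environment of $\tau^\dagger$; invertibility of $\kappa$ comes from invertibility of $\omega$ together with right-invertibility of $\tilde{\tau}$ (Remark~\ref{rem:mindil}). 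The two conditions in~\eqref{eq:biisomeqs} then translate respectively into the trace-preservation of $\sigma$ and the dilation form of the entanglement-reversibility equation. Conversely, given such $\kappa$ and the two isometries, I would define $N$ via~\eqref{eq:entinvgenmindil}, verify that this is a valid channel using one of the isometry conditions, and check the entanglement-reversibility equation by diagrammatic manipulation.

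For part 2, the inequality $\dim(A) \leq \dim(B)$ follows from a dimension count applied to either of the isometries in~\eqref{eq:biisomeqs}: comparing input and output wire dimensions and combining with the trace-preservation constraint already imposed on $\tau$ yields the bound. The isometries are unitary precisely when equality holds. Uniqueness of $N$ when $\dim(A) = \dim(B)$ comes from the fact that unitarity pins down $\kappa$ up to a phase on each block of $E^*$, so~\eqref{eq:entinvgenmindil} then specifies the minimal dilation of $N$ uniquely.

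For part 3, the forward direction requires extracting both a quantum bijection structure and the intertwiner condition. Entanglement-invertibility provides \emph{both} the bi-isometries of part 1 and an analogous pair coming from the second equation of~\eqref{eq:entreventinv}; comparing the two resulting expressions for $\sigma$ produces an identity in which $\omega^{\dagger}\omega$ appears on an $H$-wire of a diagram built solely from $\tau$, which I would rearrange into the intertwiner condition~\eqref{eq:intertwinerdef} for $\omega^{\dagger}\omega: (M,H) \to (M,H)$. Once this intertwiner property is available, the insertions of $\omega$ in~\eqref{eq:biisomeqs} can be absorbed into the dilation, reducing the twisted biunitarity conditions to~\eqref{eq:qbijbiueqs}, so that $(M,H)$ is a quantum bijection by Theorem~\ref{thm:qbijmindil}. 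Conversely, given a quantum bijection with $\omega^{\dagger}\omega$ intertwining it, I would build an entanglement-inverse w.r.t.\ $W$ by composing the canonical entanglement-inverse from Theorem~\ref{thm:qbijmindil} with an appropriate twist by $\omega^{-1}$ on the $B(H_2)$-input, and verify both equations of~\eqref{eq:entreventinv} using the intertwiner property. I expect the main obstacle to be the clean extraction of the intertwiner condition from the second entanglement-invertibility equation: $\omega$ and the environment twist $\kappa$ must be disentangled simultaneously, and doing so seems to require combining minimality of both dilations with the standardness of the duality established in the proof of Theorem~\ref{thm:qbijmindil}.
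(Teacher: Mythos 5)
There is a genuine gap at the foundation of your argument, namely in Part 1. Your plan is to adapt the proof of Theorem~\ref{thm:qbijmindil} and ``identify $E_\sigma$ with $E^*$ via $\eta_\tau$'', but in that proof the identification of $E_\sigma$ with a dual of $E_\tau$ rests on the snake equations~\eqref{eq:mindilsnakes}, which are derived from \emph{both} equations of~\eqref{eq:entreventinv}: one equation yields~\eqref{eq:sigmatotaudag} and the other yields~\eqref{eq:tautosigmadag}, and only together do they produce the duality. Part 1 assumes only entanglement-\emph{reversibility}, i.e.\ a single equation, so you obtain one relation expressing a composite of $\sigma$ in terms of $\tau^\dagger$ and an isometry $\eta_\tau: \id_{[m]} \to E_\sigma \otimes E_\tau$ about which you know nothing further; there is no way to conclude from this alone that the second 2-morphism of~\eqref{eq:biisomeqs} is an isometry for some invertible $\kappa \in \End(E^*)$. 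The paper takes an entirely different route here: it recasts the reversibility equation as reversibility of the channel~\eqref{eq:chantoreverse}, invokes the criterion that a channel is reversible iff its confusability graph is discrete, turns this into the support condition~\eqref{eq:supporteq} on a positive element built solely from $\tau$ and $\omega$, and then extracts from a componentwise positivity argument a positive $\nu: E^* \to E^*$ with full support, from which $\kappa := n_X^{-1/2} \otimes \nu^{-1/2} \otimes n_Y^{1/2}$ is defined. Some device of this kind --- characterising the existence of a left inverse without ever naming one --- is indispensable for the ``only if'' direction, and your sketch does not supply it.

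Two further points. In Part 2, a dimension count on ``either'' isometry is not enough: each isometry alone gives bounds involving the unknown dimensions $\dim(E_{ji})$, and the paper must play the inequalities from \emph{both} isometries against each other (the four inequalities~\eqref{eq:topinequality}--\eqref{eq:bottominequality} combined as in~\eqref{eq:exploringinequalities}) to eliminate them; moreover this only works directly when the factors $X_i$, $Y_j$ have constant dimension, and the general case needs the reduction via Lemma~\ref{lem:qbijexists} (pre- and post-composing with quantum bijections to $[\dim(A)]$ and $[\dim(B)]$ and proving the composite dilation is minimal). The uniqueness claim also concerns the channel $N$, established via the implication~\eqref{eq:sigmaimplication}, not a claim that $\kappa$ is determined up to phase. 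In Part 3 your outline is closer to the paper's in spirit --- the paper's actual mechanisms are the element $x = $ (positive element built from $\tau$ and $\omega\omega^\dagger$) with $\kappa^T = x^{-1/2}$ for the converse, and the ``entanglement-inverse of the entanglement-inverse'' identity~\eqref{eq:kappadouble} for the forward direction --- but as you anticipate, the extraction of the intertwiner condition is exactly where your sketch stops short.
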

\begin{proof}
We prove each statement in turn. 

\vspace{.2cm}
\noindent
\emph{Proof of 1.} The first 2-morphism of~\eqref{eq:biisomeqs} is always an isometry by Theorem~\ref{thm:stinespring}, since the channel is trace-preserving. We therefore need to prove that $(M,H)$ is entanglement-reversible iff the other 2-morphism in~\eqref{eq:biisomeqs} is an isometry.

By Remark~\ref{rem:mindil}, since the dilation $\tau$ is minimal, there exists a morphism $\overline{\tau}: H \otimes X \to Y \otimes E$ and a positive invertible morphism $\lambda: E \to E$ such that the following equations hold:
\begin{align*}
\includegraphics[valign=c]{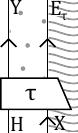}
~~=~~
\includegraphics[valign=c]{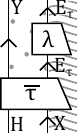}
&&
\includegraphics[valign=c]{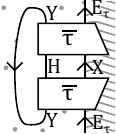}
~~=~~
\includegraphics[valign=c]{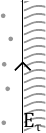}
\end{align*}
We define the following positive element $T \in \End(Y^* \otimes H \otimes X)$:
\begin{align}\label{eq:bigtdef}
\includegraphics[valign=c]{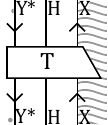}
~~:=~~
\includegraphics[valign=c]{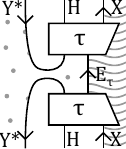}
~~=~~
\includegraphics[valign=c]{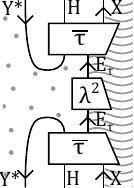}
\end{align}
These 2-morphisms are indexed families of linear maps; we now choose some indices. Let $i \in [n]$, $j \in [m]$ be the indices of the left and right shaded regions respectively in~\eqref{eq:bigtdef}. Let $E= (E_{ij})_{(i,j) \in [n] \times [m]}$, $Y = (Y_{i})_{i \in [n]}$ and  $X = (X_{j})_{j \in [m]}$. Choose some orthonormal basis $\{\ket{k}\}_{k \in [\dim(E_{ij})]}$ for $E_{ij}$ in which $\lambda_{ij} \in \End(E_{ij})$ is diagonal. Let $\lambda_{ijk}:= \bra{k} \lambda_{ij} \ket{k}$, and let $\overline{\tau}_{ijk}:= (\id_{Y} \otimes \bra{k}) \circ \overline{\tau}_{ij}$. Then for each $i,j$ we can expand $T_{ij}$ as follows:
\begin{align}\label{eq:bigtexpand}
\includegraphics[valign=c]{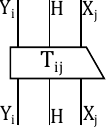}
~~=
\sum_{k \in [\dim(E_{ij})]} \lambda_{ijk}^2 \includegraphics[valign=c]{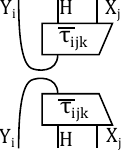}
\end{align}
\ignore{
Because $\tau$ is an isometry we have the following equation:
\begin{align*}
\sum_{\substack{j \in J \\ k \in K_{ij}}} \frac{\sqrt{\dim(Y_j)}}{\sqrt{\dim(H) \dim(X_i)}}
\includegraphics[valign=c]{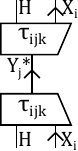}
~~=~~
\includegraphics[valign=c]{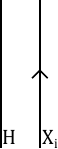}
\end{align*}}
We now use the fact that a channel is reversible iff its quantum  confusability graph is discrete. This was shown in~\cite[Thm 4.4]{Verdon2023}. Indeed, the first equation of~\eqref{eq:entreventinv} corresponds to reversibility of the following channel:
\begin{align}\label{eq:chantoreverse}
\includegraphics{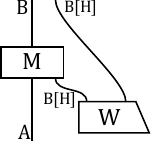}
\end{align} 
By~\cite[Def. 3.9, Prop. 3.11]{Verdon2023}, the confusability graph of the channel~\eqref{eq:chantoreverse} is discrete iff the following equation holds:
\begin{align}\label{eq:supporteq}
\supp\left(\includegraphics[valign=c]{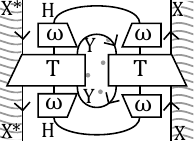}\right)
~~=~~
\includegraphics[valign=c]{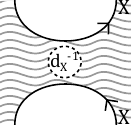}
\end{align}
The discrete graph is the minimal confusability graph of a channel; that is to say, the support on the LHS of~\eqref{eq:supporteq} can be no smaller than the projection on the RHS of~\eqref{eq:supporteq}. Therefore, since a positive element is preserved under conjugation by any projection containing its support, the equation~\eqref{eq:supporteq} is equivalent to:
\begin{align}\label{eq:supportconjeq}
\includegraphics[valign=c]{pictures/entinvchans/poseldefw.pdf}
~~=~~
\includegraphics[valign=c]{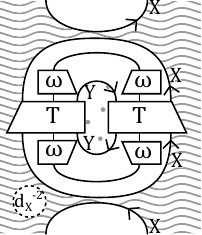}
\end{align}
Let us use the indices $i_2 \in [m]$, $j \in [n]$ and $i_1 \in [m]$ for the left, central and right shaded regions respectively of the morphism on the left hand side of the equality~\eqref{eq:supportconjeq}. Then for any choice of $i_1$, $i_2$ we obtain the following equation for the component linear maps:
\begin{align}\label{eq:supportconjcomponents}
\sum_{j\in J}~\includegraphics[valign=c]{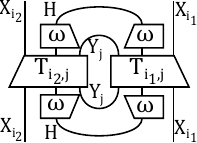}
~~=~~
\frac{\delta_{i_1,i_2}}{\dim(X_{i_1})^2}\sum_{j \in J}~\includegraphics[valign=c]{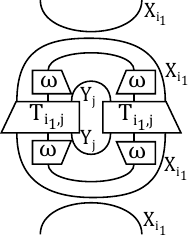}
\end{align}
Inserting~\eqref{eq:bigtexpand} in~\eqref{eq:supportconjcomponents} and using isotopy of the diagrams, we obtain the following equation:
\begin{align*}
&\sum_{\substack{j \in J \\ k_1 \in [\dim(E_{i_1j})] \\ k_2 \in [\dim(E_{i_2j})]}}\lambda^2_{i_1jk_1} \lambda^2_{i_2 j k_2}
\includegraphics[valign=c]{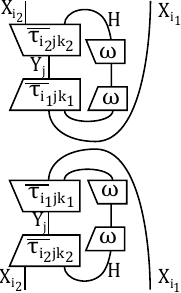}
\\&~~~~~~~~~=~~
\sum_{\substack{j \in J \\ k_1, k_2 \in [\dim(E_{i_1j})]}} 
\frac{\delta_{i_1,i_2}\lambda^2_{i_1jk_1} \lambda^2_{i_1 j k_2}}{\dim(X_{i_1})^2}
\includegraphics[valign=c]{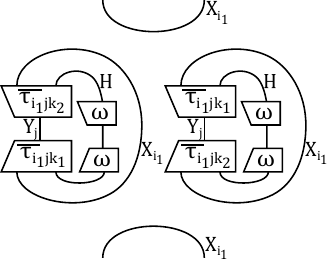}
\end{align*}
By positivity, this equation is satisfied precisely when 
\begin{align*}
\includegraphics[valign=c]{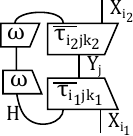}
~~=~~\delta_{i_1,i_2}\nu_{i_1jk_1k_2}
\includegraphics[valign=c]{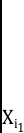}
\end{align*}
for some $\nu_{i_1 j k_1 k_2} \in \mathbb{C}$. Moving back to the shaded calculus, this is precisely to say that there is a 2-morphism $\nu: E^* \to E^*$ satisfying the following equation:
\begin{align}\label{eq:entrevpurecond}
\includegraphics[valign=c]{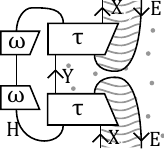}
~~=~~
\includegraphics[valign=c]{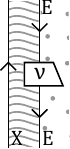}
\end{align}
If such a $\nu$ exists, we observe that it has full support. Indeed, suppose that this is not the case; then there is some projection $k \in \End(E_{\tau})$ such that $\nu \circ k^T = 0$. By~\eqref{eq:entrevpurecond} and invertibility of $\omega$ this implies that $(\mathbbm{1}_Y \otimes k) \circ \tau = 0$. But this contradicts the fact that $\tau$ is a minimal dilation. We can therefore define $\kappa:= n_X^{-1/2} \otimes \nu^{-1/2} \otimes n_Y^{1/2}$, and then the second morphism of~\eqref{eq:biisomeqs} will be an isometry.

\vspace{.2cm}
\noindent
\emph{Proof of 2.} For the convenience of the reader we restate the claim in (2): 
\begin{quote}
    Suppose that $(M,H)$ is entanglement-reversible with respect to $W$. Then $\dim(A) \leq \dim(B)$. The isometries~\eqref{eq:biisomeqs} are unitary precisely when $\dim(A) = \dim(B)$; in this case the entanglement-left inverse $N: B \otimes B(H) \to A$ is uniquely defined, with the following minimal dilation:
\begin{align}\label{eq:entinvgenmindil}
\includegraphics[valign=c]{pictures/entinvchans/entrevdef.pdf}
\end{align}
\end{quote}

We will first show that entanglement-reversibility implies $\dim(A) \leq \dim(B)$, with unitarity of the 2-morphisms~\eqref{eq:biisomeqs} iff this is an equality. For conciseness we will write $I: H \otimes X \to Y \otimes E$ for the first isometric 2-morphism in~\eqref{eq:biisomeqs}, and $I': X \otimes E^* \to H \otimes Y$ for the second isometric 2-morphism in~\eqref{eq:biisomeqs}. We have the following equations for $I$:
\begin{align*}
\includegraphics[valign=c]{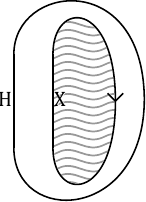}
~~=~~
\includegraphics[valign=c]{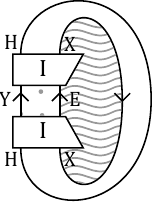}
~~=~~
\includegraphics[valign=c]{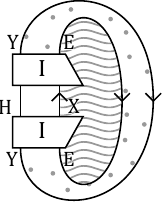}
\\
\includegraphics[valign=c]{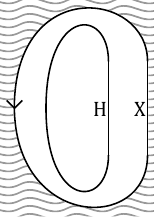}
~~=~~
\includegraphics[valign=c]{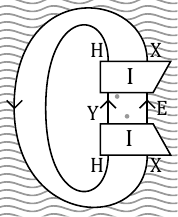}
~~=~~
\includegraphics[valign=c]{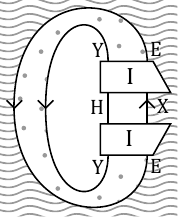}
\end{align*}
Here on each line the first equality is by the definition of an isometry and the second equality is by isotopy, pulling the $I$-box round the loop. 
Choosing indices $i \in [m], j \in [n]$ and using~\eqref{eq:tracedef}, we see that these lines reduce to: 
\begin{align*}
\sum_{i \in [m]}\dim(H) \dim(X_i) = \sum_{(i,j) \in [m] \times [n]} \Tr(I_{ij} I_{ij}^{\dagger}) &\leq \sum_{(i,j) \in [m] \times [n]} \dim(Y_j) \dim(E_{ji})
\\
\dim(H) \dim(X_i) = \sum_{j \in [n]} \Tr(I_{ij} I_{ij}^{\dagger}) &\leq \sum_{j \in [n]} \dim(Y_j) \dim(E_{ji})
\end{align*}
Here in both lines the inequality comes from $I_{ij} I_{ij}^{\dagger} \leq \mathbbm{1}_{Y_{j} \otimes E_{ji}}$, which follows from the fact that $I_{ij}$ is an isometry. We can do the same thing for the isometry $I'$. Altogether, we obtain four inequalities for $\dim(H)$:
\begin{align}\label{eq:topinequality}
\frac{\sum_{(i, j) \in [m] \times [n]} \dim(X_i) \dim(E_{ji})}{\sum_{j \in [n]}\dim(Y_j)} \leq \dim(H) \geq \frac{\sum_{i \in [m]} \dim(X_{i}) \dim(E_{ji})}{\dim(Y_j)}~~\forall~j \in [n]
\\\label{eq:bottominequality}
\frac{\sum_{(i,j) \in [m] \times [n]} \dim(Y_j) \dim(E_{ji})}{\sum_{i \in [m]}\dim(X_i)} \geq \dim(H) \leq \frac{\sum_{j \in [n]} \dim(Y_j) \dim(E_{ji})}{\dim(X_i)}~~\forall~i \in [m]
\end{align}
We will first make the assumption that $\dim(X_i) =: d_X$ and $\dim(Y_j) =: d_Y$ do not vary over $i \in [m], j \in [n]$; we will then extend this to the general result. Now, starting from the left inequality of~\eqref{eq:topinequality}:
\begin{align}\nonumber
\dim(H) \geq 
\frac{\sum_{i,j} d_X \dim(E_{ji})}{\sum_{j} d_Y} = \frac{d_X^2}{n d_Y^2} \sum_{ij} \frac{d_Y \dim(E_{ji})}{d_X} &\geq \frac{d^2_X}{nd^2_Y} \sum_{i \in I} \dim(H) 
\\\label{eq:exploringinequalities}&~~~= \frac{m d^2_X}{nd^2_Y} \dim(H) = \frac{\dim(A)}{\dim(B)} \dim(H) 
\end{align}
Here we used the right inequality of~\eqref{eq:bottominequality}. It follows immediately that $\dim(A) \leq \dim(B)$. Now by positivity and faithfulness of the standard trace~\eqref{eq:standard}, the isometric 2-morphisms~\eqref{eq:biisomeqs} are unitary iff one inequality from~\eqref{eq:topinequality} and one equality from~\eqref{eq:bottominequality} are equalities (it will then follow that all the inequalities are equalities). We have seen by~\eqref{eq:exploringinequalities} that if $\dim(A) = \dim(B)$ then the left inequality of~\eqref{eq:topinequality} is an equality; it may be shown similarly that the left inequality of~\eqref{eq:bottominequality} is an equality also, and so the 2-morphisms~\eqref{eq:biisomeqs} are unitary. On the other hand, if $\dim(A) \neq \dim(B)$, then by~\eqref{eq:exploringinequalities} either the top left or the bottom right inequality must be strict, and so at least one of the 2-morphisms~\eqref{eq:biisomeqs} is not unitary.

We now remove the assumption that $\dim(X_i)$ and $\dim(Y_j)$ do not vary over $i \in [m]$, $j \in [n]$. By Lemma~\ref{lem:qbijexists}, there exist quantum bijections $(O,H_O): [\dim(A)] \to A$ and $(P,H_P): B \to{} [\dim(B)]$. The composition $(P \circ M \circ O, H_O \otimes H \otimes H_P)$ is an entanglement-reversible channel $[\dim(A)] \to{} [\dim(B)]$; this satisfies the assumptions we made in the last paragraph since all the factors in the source and target are one-dimensional. We therefore have $\dim(A) \leq \dim(B)$. Now let $d: Y \to Z_{B} \otimes E_P$ and $e: Z_{A} \to X \otimes E_O$ be minimal dilations of $P$ and $O$ respectively. By Proposition~\ref{prop:qbijmindil}, these minimal dilations obey the equations~\eqref{eq:qbijbiueqs}. Using this fact we will show that $d \circ \tau \circ e$ is a minimal dilation for $P \circ M \circ O$ (recall the definition of minimality from Remark~\ref{rem:mindil}). In the following equation we have shaded the region corresponding to $[\dim(A)]$ with tiny dots and the region corresponding to $[\dim(B)]$ with a checkerboard pattern:
\begin{calign}
\includegraphics[valign=c]{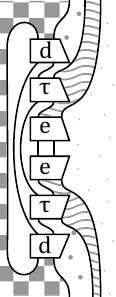}
~~=~~
\includegraphics[valign=c]{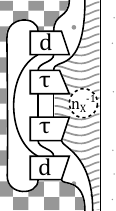}
~~=~~
\includegraphics[valign=c]{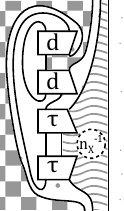}
~~=~~
\includegraphics[valign=c]{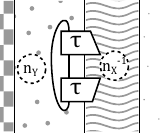}
\end{calign}
Here the for the first equality we use unitarity of the left 2-morphism of~\eqref{eq:qbijbiueqs} for $e$; the second equality is by isotopy, pulling $d^{\dagger}$ around the loop; and for the third equality we use unitarity of the right 2-morphism of~\eqref{eq:qbijbiueqs} for $d$. Clearly, the final expression is invertible, since $\tau$ is a minimal dilation; therefore the dilation $d \circ \tau \circ e$ is minimal.
The result of the last paragraph therefore applies, and the following 2-morphisms are unitary iff $\dim(A) = \dim(B)$:
\begin{align}\label{eq:compobiueqs}
\includegraphics[valign=c]{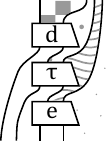}
&&
\includegraphics[valign=c]{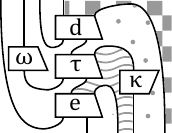}
\end{align}
But by the biunitarity equations~\eqref{eq:qbijbiueqs} for $e$ and $d$, it is clear that the 2-morphisms~\eqref{eq:compobiueqs} are unitary if and only if the 2-morphisms~\eqref{eq:biisomeqs} are. We have therefore extended the results of the last paragraph to the general case. 

Finally, we must show that the entanglement-left inverse has the minimal dilation~\eqref{eq:entinvgenmindil}. Let $\sigma: Y \otimes E_{\sigma} \to X$ be a minimal dilation of the entanglement-inverse. Then assuming entanglement-reversibility we obtain the following implication:
\begin{align}\label{eq:sigmaimplication}
\includegraphics[valign=c]{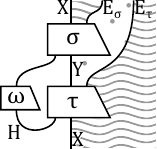}
~~=~~
\includegraphics[valign=c]{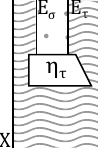}
&& \Rightarrow &&
\includegraphics[valign=c]{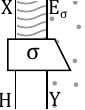}
~~=~~
\includegraphics[valign=c]{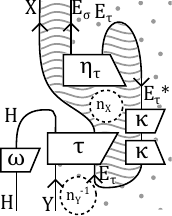}
\end{align}
Here $\eta_{\tau}: \id_{[m]} \to E_{\sigma} \otimes E_{\tau}$ is some isometry. We discussed how the equation on the LHS of the implication precisely corresponds to  entanglement-reversibility at the beginning of the proof of Proposition~\ref{prop:qbijmindil}. The implication follows by bending the $E_{\tau}$-wire down, precomposing by $\mathbbm{1}_X \otimes \kappa^{\dagger}$, then precomposing with the dagger of the rightmost 2-morphism of~\eqref{eq:biisomeqs} and using unitarity of that 2-morphism. Now, since two dilations related by an isometry on the environment are equivalent, and the 2-morphism on the RHS of~\eqref{eq:sigmaimplication} differs from~\eqref{eq:entinvgenmindil} only by a 2-morphism on the environment, we need only show that that 2-morphism is an isometry, which is seen as follows:
\begin{align*}
\includegraphics[valign=c]{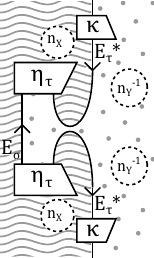}
~~=~~
\includegraphics[valign=c]{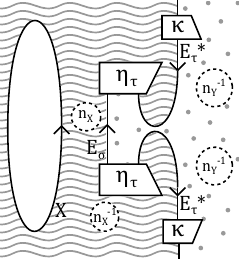}
~~=~~
\includegraphics[valign=c]{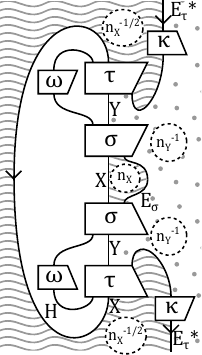}
~~=~~
\includegraphics[valign=c]{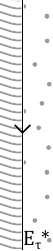}
\end{align*}
Here for the first equality we used $d_X = n_X^2$; for the second equality we used the first equation of~\eqref{eq:sigmaimplication} and its dagger; and for the third equality we used trace preservation~\eqref{eq:tracepresisom} for $\sigma$ and the fact that the rightmost morphism of~\eqref{eq:biisomeqs} is an isometry.

\vspace{.2cm}
\noindent
\emph{Proof of 3.} For the convenience of the reader we restate the claim in (3):
\begin{quote}
    The channel $(M,H)$ is entanglement-invertible with respect to $W$ precisely when the following conditions are satisfied:
\begin{itemize}
\item $(M,H)$ is a quantum bijection.
\item The linear map $\omega^{\dagger} \circ \omega :H \to H$ is an intertwiner $(M,H) \to (M,H)$.
\end{itemize}
\end{quote}
For the first direction, let us suppose that $(M,H)$ is a quantum bijection and that $\omega^{\dagger} \circ \omega$ is an intertwiner $(M,H) \to (M,H)$; we will then show that $M$ is entanglement-invertible with respect to $W$. 

First observe the following:
\begin{align}\label{eq:proof3lem1}
\includegraphics[valign=c]{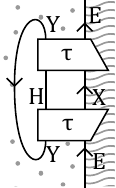}
~~=~~
\includegraphics[valign=c]{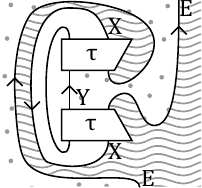}
~~=~~
\includegraphics[valign=c]{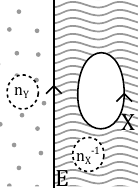}
\end{align}
Here the first equality is by isotopy (pull $\tau^{\dagger}$ around the $Y$-loop); the second equality is by the fact that the second morphism of~\eqref{eq:qbijbiueqs} is an isometry. 

Now consider the following positive element $x \in \End(E)$:
\begin{align*}
\includegraphics[valign=c]{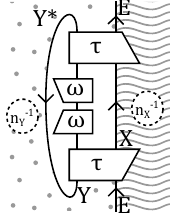}
\end{align*}
This is invertible by invertibility of $\omega$, minimality of the dilation $\tau$, and the intertwiner condition~\eqref{eq:intertwinerdef}. To see this, compose with the same element but with $\omega^{-1}$ and use the intertwiner condition to get rid of the $\omega$'s; then use invertibility of~\eqref{eq:mindil2morph}. We define $\kappa^T \in \End(E)$ to be the inverse of the positive square root of $x$, so that $x=(\kappa^{-1})^T (\kappa^{-1})^*$.

Now we have the following equation:
\begin{align}\label{eq:kappapullthrough}
\includegraphics[valign=c]{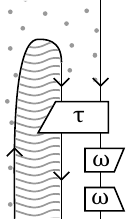}
~~=~~
\includegraphics[valign=c]{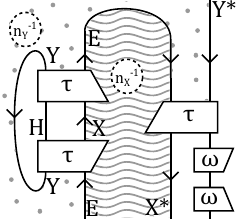}
~~=~~
\includegraphics[valign=c]{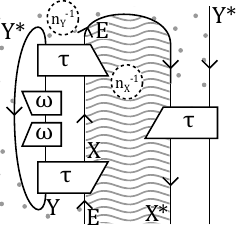}
~~=~~
\includegraphics[valign=c]{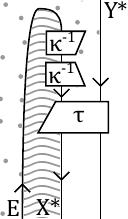}
\end{align}
Here the first equality is by~\eqref{eq:proof3lem1}; the second equality is by the fact that $\omega^{\dagger} \omega$ is an intertwiner~\eqref{eq:intertwinerdef}; and the third equality is by definition of $\kappa$. 

We can now show that $(M,H)$ is entanglement-invertible w.r.t. $W$. We will begin by showing entanglement-reversibility, which by Part 1 corresponds to showing that the second 2-morphism of~\eqref{eq:biisomeqs} is an isometry:
\begin{align*}
\includegraphics[valign=c]{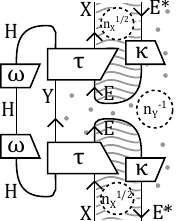}
~~=~~
\includegraphics[valign=c]{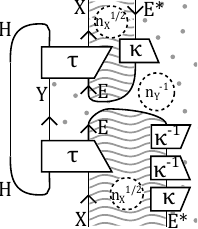}
~~=~~
\includegraphics[valign=c]{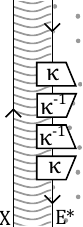}
~~=~~
\includegraphics[valign=c]{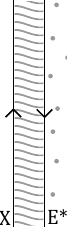}
\end{align*}
Here the first equality is by~\eqref{eq:kappapullthrough}; the second equality is by unitarity of the second 2-morphism of~\eqref{eq:qbijbiueqs}; and the third equality is by positivity of $\kappa$.

Now we know that $\dim(A) = \dim(B)$, so by Part 2 it follows that the 2-morphisms~\eqref{eq:biisomeqs} are unitary and the entanglement-left inverse has minimal dilation~\eqref{eq:entinvgenmindil}.

For this to be an entanglement-inverse we need to show the second equation of~\eqref{eq:entreventinv}. In terms of the dilations this is seen as follows:
\begin{align*}
\includegraphics[valign=c]{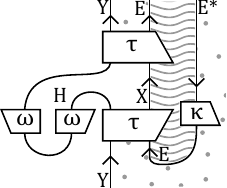}
~~=~~
\ignore{
\includegraphics[valign=c]{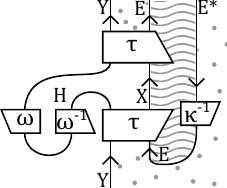}
~~=~~}
\includegraphics[valign=c]{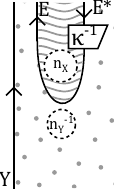}
\end{align*}
Here the equality is by~\eqref{eq:kappapullthrough} and the fact that the first 2-morphism of~\eqref{eq:qbijbiueqs} is a coisometry.
We have therefore shown that $(M,H)$ is entanglement-invertible w.r.t. $W$.

In the other direction, suppose that $(M,H)$ is entanglement-invertible w.r.t. $W$; we will show that $(M,H)$ is a quantum bijection and that $\omega^{\dagger} \omega$ is an intertwiner. Since there is an entanglement-reversible channel $A \to B$ and also an entanglement-reversible channel $B \to A$, we must have $\dim(A) = \dim(B)$ by Part 2. We therefore also know  from Part 2 that the minimal dilation $\sigma$ of the entanglement-inverse has the following form:
\begin{align}\label{eq:inproofentinvdil}
\includegraphics[valign=c]{pictures/entinvchans/entrevdef.pdf}
\end{align} 
In terms of the dilations, the second equation of~\eqref{eq:entreventinv} is as follows:
\begin{align}\label{eq:entinvkappa}
\includegraphics[valign=c]{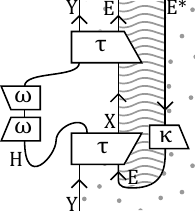}
~~=~~
\includegraphics[valign=c]{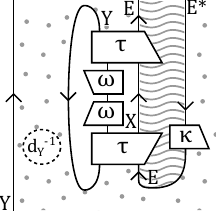}
~~=~~
\includegraphics[valign=c]{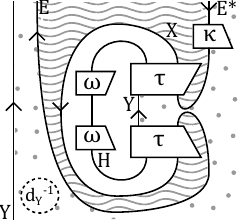}
~~=~~
\includegraphics[valign=c]{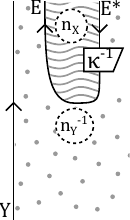}
\end{align}
Here the first equality is by the standard argument we made at the beginning of the proof of Proposition~\ref{prop:qbijmindil} (note that we computed the isometry $\eta_{\sigma}: \id_{[n]} \to E \otimes E^*$ by tracing out the $Y$-wire); the second equality is by a topological manipulation, pulling $\tau^{\dagger}$ around the $Y$-wire; and the third equality is by the fact that the second morphism of~\eqref{eq:biisomeqs} is an isometry.

The entanglement-inverse channel with minimal dilation~\eqref{eq:inproofentinvdil} is itself entanglement-invertible, and by Part 2 the minimal dilation of its entanglement-inverse can be obtained by inserting~\eqref{eq:inproofentinvdil} into~\eqref{eq:entinvgenmindil}. But we know that the entanglement-inverse of the entanglement-inverse is the original channel, so by uniqueness of the minimal dilation up to a unitary on the environment we obtain the following equation for $\tau$, where $\kappa' \in \End(E^*)$ is some isomorphism: 
\begin{align}\label{eq:kappadouble}
\includegraphics[valign=c]{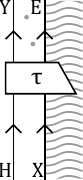}
~~=~~
\includegraphics[valign=c]{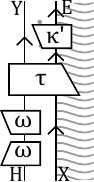}
\end{align}
Precomposing~\eqref{eq:kappadouble} by $\tau^{\dagger}$, and using~\eqref{eq:entinvkappa} and unitarity of the first 2-morphism of~\eqref{eq:biisomeqs}, we obtain $\kappa' = \kappa^{\dagger} \kappa$. Now we observe that the rightmost 2-morphism of~\eqref{eq:qbijbiueqs} is an isometry:
\begin{align*}
\includegraphics[valign=c]{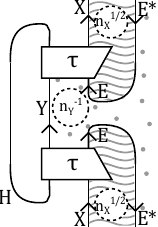}
~~=~~
\includegraphics[valign=c]{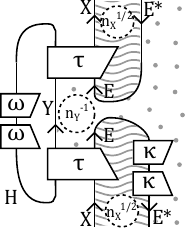}
~~=~~
\includegraphics[valign=c]{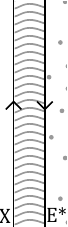}
\end{align*}
Here the first equality is by~\eqref{eq:kappadouble}, and the second equality is by the fact that the second morphism of~\eqref{eq:biisomeqs} is an isometry. Since $\dim(A) = \dim(B)$ we have by Part 2 that the morphisms~\eqref{eq:qbijbiueqs} are furthermore unitary, and therefore $(M,H)$ is a quantum bijection by Proposition~\ref{prop:qbijmindil}. Finally, we see by~\eqref{eq:kappadouble} that $\omega^{\dagger} \circ \omega$ is an intertwiner $(M,H) \to (M,H)$:
\begin{align*}
\includegraphics[valign=c]{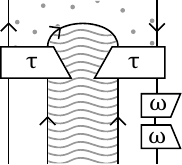}
~~=~~
\includegraphics[valign=c]{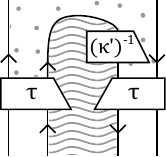}
~~=~~
\includegraphics[valign=c]{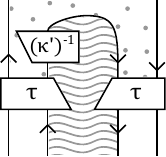}
~~=~~
\includegraphics[valign=c]{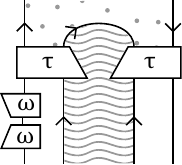}
\end{align*}
Here the first and third equalities are by~\eqref{eq:kappadouble}, and the second equality is by isotopy of the diagram. 
\end{proof}

\paragraph{Result for mixed states.} We now generalise the result to mixed states. First, some definitions. Let $M: A \otimes B(H_1) \to B$ be a channel, and let $\tau: H_1 \otimes X \to Y \otimes E$ be a minimal dilation. Let $W: \mathbb{C} \to B(H_1) \otimes B(H_2)$ be a state. Now $W$ is a convex combination of pure states $\{W_i\}_{i \in I}$, each of which is defined by some $\omega_i: H_1 \to H_2$; as before, let $\{\iota_i, q_i\}_{i \in I}$ be the isometries and coisometries such that $\omega_i = \iota_i \circ \bar{\omega}_i \circ q_i$, with $\bar{\omega}_i$ invertible, and let $\bar{W_i}: \mathbb{C} \to  B(H_1/\Ker(\omega_i)) \otimes B(\Image(\omega_i))$ be the pure states defined by $\bar{\omega}_i$. For each $i \in I$, let $\bar{M}_i: A \otimes B(H_1 / \Ker(\omega_i)) \to B$ be the channel whose dilation is a scalar multiple of $\tau \circ (q_i^T \otimes \mathbbm{1}_X)$ (where the scalar multiplier is chosen such that the trace preservation condition~\eqref{eq:tracepresisom} is satisfied).
\begin{corollary}\label{cor:genmixed}
Using the definitions and notation from the previous paragraph:
\begin{enumerate}
\item The channel $(M,H_1)$ is entanglement-reversible with respect to $W$ precisely when there exist some $2$-morphisms $\nu_{ij}: E^* \to E^*$ such that, for all $i,j \in I$:
\begin{align}\label{eq:biisomixed}
\includegraphics[valign=c]{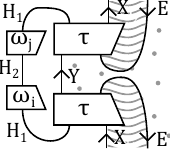}
~~=~~
\includegraphics[valign=c]{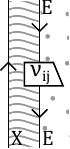}
\end{align}
In this case, $\dim(A) \leq \dim(B)$.
\item The channel $(M,H_1)$ is entanglement-invertible with respect to $W$ precisely when the following conditions hold:
\begin{itemize}
\item $(M,H_1)$ is entanglement-reversible with respect to $W$.
\item Each of the channels $\bar{M}_i$ is entanglement-invertible with respect to the state $\bar{W}_i$.
\end{itemize}
\end{enumerate}
\end{corollary}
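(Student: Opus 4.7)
My plan is to mirror the proof of Theorem~\ref{thm:genpure} while tracking the convex decomposition $W = \sum_i p_i W_i$ throughout; the pair-indexing in~\eqref{eq:biisomixed} will emerge from the bilinearity in $\omega_i, \omega_j^\dagger$ of the natural positive element whose support controls reversibility. For Part 1, I would first reprise the argument of Theorem~\ref{thm:genpure}(1), rewriting entanglement-reversibility of $(M, H_1)$ with respect to $W$ as discreteness of the confusability graph of the channel analogous to~\eqref{eq:chantoreverse} (now with $W$ mixed), hence as a support-equals-projection condition on a positive element built from $\tau$ and $W$. Expanding this positive element via the convex decomposition of $W$, the quadratic form $T^\dagger T$ structure produces cross terms indexed by pairs $(i,j)$, and rewriting the support condition as invariance under conjugation by the discrete-graph projection (as in~\eqref{eq:supportconjeq}) yields the pair-indexed family~\eqref{eq:biisomixed}, with $\nu_{ij}$ arising as matrix elements of a single invertible operator playing the role of $\kappa$ in the pure case. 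The bound $\dim(A) \le \dim(B)$ I would obtain by purifying $W$ as a pure state $\tilde W$ defined by some $\tilde\omega: H_1 \to H_2 \otimes E_W$: the channel $\tilde N := N \circ (\id_{B \otimes B(H_2)} \otimes \Tr_{E_W})$ is an entanglement-left inverse for $\tilde W$ whenever $N$ is one for $W$, so applying Theorem~\ref{thm:genpure}(2) to the pure purification yields the bound.

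For Part 2, the forward direction would follow by decomposing the second equation of~\eqref{eq:entreventinv} over the convex decomposition of $W$ and restricting along the isometries $\iota_i$ and coisometries $q_i$ defining $\bar\omega_i$, yielding (as in Lemma~\ref{lem:entrevquotient}) the entanglement-invertibility equation for each $\bar M_i$ with respect to $\bar W_i$. For the reverse direction, entanglement-reversibility of $(M, H_1)$ with respect to $W$ (from Part 1) supplies a candidate inverse $N$ with a specific minimal dilation generalizing~\eqref{eq:entinvgenmindil}; I would then verify the second equation of~\eqref{eq:entreventinv} by expanding it over pure components and applying Theorem~\ref{thm:genpure}(3) to each, which reduces each summand to $p_i \id_B$, whose sum is $\id_B$.

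The main obstacle is correctly identifying the pair-indexed family~\eqref{eq:biisomixed} from the single support condition of Part 1: the matrix elements $\nu_{ij}$ must be extracted systematically from the positive element, and the joint family of equations shown to be equivalent to (not merely a consequence of) the support condition, with appropriate positivity and invertibility of the assembled operator analogous to $\kappa$ in the pure case. A secondary technical point is to check that the candidate $N$ in Part 2's reverse direction restricts correctly along each pure component to enable the application of Theorem~\ref{thm:genpure}(3).
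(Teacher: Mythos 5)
Your overall strategy matches the paper's: Part 1 is the confusability-graph/support argument of Theorem~\ref{thm:genpure}(1) with the relevant positive element expanded over the convex decomposition of $W$, producing the cross terms indexed by pairs $(i,j)$; Part 2 reduces to the pure case via Lemma~\ref{lem:entrevquotient}. Your purification route to $\dim(A)\leq\dim(B)$ differs from the paper's (which passes to a single pure component $W_i$ and applies Lemma~\ref{lem:entrevquotient} together with Theorem~\ref{thm:genpure}(2)), but it works, provided you first apply Lemma~\ref{lem:entrevquotient} to the purification as well, since $\tilde\omega\colon H_1\to H_2\otimes E_W$ is not invertible and Theorem~\ref{thm:genpure} is only stated for invertible $\omega$.

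The one substantive issue is the reverse direction of Part 2, which you defer as a ``secondary technical point'' when it is in fact the load-bearing step. Knowing that each $\bar M_i$ is entanglement-invertible w.r.t.\ $\bar W_i$ tells you only that \emph{some} entanglement-inverse exists for each $i$; by itself this does not reduce your summands to $p_i\,\id_B$, because the second equation of~\eqref{eq:entreventinv} must hold for the \emph{single} channel $N$ obtained from reversibility w.r.t.\ $W$. The paper closes this in two steps. First, since the identity channel is an extreme point of the convex set of channels, reversibility w.r.t.\ $W=\sum_i p_iW_i$ forces $N$ to be an entanglement-left inverse w.r.t.\ every $W_i$ separately (this same extremality is also what licenses ``decomposing over the convex decomposition'' in your forward direction); restricting along $\iota_i$ then makes $\bar N_i$ an entanglement-left inverse of $\bar M_i$ w.r.t.\ $\bar W_i$. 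Second, $\dim(A)=\dim(B)$ because each $\bar M_i$ is a quantum bijection by Theorem~\ref{thm:genpure}(3), so by Theorem~\ref{thm:genpure}(2) the entanglement-left inverse is \emph{unique}; hence $\bar N_i$ coincides with the entanglement-inverse guaranteed by hypothesis and satisfies the second equation, which lifts back to $N$ and $W_i$ and sums to give the claim. Relatedly, you should drop the assertion that reversibility w.r.t.\ mixed $W$ supplies a left inverse with ``a specific minimal dilation generalizing~\eqref{eq:entinvgenmindil}'': the paper's remark following the corollary notes that uniqueness of the left inverse (and hence any canonical form for its dilation) fails for mixed $W$, and the argument above does not need it.
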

\begin{proof}
We prove the statements in order.

\vspace{.2cm}
\noindent
\emph{Proof of 1.} 
The proof is similar to the proof of Part 1 of Theorem~\ref{thm:genpure}. The confusability graph of~\eqref{eq:chantoreverse} is now as follows:
\begin{align*}
\textrm{supp}~\left(\sum_{i,j}
\includegraphics[valign=c]{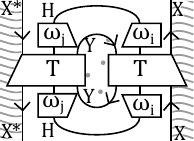}\right)
\end{align*}
The same argument as in the proof of Theorem~\ref{thm:genpure} then shows that reversibility of~\eqref{eq:chantoreverse} is equivalent to the existence of some $2$-morphisms $\nu_{ij}: E^* \to E^*$ such that the equation~\eqref{eq:biisomixed} is obeyed.

\vspace{.2cm}
\noindent
\emph{Proof of 2.}  Clearly, $(M,H_1): A \to B$ is entanglement-invertible w.r.t. $W$ precisely when there exists a channel $(N,H_2): B \to A$ which is an entanglement-inverse for $(M,H_1)$ w.r.t. all the $W_i$. 

We show that the stated conditions imply this. Since $(M,H_1)$ is entanglement-reversible w.r.t. $W$, there is a channel $(N,H_2): B \to A$ which is an entanglement-left inverse for $(M,H_1)$ w.r.t. all the $W_i$. Let the minimal dilation of $N$ be $\sigma: H_2 \otimes Y \to X \otimes E$. The entanglement-reversibility equation gives the following equations for the dilations:
\begin{align}\label{eq:mixedentrevdil}
\includegraphics[valign=c]{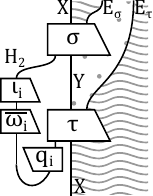}
~~=~~
\includegraphics[valign=c]{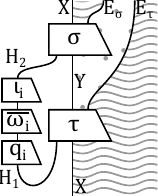}
~~=~~
\includegraphics[valign=c]{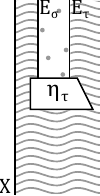}
\end{align}
Let $\bar{N}_i: B \otimes B(\Image(\omega_i)) \to A$ be the channel whose dilation is a scalar multiple of $\sigma \circ (\iota_i \otimes \mathbbm{1}_Y)$. The equation~\eqref{eq:mixedentrevdil} tells us that $\bar{N}_i$ is an entanglement-left inverse for $\bar{M}_i$ w.r.t. $\bar{W}_i$. Since $\dim(A) = \dim(B)$, this entanglement-left inverse channel is uniquely defined by Part 2 of Theorem~\ref{thm:genpure}. Since we have assumed that $\bar{M}_i$ is entanglement-invertible w.r.t $\bar{W}_i$, $\bar{N}_i$ must be the entanglement-inverse for $\bar{M}_i$ w.r.t. $\bar{W}_i$; thus the following equation is also satisfied:
\begin{align*}
\includegraphics[valign=c]{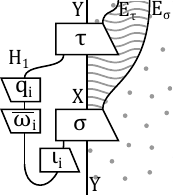}
~~=~~
\includegraphics[valign=c]{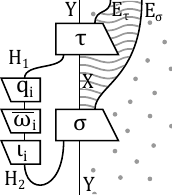}
~~=~~
\includegraphics[valign=c]{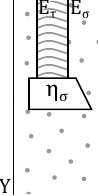}
\end{align*}
But this equation (looking at the second and third terms) says precisely that $(N,H_2)$ is an entanglement-inverse for $M$ w.r.t. $W_i$.

In the other direction, suppose that there exists a channel $(N,H_2)$ which is an entanglement-inverse for $(M,H_1)$ w.r.t. all the $W_i$. Then $(M,H_1)$ is entanglement-reversible w.r.t. $W$ by definition. The other conditions follow by Lemma~\ref{lem:entrevquotient}.
\end{proof}
\begin{remark}
The reader will observe that we made no statement about uniqueness of the entanglement-left inverse in Corollary~\ref{cor:genmixed}. This is because we do not have uniqueness even for pure $W$ when $\omega$ is not invertible, since the extension in the proof of Lemma~\ref{lem:entrevquotient} is non-unique.
\end{remark}

\subsubsection{Example: Werner's classification of tight teleportation and dense coding schemes}\label{sec:wernerex}

Finally, it may be useful, particularly for readers unfamiliar with the graphical techniques used in this work, to see how Theorem~\ref{thm:genpure} implies Werner's classification of tight teleportation and dense coding schemes in terms of unitary error bases~\cite{Werner2001}.

\begin{definition}
Let $H$ be a Hilbert space of dimension $d$. 
\begin{itemize}
\item A \emph{tight teleportation scheme} is a pair $(W,M)$ of a state $W: \mathbb{C} \to B(H) \otimes B(H)$ and a channel $(M,H): B(H) \to{} [d^2]$ which is entanglement-reversible with respect to $W$.
\item A \emph{tight dense coding scheme} is a pair $(W,N)$ of a state $W: \mathbb{C} \to B(H) \otimes B(H)$ and a channel  $(N,H):  [d^2] \to B(H)$ which is entanglement-reversible with respect to $W$.
\end{itemize}
\end{definition}
\begin{example}[Unitary error bases]\label{ex:uebchannels}
A \emph{unitary error basis} for a Hilbert space $H$ of dimension $d$ is a basis $\{U_i\}_{i \in [d^2]}$ of unitary operators on $H$ orthogonal under the trace inner product, i.e. $\frac{1}{d}\Tr(U_j^{\dagger} U_i) = \delta_{ij}$. From a unitary error basis $\{U_i\}_{i \in [d^2]}$, we construct two channels.
\begin{itemize}
\item The channel $M: B(H) \otimes B(H) \cong B(H \otimes H) \to{} [d^2]$ is defined by a complete projective measurement in the orthonormal basis $\{\frac{1}{\sqrt{\dim(H)}} (U_i \otimes \mathbbm{1})\ket{\eta_H}\}_{i \in [d^2]}$ of $H \otimes H$.
\item The channel $N: [d^2] \otimes B(H) \to B(H)$ is a controlled unitary operation, where the classical control $i \in [d^2]$ corresponds to the unitary $U_i^{\dagger}$.
\end{itemize}
These channels are quantum bijections. The channel $(M,H)$ therefore specifies a tight teleportation scheme, and the channel $(N,H)$ a tight dense coding scheme. Moreover, $(N,H)$ is the (unique, by Theorem~\ref{thm:genpure}) entanglement-inverse of $(M,H)$. 
\end{example}
\noindent
We will prove the following result as a corollary of Theorem~\ref{thm:genpure}.
\begin{corollary}[{\cite[Thm. 1]{Werner2001}}]
The following statements hold:
\begin{itemize}
\item Let $(W,M)$ be a tight teleportation scheme. Then $W$ is a maximally entangled pure state and $(M,H): B(H) \to{} [d^2]$ is a quantum bijection defined by a unitary error basis as in Example~\ref{ex:uebchannels}.
\item Let $(W,N)$ be a tight dense coding scheme. Then $W$ is a maximally entangled pure state and $(N,H): [d^2] \to B(H)$ is a quantum bijection defined by a unitary error basis as in Example~\ref{ex:uebchannels}.
\end{itemize}
This yields a bijection between tight dense coding schemes and tight teleportation schemes. 
\end{corollary}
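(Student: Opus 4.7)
The approach is to deduce the corollary from Theorem~\ref{thm:genpure} and Corollary~\ref{cor:genmixed}, focusing on the tight teleportation case; the dense coding statement and the claimed bijection between tight teleportation and tight dense coding schemes will then follow by passing to the entanglement-inverse, using Theorem~\ref{thm:qbijmindil} to see that this operation preserves quantum-bijection structure. Let $(W, M)$ be a tight teleportation scheme, so $M: B(H) \otimes B(H) \to [d^2]$ with $\dim(H) = d$; setting $A := B(H)$ and $B := [d^2]$, we have $\dim(A) = \dim(B) = d^2$. Write $W = \sum_i p_i W_i$ as a convex decomposition into pure states, each defined by a linear map $\omega_i: H \to H$. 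Corollary~\ref{cor:genmixed}~Part~1 gives that each quotient channel $\bar{M}_i: A \otimes B(H/\Ker \omega_i) \to B$ is entanglement-reversible w.r.t.\ the pure state $\bar{W}_i$; since $\dim(A) = \dim(B)$, Theorem~\ref{thm:genpure}~Part~2 upgrades this to entanglement-invertibility with a uniquely determined inverse of the form~\eqref{eq:entinvgenmindil}, and Part~3 asserts that each $(\bar{M}_i, H/\Ker \omega_i)$ is a quantum bijection $A \to B$ with $\bar{\omega}_i^{\dagger} \bar{\omega}_i$ an intertwiner.

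The next step is a dimension analysis of the biunitarity conditions~\eqref{eq:qbijbiueqs} for a quantum bijection $A \to B$ with auxiliary Hilbert space $K$ and environment $E = (E_j)_{j \in [d^2]}$. The second biunitarity, applied at each output index $j$ (where $\dim(Y_j) = 1$ since $B$ is commutative), yields $\dim(K) = d \cdot \dim(E_j)$; hence $\dim(E_j)$ is independent of $j$ and $\dim(K) \in d\mathbb{Z}_{>0}$. Applied to $(\bar{M}_i, H/\Ker \omega_i)$ with $\dim(H/\Ker \omega_i) \leq d$, this forces $\dim(H/\Ker \omega_i) = d$; i.e., $\omega_i$ is invertible, so $\bar{M}_i = M$ and $\bar{W}_i = W_i$. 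Hence $(M, H)$ is itself a quantum bijection, and each $\omega_i^{\dagger} \omega_i$ is an intertwiner of it. The same dimension count forbids any nontrivial direct-sum decomposition of $(M, H)$---both summands would require auxiliary dimension in $d\mathbb{Z}_{>0}$ yet their sum must equal $d$---so $(M, H)$ is simple. Schur's lemma in the semisimple $C^*$-category of quantum bijections (via the classification in~\cite{Musto2019, Verdon2020}) then gives $\omega_i^{\dagger} \omega_i = c_i \mathbbm{1}_H$, so each $\omega_i$ is a positive scalar multiple of a unitary $U_i$ and each $W_i$ is a maximally entangled pure state.

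Corollary~\ref{cor:genmixed}~Part~2 now lifts $(M, H)$ to entanglement-invertibility w.r.t.\ the mixed $W$, yielding an entanglement-inverse $N$. Expanding the defining identity $N \circ (M \otimes \id_{B(H)}) \circ (\id_A \otimes W) = \id_A$ over $W = \sum_i p_i W_i$ and invoking the extremality of $\id_A$ in the convex set of channels $A \to A$ (which follows from extremality of pure states in the state space of $A$), every summand equals $\id_A$; so $N$ is the entanglement-inverse of $M$ with respect to each individual $W_i$. By the uniqueness claim in Theorem~\ref{thm:genpure}~Part~2, the minimal dilation of $N$ has the form~\eqref{eq:entinvgenmindil} with $\omega = \omega_i$; since the $\omega_i^{-1}$ box acts on the $H$-wire, which is disjoint from the environment $E$ on which the minimal-dilation ambiguity lives, comparing these dilations across $i$ forces all $U_i$ to coincide up to phase, so all $W_i$ agree and $W$ is a single maximally entangled pure state. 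Finally, the minimal dilation $\tau: H \otimes H \to Y \otimes E$ of $(M, H)$ has each $E_i \cong \mathbb{C}$ by the dimension count, so $\tau_i = \bra{B_i}$ for an orthonormal basis $\{\ket{B_i}\}_{i \in [d^2]}$ of $H \otimes H$; the second biunitarity condition of~\eqref{eq:qbijbiueqs} then translates into $\ket{B_i} = \tfrac{1}{\sqrt{d}}(U_i \otimes \mathbbm{1}_H)\ket{\eta_H}$ with each $U_i$ unitary, and orthonormality of the basis becomes the UEB relation $\tfrac{1}{d}\Tr(U_j^{\dagger} U_i) = \delta_{ij}$, identifying $(W, M)$ with Example~\ref{ex:uebchannels}. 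The principal obstacle is the dimension analysis of the biunitarity equations in the commutative-target case---once $\dim(E_j) = \dim(K)/d$ uniformly in $j$ is established, simplicity of $(M, H)$, the extremality-plus-uniqueness argument for purity of $W$, and the translation of biunitarity into the UEB relations all follow without further significant difficulty.
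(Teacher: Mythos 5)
Your overall architecture (decompose $W$ into pure components, reduce to invertible $\omega_i$ by a dimension count, then extract the unitary error basis from unitarity of the dilation) matches the paper's, but there is a genuine logical gap at the first step on which the rest of the argument leans. You claim that since $\dim(A)=\dim(B)$, Theorem~\ref{thm:genpure} Part~2 ``upgrades'' entanglement-reversibility to entanglement-invertibility, and then invoke Part~3 to conclude that each $(\bar M_i, H/\Ker(\omega_i))$ is a quantum bijection with $\bar\omega_i^\dagger\bar\omega_i$ an intertwiner. Part~2 says no such thing: under equal dimensions it gives only that the bi-isometries~\eqref{eq:biisomeqs} are unitary and that the entanglement-\emph{left} inverse is unique. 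Entanglement-invertibility is characterised separately in Part~3 by two extra conditions (quantum bijection plus the intertwiner condition on $\omega^\dagger\omega$), and a tight teleportation scheme is by definition only entanglement-\emph{reversible}. Establishing that it is in fact a quantum bijection with $\omega$ unitary up to scalar is precisely the content of the corollary; the paper gets there by writing out the component equations \eqref{eq:biisom1ueb}--\eqref{eq:biisom2ueb} of unitarity of~\eqref{eq:biisomeqs} and deriving both the UEB relations and $|\kappa|^2\,\omega\circ\omega^\dagger=\mathbbm{1}$ directly, after which unitarity of~\eqref{eq:qbijbiueqs} follows. Your Schur's-lemma route to $\omega_i^\dagger\omega_i\propto\mathbbm{1}$ (via simplicity of $(M,H)$) is appealing, but it rests entirely on the intertwiner condition you have not earned, so the maximal-entanglement and purity conclusions collapse with it.

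The same issue undermines your reduction of the dense-coding bullet to the teleportation bullet ``by passing to the entanglement-inverse'': a tight dense coding scheme is only assumed entanglement-reversible, so it has no entanglement-inverse to pass to until the theorem is proved for that case; the paper runs the analogous computation separately for $[d^2]\to B(H)$. Two smaller points: your dimension count is sound but should be run on the unitaries~\eqref{eq:biisomeqs}, which Part~2 does give you, rather than on~\eqref{eq:qbijbiueqs}; and your claim that uniqueness of the minimal dilation of $N$ forces all the $\omega_i$ to coincide is under-justified, since the dilations for different $i$ need only agree up to a unitary on the environment and $\kappa$ also depends on $i$ --- the paper instead reads $\omega_j^\dagger\omega_i\propto\mathbbm{1}$ directly off the mixed-state condition~\eqref{eq:biisomixed} using orthonormality of the $u_k$.
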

\begin{proof} 
We observe that if we can prove the two bullet pointed statements, the final statement follows immediately, since the entanglement-inverse of a tight teleportation scheme is a tight dense coding scheme, and vice versa.  We prove the bullet pointed statements as follows.

\vspace{.2cm}
\noindent
\emph{Tight teleportation schemes.} Let us assume that the state $W$ is pure; we will remove this assumption at the end. We furthermore assume that the linear map $\omega: H \to H$ defining $W$ is invertible; we will remove this assumption at the end.

We are therefore in the situation of Theorem~\ref{thm:genpure}. We split the algebras as $B(H) \cong H \otimes H$ and $[d^2] \cong X \otimes X^*$, where $X: [1] \to{} [d^2]$ is defined by $X := (\mathbb{C})_{i \in [d^2]}$. Let $\tau: H \otimes X \to Y \otimes E$ be a minimal dilation of $M$. The 2-morphisms~\eqref{eq:biisomeqs} are unitaries. Choosing an index $i \in [d^2]$ for the nontrivial region, we observe that unitarity of the first 2-morphism of~\eqref{eq:biisomeqs} implies the following equality:
\begin{align}\label{eq:tightteltrace}
d^2 = 
\sum_{i \in [d^2]}~
\includegraphics[valign=c]{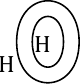}
~~=
\sum_{i \in [d^2]}~
\includegraphics[valign=c]{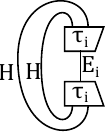}
~~=
\sum_{i \in [d^2]}~
\includegraphics[valign=c]{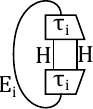}
~~=
\sum_{i \in [d^2]}~
\includegraphics[valign=c]{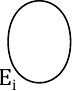}
~~=\sum_{i \in [d^2]} \dim(E_i)
\end{align}
Here the first equality is by~\eqref{eq:tracedef}; the second equality is by the fact that the first 2-morphism of~\eqref{eq:biisomeqs} is an isometry; the third equality is by isotopy, pulling $\tau_i$ around the loop; the fourth equality is by the fact that the first 2-morphism of~\eqref{eq:biisomeqs} is a coisometry; and the final equality is by~\eqref{eq:tracedef}. It follows that $E_{i} \cong \mathbb{C}$ for all $i \in [d^2]$.

It follows that $\tau_i: H \otimes H \to \mathbb{C}$, and $\kappa_i$ are some scalars. Let $u_i \in \End(H)$ be such that $\tau_i = \bra{\eta_H} (\mathbbm{1} \otimes u_i^{\dagger})$. Then unitarity of the 2-morphisms~\eqref{eq:biisomeqs} reduces to the following equations for $\{u_i\}_{i \in [d^2]}$:
\begin{align}\label{eq:biisom1ueb}
\frac{1}{d}\sum_{i\in [d^2]}~
\includegraphics[valign=c]{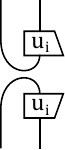}
~~=~~
\includegraphics[valign=c]{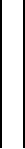}
&&
\frac{1}{d}~
\includegraphics[valign=c]{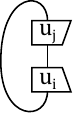}
~~=~~
\delta_{ij}
\\\label{eq:biisom2ueb}
|\kappa_i|^2~
 \includegraphics[valign=c]{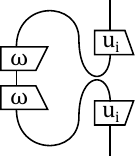}
~~=~~
\includegraphics[valign=c]{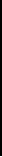}
&&
|\kappa_i|^2~
 \includegraphics[valign=c]{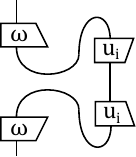}
~~=~~
\includegraphics[valign=c]{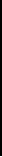}
\end{align}
Here from left to right the equations~\eqref{eq:biisom1ueb} correspond to isometry and coisometry of the first 2-morphism of~\eqref{eq:biisomeqs}, while the equations~\eqref{eq:biisom2ueb} correspond to isometry and coisometry of the second 2-morphism of~\eqref{eq:biisomeqs}. 

The equations~\eqref{eq:biisom1ueb} state precisely that the $\{U_i\}_{i \in [d^2]}$ form an orthonormal basis of $B(H)$ under the Hilbert-Schmidt inner product. By invertibility of $\omega$, the second equation of~\eqref{eq:biisom2ueb} implies that $U_i^{\dagger} U_i = \frac{1}{|\kappa_i|^2} \omega^{-1} (\omega^{\dagger})^{-1}$. The second equation of~\eqref{eq:biisom1ueb} then implies that $\frac{ \Tr(\omega^{-1} (\omega^{\dagger})^{-1})}{d|\kappa_i|^2} = \frac{\Tr(U_i^{\dagger} U_i)}{d} = 1$, so in particular $|\kappa_i|^2 =: |\kappa|^2$ is a constant.

From the second equation of~\eqref{eq:biisom2ueb} we then make the following deduction:
\begin{align*}
\includegraphics[valign=c]{pictures/wernerex/unitarity42.pdf}
~~=~~
\frac{|\kappa|^2}{d^2}
\sum_{i=1}^{d^2}
\includegraphics[valign=c]{pictures/wernerex/unitarity41.pdf}
~~=~~
\frac{|\kappa|^2}{d^2}
\sum_{i=1}^{d^2}
\includegraphics[valign=c]{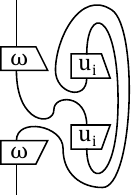}
~~=~~
|\kappa|^2
\includegraphics[valign=c]{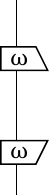}
\end{align*}
Here the first equality is by the second equation of~\eqref{eq:biisom2ueb}; the second equality is isotopy of the diagram; and the third equality is by the first equation of~\eqref{eq:biisom1ueb}. We see that $|\kappa|^2 \omega \circ \omega^{\dagger} = \mathbbm{1}$, and so $|\kappa| \omega$ is an coisometry and therefore unitary. \ignore{Since $W$ dilates a state, we have $1 = \dim(H) \Tr(\omega^{\dagger} \circ \omega) = \frac{\dim(H)^2}{\dim(H)|\kappa|^2}$, which implies $|\kappa| = \sqrt{\dim(H)}$.} The state $W$ is therefore maximally entangled. 

Unitarity of the 2-morphisms~\eqref{eq:qbijbiueqs} follows immediately from unitarity of the 2-morphisms~\eqref{eq:biisomeqs} and unitarity of $|\kappa|\omega$. Therefore $(M,H)$ is a quantum bijection by Proposition~\ref{prop:qbijmindil}. 

We now remove the assumption that the linear map $\omega$ defining $W$ is invertible. Suppose that it is not invertible; then by Lemma~\ref{lem:entrevquotient}, $(\bar{M}, H/\Ker(\omega))$ must be the channel for a tight teleportation scheme with the state $\bar{W}$. Then, from what we have shown already, there is a unitary $H_1/\Ker(\omega) \otimes H \to X \otimes E$, where $E$ is some environment. But as we saw in~\eqref{eq:tightteltrace}, this implies that $\dim(H/\Ker(\omega))\dim(H) = \sum_{i \in [d^2]}\dim(E_i)$. Since $\dim(E_i) \geq 1$ it follows that $\dim(H/\Ker(\omega)) = \dim(H)$, and so $\omega$ is invertible. 

Finally, we remove the assumption that $W$ is pure. Let $W: \mathbb{C} \to B(H) \otimes B(H)$ be a mixed state, which is a convex combination of pure states $W_i: \mathbb{C} \to B(H) \otimes B(H)$, where $i \in I$. An entanglement-left inverse $(N,H)$ for $(M,H)$ w.r.t. $W$ must be an entanglement-left inverse w.r.t. all the pure states $W_i$ independently. Therefore, by what has already been said, the $W_i$ are all maximally entangled, defined by unitaries $\omega_i: H \to H$; moreover, the channel $M$ is defined by a unitary error basis $\{u_k\}_{k \in [d^2]}$. The equation~\eqref{eq:biisomixed} requires that, for each $i,j \in I$ and $k \in [d^2]$:
\begin{align*}
\includegraphics[valign=c]{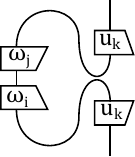}
~~=~~
\kappa_j \kappa_i^{\dagger}~ \includegraphics[valign=c]{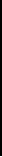}
\end{align*}
But by unitarity of the $\{u_k\}$ this implies that $\omega_j^{\dagger} \omega_i$ is proportional to the identity, which implies $\omega_i = \omega_j$ by unitarity of the $\{\omega_i\}$. We therefore have $W_i = W_j$ for all $i,j \in I$, and the state $W$ is pure.

\vspace{.2cm}
\noindent
\emph{Tight dense coding schemes.}
Again, to begin with we assume that the state $W$ is pure and defined by an invertible linear map $\omega: H \to H$.
We are therefore in the situation of Theorem~\ref{thm:genpure}.

We again split the algebras as $B(H) \cong H \otimes H$ and $[d^2] \cong X \otimes X^*$, where $X: [1] \to{} [d^2]$ is defined by $X = (\mathbb{C})_{i \in [d^2]}$. Let $\tau: H \otimes X \to H \otimes E$ be a minimal dilation of $N$. By a similar argument to~\eqref{eq:tightteltrace}, unitarity of the first 2-morphism of~\eqref{eq:biisomeqs} implies that $d^3 = d \sum_{i=1}^{d^2} \dim(E_i)$, so $E_i \cong \mathbb{C}$ for all $i$. Then $\tau_i: H \to  H$, and $\kappa_i$ are scalars. Unitarity of the morphisms~\eqref{eq:biisomeqs} implies the following equations for $\tau_i$:
\begin{align}\label{eq:dcbiueqs1}
\tau_i^{\dagger} \circ \tau_i = \mathbbm{1} && \tau_i \circ \tau_i^{\dagger} = \mathbbm{1} 
\\\label{eq:dcbiueqs2}
 \frac{\kappa_i \kappa_j^*}{d}~\includegraphics[valign=c]{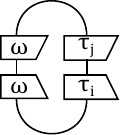}~~=~~\delta_{ij}
&&  \sum_{i \in [d^2]} \frac{|\kappa_i|^2}{d}~\includegraphics[valign=c]{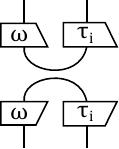}
 ~~=~~ \includegraphics[valign=c]{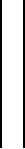}
\end{align}
The equations~\eqref{eq:dcbiueqs1} say precisely that the $\{\tau_i\}$ are unitary. Setting $i = j$ in the left hand equation of~\eqref{eq:dcbiueqs2} we obtain $1= \frac{|\kappa_i|^2}{d}\Tr(\omega^{\dagger} \omega) = \frac{|\kappa_i|^2}{d^2}$, which implies that $|\kappa_i|^2 = d^2$. (Here we calculated $\Tr(\omega^{\dagger} \omega)=\frac{1}{d}$ using the fact that $(\omega \otimes \mathbbm{1}) \ket{\eta_H}$ is a minimal dilation.) Now, taking the trace of the rightmost wire in the second equation of~\eqref{eq:dcbiueqs2} and using unitarity of $\tau_i$, we obtain $d^2 \omega \omega^{\dagger} = \mathbbm{1}$, which implies that $d\omega$ is unitary; $W$ is therefore a maximally entangled state. It then follows from the first equation of~\eqref{eq:dcbiueqs2} that $\frac{1}{d}\Tr(\tau_j^{\dagger} \tau_i) = \delta_{ij}$, i.e. that the unitaries $\{\tau_i\}_{i \in [d^2]}$ are orthogonal under the Hilbert-Schmidt inner product and therefore form a unitary error basis. 

Unitarity of the 2-morphisms~\eqref{eq:qbijbiueqs} follows immediately from unitarity of the 2-morphisms~\eqref{eq:biisomeqs} and unitarity of $d\omega$; therefore, by Proposition~\ref{prop:qbijmindil}, $(N,H)$ is a quantum bijection. 

We can remove the assumption that $\omega$ is invertible using a similar argument to that made above for tight teleportation schemes.

Finally, we remove the assumption that $W$ is pure. Let $W: \mathbb{C} \to B(H) \otimes B(H)$ be a convex combination of pure states $W_i: \mathbb{C} \to B(H) \otimes B(H)$, where $i \in I$. An entanglement-left inverse $(N,H)$ for $(M,H)$ w.r.t. $W$ must be an entanglement-left inverse w.r.t. all the states $W_i$ independently. Therefore, by what has already been said, the $W_i$ are all maximally entangled, defined by unitaries $\omega_i: H \to H$; moreover, the channel $M$ is defined by a unitary error basis $\{u_k\}_{k \in [d^2]}$. The equation~\eqref{eq:biisomixed} requires that, for each $i,j \in I$ and $k,l \in [d^2]$, there exists some scalar $c_{ij}$ such that:	
\begin{align*}
\includegraphics[valign=c]{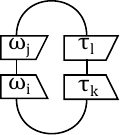}
~~=~~
c_{ij} \delta_{kl} 
\end{align*}
By orthonormality of $\tau_{k}$ this implies that $\omega_j^{\dagger} \omega_i$ is a scalar multiple of the identity for each $i,j \in I$; which, by unitarity of the $\{\omega_i\}$, implies that they are identical. $W$ is therefore a pure state.
\end{proof}

\ignore{
\section{Entanglement-symmetries of covariant channels}

In Section~\ref{} we introduced entanglement-invertible channels, and showed that they are precisely quantum bijections. In this section we will show how quantum  bijections give rise to entanglement-symmetries of channels. As a first application, we will use these symmetries to precisely compute the entanglement-assisted capacities of some quantum channels.

We tried to make Section~\ref{} accessible to those without any background in category theory. In this section some category theory will be required, but only at the level of the textbook~\cite{}. We will try to keep the arguments informal as far as possible. We will refer to the more expository work~\cite{} for details. 

\subsection{Background: symmetries of channels from $*$-isomorphisms}

It is first helpful to recall how unitary monoidal natural isomorphisms of fibre functors give rise to symmetries of channels which are implemented by conjugation with ordinary invertible channels.

\paragraph{Invertible channels and $*$-isomorphisms.}
Let $A,B$ be finite-dimensional $C^*$-algebras. A channel $M: A \to B$ is \emph{invertible} if there exists a channel $N: B \to A$ satisfying $N \circ M = \id_A$ and $M \circ N = \id_B$. It is well known that an invertible channel between f.d. $C^*$-algebras is precisely a $*$-isomorphism. (This is a direct consequence of Theorem~\ref{} when we set $\dim(H) = 1$.)

\ignore{The following examples are well-known (in fact, they follow from Theorem~\ref{} when we set $\dim(H) = 1$).
\begin{itemize}
\item Let $[m]$ and $[n]$ be commutative $C^*$-algebras, and let $M: [m] \to{} [n]$ be a channel. Then $M$ is invertible precisely when it is a bijection.
\item Let $H,K$ be Hilbert spaces. Then a channel $M: B(H) \to B(K)$ is invertible precisely when $H \cong K$ and the channel is of the form $x \mapsto U x U^{\dagger}$ for some unitary $U: H \to K$.
\end{itemize}}

\paragraph{Some concepts from categorical quantum mechanics.} Before we continue, we need to recall some concepts. For any compact quantum group $G$, let $\Rep(G)$ be the rigid $C^*$-tensor category of f.d. continuous unitary 
representations of $G$. (For definitions, see~\cite{}.) There is an isomorphism of categories between:
\begin{itemize}
\item The category whose objects are f.d. $C^*$-algebras with a $G$-action (known as \emph{$G$-$C^*$-algebras} or \emph{$C^*$-dynamical systems}) and  whose morphisms are \emph{covariant} completely positive (CP) maps, i.e. CP maps which respect the actions.
\item The category whose objects are \emph{special standard Frobenius algebras} ($\F$s) in $\Rep(G)$, and whose morphisms are \emph{CP morphisms} between these algebras. 
\end{itemize}
For precise definitions of the above notions, see~\cite{}.

We therefore have two formulations of a single category for $G$-covariant physics, which we call $\CP(G)$. The second formulation implies that we can think of finite-dimensional $G$-$C^*$-algebra theory as something that `lives in' $\Rep(G)$, in the sense that all the structures in that theory can be formulated algebraically in the category $\Rep(G)$. There is a subcategory $\Chan(G) \subset \CP(G)$ whose morphisms are restricted to \emph{channels}; in the second formulation, these correspond to CP morphisms in $\Rep(G)$ which preserve the counit of the $\F$, which is the special trace when $G$ is of Kac type. (Again, see~\cite{} for definitions.) In particular, when $G$ is the trivial group $\{e\}$, we have that $\Rep(\{e\})$ is the category $\Hilb$ of f.d. Hilbert spaces and linear maps, and $\CP(\{e\})$ (resp. $\Chan(\{e\})$) is the category $\CP$ (resp. $\Chan$) of all f.d. $C^*$-algebras and CP maps (resp. channels).

We call a faithful $\mathbb{C}$-linear unitary monoidal functor $\Rep(G) \to \Hilb$  a \emph{fibre functor}. There is always a \emph{canonical} fibre functor, which takes every representation to its underlying Hilbert space and every intertwiner to its underlying linear map. However, there may be other nonisomorphic fibre functors. A fibre functor  $F$ induces a faithful unitary functor $\widetilde{F}: \CP(G) \to \CP$; in particular, when $F$ is the canonical fibre functor, $\widetilde{F}$ takes every $G$-$C^*$-algebra to its underlying $C^*$-algebra and every covariant CP map to its underlying CP map. Here we call $\widetilde{F}$ a \emph{realisation} functor; it takes abstract $G$-$C^*$-algebras and covariant channels in $\CP(G)$ and maps them to concrete $C^*$-algebras and channels in $\CP$. Every fibre functor on $\Rep(G)$ induces a different realisation functor on $\CP(G)$.

\paragraph{From natural isomorphisms to symmetries of channels.} We can now see how every unitary monoidal natural transformation of fibre functors on $\Rep(G)$ gives rise to symmetries of $G$-covariant channels.

Let $F'$ be some fibre functor on $\Rep(G)$, and let   $\alpha: F \to F'$ be a unitary monoidal natural isomorphism to $F'$ from the canonical fibre functor. 

Then it is easily seen using the second formulation of $\CP(G)$ that for every $G$-$C^*$-algebra $\bar{A}$ in $\CP(G)$ there is an induced $*$-isomorphism $\alpha_{\bar{A}}: \widetilde{F}(\bar{A}) \to \widetilde{F'}(\bar{A})$.

These $*$-isomorphisms transform covariant channels by conjugation. Indeed let $\bar{A}_1,\bar{A}_2$ be any two $G$-$C^*$-algebras, and let $\bar{f}: \bar{A}_1 \to \bar{A}_2$ be a covariant CP map. Then, by naturality of $\alpha$: 
$$\alpha_{\bar{A}_2} \circ \widetilde{F}(\bar{f}) \circ \alpha_{\bar{A}_1}^{-1} = \alpha_{\bar{A}_2} \circ \alpha_{\bar{A}_2}^{-1} \circ \widetilde{F'}(\bar{f}) = \widetilde{F'}(\bar{f})$$
In summary, we see that a unitary monoidal natural isomorphism $\alpha: F \to F'$ between fibre functors on $\Rep(G)$ yields a whole collection of $*$-isomorphisms transforming between realisations $\widetilde{F}, \widetilde{F'}: \CP(G) \to \CP$.  These $*$-isomorphisms implement symmetries of all $G$-covariant channels by conjugation.

\ignore{
\paragraph{From $*$-isomorphisms to natural transformations.}
Now we can explain the fact that every $*$-isomorphism extends to a functorial transformation. Let $M: A \to B$ be an $*$-isomorphism, let $\Aut_A$ be the compact group of $*$-automorphisms of $A$, and let $F$ be the canonical fibre functor on $\Rep(\Aut_A)$. The category  $\Rep(\Aut_A)$ contains a $\F$ $\bar{A}$ such that $\widetilde{F}(\bar{A}) = A$. In fact, the category $\Rep(\Aut_A)$ is generated as a symmetric rigid $C^*$-tensor category by the algebra $\bar{A}$ and its structural morphisms; this allows us to extend the $*$-isomorphism $A \to B$ to obtain
\begin{itemize}
\item A new fibre functor $F'$ on $\Rep(\Aut_A)$ such that $\widetilde{F'}(\bar{A}) = B$.
\item A unitary monoidal natural isomorphism $\alpha: F \to F'$.
\end{itemize}
For some intuition as to what is happening here, suppose that the $*$-isomorphism were a $*$-automorphism; it would then correspond to a choice of element of $\Aut_A$, and we would obviously thereby obtain an invertible unitary operator on each $\Aut_A$-representation, yielding an automorphism of the canonical fibre functor $F$. We have simply observed that this phenomenon extends to general $*$-isomorphisms. 
}
\ignore{
\paragraph{A transformation on channels.} The natural isomorphism $\alpha: F \to F'$ implements a transformation on all $\Aut_A$-covariant channels. To see this, observe that $F'$ induces a new, non-canonical functor $\widetilde{F'}: \CP(\Aut_A) \to \CP$. This functor $\widetilde{F'}$ associates to every $\Aut_A$-covariant channel $\bar{f}$ a channel $\widetilde{F'}(\bar{f})$.

For every $\Aut_A$-$C^*$-algebra $\bar{B}$, the natural isomorphism $\alpha$ yields a $*$-isomorphism $\alpha_{\bar{B}}: \widetilde{F}(\bar{B}) \to \widetilde{F'}(\bar{B})$. These $*$-isomorphisms transform covariant channels by conjugation. Indeed let $\bar{B}_1,\bar{B}_2$ be any two $\Aut_A$-$C^*$-algebras, and let $\bar{f}: \bar{B}_1 \to \bar{B}_2$ be a covariant CP map. Then, by naturality of $\alpha$: 
$$\alpha_{\bar{B}_2} \circ \widetilde{F}(\bar{f}) \circ \alpha_{\bar{B}_1}^{-1} = \alpha_{\bar{B}_2} \circ \alpha_{\bar{B}_2}^{-1} \circ \widetilde{F'}(\bar{f}) = \widetilde{F'}(\bar{f})$$
In summary, we conclude that an $*$-isomorphism $A \to B$ extends to a whole collection of $*$-isomorphisms transforming between functors $\widetilde{F}, \widetilde{F'}: \CP(\Aut_A) \to \CP$.

\paragraph{Transformations for arbitrary compact quantum groups.} 
We can immediately generalise everything from $\Aut(A)$ to a arbitrary compact quantum group $G$. Let $F'$ be a fibre functor on the rigid $C^*$-tensor category $\Rep(G)$, such that there exists a unitary monoidal natural isomorphism $\alpha: F \to F'$ from the canonical fibre functor. (Remember that this was equivalent to the data of a single $*$-isomorphism from $A$ in the case $G = \Aut_A$.) As before, the fibre functor $F'$ induces a faithful unitary functor $\widetilde{F'}: \CP(G) \to \CP$. The unitary monoidal natural isomorphism $\alpha$ yields $*$-isomorphisms $\alpha_{\bar{B}}: \widetilde{F}(\bar{B}) \to \widetilde{F'}(\bar{B})$ for every $G$-$C^*$-algebra $\bar{B}$. These $*$-isomorphisms transform covariant channels by conjugation~\eqref{}.
}

\subsection{Entanglement-symmetries from quantum bijections}

We have seen that $*$-isomorphisms allow us to transform between realisations corresponding to naturally isomorphic fibre functors. We will now show that quantum bijections allow us to transform between realisations corresponding to \emph{pseudonaturally} isomorphic fibre functors. The resulting symmetries of covariant channels can be implemented using quantum entanglement; we call them \emph{entanglement-symmetries}. The following results were proven in~\cite{}; here we give a quick summary.

We first recall the definition of a \emph{unitary pseudonatural transformation}, which is a higher dimensional generalisation of a unitary monoidal natural transformation~\cite{}. 
\begin{definition}
Let $G$ be a compact quantum group and let $F, F': \Rep(G) \to \Hilb$ be fibre functors. A \emph{unitary pseudonatural transformation} $(\alpha,H): F \to F'$ is defined by a collection of unitaries $\alpha_V: F(V) \otimes H \to H \otimes F'(V)$, one for every object $V$ of $\Rep(G)$. In what follows we draw these unitaries $\alpha_V$ using a white vertex:
\begin{align*}
\includegraphics{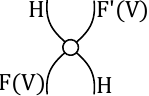}
\end{align*}
These unitaries must satisfy the following conditions (where we draw $\alpha_V$ as a white vertex):
\begin{itemize}
\item \emph{Naturality}: For every morphism $f: V \to W$ in $\Rep(G)$, we have the following equation:
\begin{align*}
\includegraphics{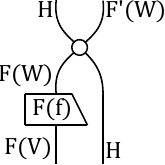}
~~=~~
\includegraphics{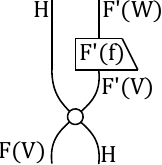}
\end{align*}
\item \emph{Monoidality}: 
\begin{itemize}
\item For any objects $V, W$ in $\Rep(G)$, let $\mu_{V,W}: F(V) \otimes F(W) \to F(V \otimes W)$ and $\mu'_{V,W}: F'(V) \otimes F'(W) \to F'(V \otimes W)$ be the components of the multiplicators of $F$ and $F'$ for $V,W$. Then we have the following equation:
\begin{align*}
\includegraphics{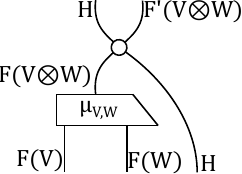}
~~=~~
\includegraphics{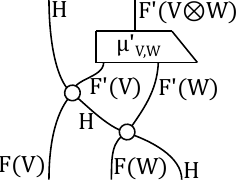}
\end{align*}
\item Let $\upsilon: \mathbb{C} \to F(\mathbbm{1})$ and $\upsilon': \mathbb{C} \to F'(\mathbbm{1})$ be the unitors of $F$ and $F'$. Then we have the following equation:
\begin{align*}
\includegraphics{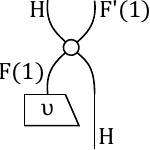}
~~=~~
\includegraphics{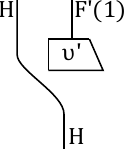}
\end{align*}
\end{itemize}
\end{itemize}
Clearly we recover the definition of a unitary monoidal natural transformation when $\dim(H) = 1$. 
\end{definition}
\noindent
Let $G$ be a compact quantum group. Let $F'$ be some fibre functor on $\Rep(G)$, and let $(\alpha,H): F \to F'$ be a unitary pseudonatural transformation from the canonical fibre functor.
\begin{lemma}[{\cite{}}]
For every $G$-$C^*$-algebra $\bar{A}$, the UPT $(\alpha,H)$ induces a quantum bijection $(\alpha_{\bar{A}},H): F(\bar{A}) \to F'(\bar{A})$.
\end{lemma}
\noindent
These quantum bijections implement entanglement-symmetries of $G$-covariant channels by conjugation. For each of the quantum bijections $(\alpha_{\bar{A}},H): F(\bar{A}) \to F'(\bar{A})$, let $((\alpha_{\bar{A}})^*,H): F'(\bar{A}) \to F(\bar{A})$ be its entanglement-inverse.
\begin{theorem}
Let $\bar{A}_1$, $\bar{A}_2$ be any $G$-$C^*$-algebras, and let $\bar{f}: \bar{A}_1 \to \bar{A}_2$ be a covariant channel. Then we have the following equation:
\begin{align*}
\includegraphics{pictures/enttransfs/ftransf1.pdf}
\qquad ~~&=~~
\qquad \qquad \qquad \includegraphics{pictures/enttransfs/ftransf2.pdf}
\\
\tilde{F}'(\tilde{f}) \qquad ~~&=~~ M_{\tilde{B}} \circ ((\tilde{F}(\tilde{f}) \circ N_{\tilde{A}}) \otimes \id_{B(H)})\circ (\id_{\tilde{F}'(\tilde{A})} \otimes \Psi) 
\end{align*}
\end{theorem}
\begin{remark}
For some operational intuition:
\begin{itemize}
\item Suppose that $F(\bar{A}_1)$ is commutative, and $F(\bar{A}_2)$ is a matrix algebra. Then the equation~\eqref{} says that $F'(\bar{f})$ is the classical channel obtained by using $F(\bar{f})$ as the quantum communication channel in a certain dense coding scheme. 
\item Suppose that $F(\bar{A}_1)$ is a matrix algebra, and suppose that $F(\bar{A}_2)$ is commutative.  Then the equation~\eqref{} says that $F'(\bar{f})$ is the quantum channel obtained by using $F(\bar{f})$ as the classical communication channel in a certain teleportation scheme.
\end{itemize}
\end{remark}
\begin{remark}
We have called these transformations \emph{entanglement-symmetries}; they are indeed reversible transformations. Indeed, every unitary pseudonatural transformation $(\alpha,H): F \to F'$ has a \emph{dual} $(\alpha^{*},H): F' \to F$. (The dual transformation simply reverses the operations, i.e. we have $(\alpha_{\bar{A}})^* = (\alpha^*)_{\bar{A}}$.)
\end{remark}

\begin{remark}
Unitary monoidal natural isomorphism classes of fibre functors $F'$ on $\Rep(G)$ correspond bijectively to isomorphism classes of $A_G$-Hopf-Galois objects~\cite{}.

For a given fibre functor $F'$, the unitary pseudonatural transformations $F \to F'$ from the canonical fibre functor were classified in~\cite{}; they correspond to finite-dimensional $*$-representations of the $A_G$-Hopf-Galois object associated to $F'$. 

In general, an $A_G$-Hopf-Galois object need not admit a finite-dimensional $*$-representation. However, by the GNS construction, there is always an infinite-dimensional $*$-representation~\cite{}. While this does not yield a unitary pseudonatural transformation, we conjecture that there is  an analogous physical transformation using an infinite-dimensional entangled state in the commuting operator framework (this conjecture is based on similar results in nonlocal games, see e.g.~\cite{}). This would mean that all fibre functors (perhaps with some restriction corresponding to preservation of dimension, see~\cite{}) would be accessible by an entanglement-assisted physical transformation.
\end{remark}
\begin{remark}
For an example and application of these transformations, see~\cite{}.
\end{remark}
\noindent
\ignore{
\subsection{Example and application}

We just introduced entanglement-assisted transformations between fibre functors $\tilde{F},\tilde{F}': \CP(G) \to CP$. We will now give a simple example.

\subsubsection{Example: Graded $C^*$-algebras}

Let $G$ be a finite group. Consider the rigid $C^*$-tensor category $\Hilb(G)$ of $G$-graded Hilbert spaces and grading-preserving linear maps. This category is generated by simple objects $\{[g] ~|~ g \in G\}$, with tensor product $[g_1] \otimes [g_2] = [g_1g_2]$; the canonical fibre functor takes all these simple objects to the one-dimensional Hilbert space $\mathbb{C}$. The compact quantum group associated to $\Hilb(G)$ is the Hopf $*$-algebra $C(G)$ of complex functions on $G$. The category $\CP(C(G))$ is therefore the category of $G$-graded f.d. $C^*$-algebras and grading-preserving CP maps. 

We will now characterise the realisations $\CP(C(G)) \to \Hilb$ associated to  associated to fibre functors on $\Hilb(G) \cong \Rep(C(G))$.  First we define an equivalence relation on pairs $(L,[\psi])$ of a subgroup $L<G$ and a 2-cohomology class $[\psi] \in H^2(L, U(1))$ as follows:
\begin{align}\nonumber
(L_1,[\psi_1]) \sim (L_2,[\psi_2]) \qquad \Leftrightarrow \qquad L_2 = g L_1 g^{-1} &\textrm{~and~} \psi_1 \textrm{~is cohomologous to~} 
\\\label{eq:ctequivrel}
&\psi_2^g(x,y):= \psi_2(g x g^{-1}, gyg^{-1}) \textrm{~for some~}g \in G
\end{align}
We now recall the classification of fibre functors on $\Hilb(G)$, using the standard correspondence between fibre functors and rank-one module categories~\cite{}. An equivalence class of rank-one module categories over $\Hilb(G)$ is determined by a 2-cohomology class $[\psi] \in H^2(G,U(1))$, up to the equivalence relation~\eqref{}~\cite{}. Such a module category has a single object, with associativity constraint specified in the obvious way by the 2-cocycle $\psi$. We obtain the following characterisation of the associated fibre functor $F_{\psi}$.
\begin{lemma}
Let $\psi \in Z^2(G,U(1))$ be some 2-cocycle. Then $F_{\psi} \cong F \circ E_{\psi}$, where $F: \Hilb(G) \to \Hilb$ is the canonical fibre functor and $E_{\psi}$ is an autoequivalence of $\Hilb(G)$, specifically the identity functor equipped with the following generally nontrivial multiplicator:
\begin{align*}
m_{[g_1],[g_2]} = \overline{\psi(g_1,g_2)} ~\id_{[g_1g_2]}: E_{\psi}([g_1]) \otimes E_{\psi}([g_2]) \to E_{\psi}([g_1 g_2])
\end{align*}
\end{lemma}
\noindent
Likewise, the induced realisation functor $\tilde{F}_{\psi}: \CP(C(G)) \to \CP$, which we are trying to compute, is of the form $\tilde{F} \circ \tilde{E}_{\psi}$, where $\tilde{E}_{\psi}$ is the autoequivalence of $\CP(C(G))$ induced by $E_{\psi}$. We now describe the induced autoequivalence $\tilde{E}_{\psi}$.

We first recall the classification of systems in $\CP(C(G))$. Every system is a direct sum of some indecomposable systems, so we need only classify the indecomposable systems. For any subgroup $H<G$ and 2-cocycle $\phi \in Z^2(H, U(1))$, let $A(H,\phi)$ be the $\phi$-twisted group algebra for $H$; this is an $|H|$-dimensional $*$-algebra whose underlying vector space has basis $\{\bar{g}~|~g \in H\}$, whose multiplication is specified by $\bar{g} \cdot \bar{h} = \phi(g,h) \overline{gh}$, whose unit is $\bar{e}$, and whose involution is $\bar{g}^* = \overline{g^{-1}}$. Suitably normalised, this is an $\F$ in $\Hilb(G)$, where the element $\bar{g}$ is graded by $g$ for all $g \in H$. Then, by ~\cite{}\cite{}, indecomposable $\F$s in $\Hilb(G)$ are all of the form $V \otimes A(H,\psi) \otimes V^*$ for some pair $(H,\psi)$ and some object $V$ of $\Hilb(G)$, where the multiplication and unit of the $\F$ are defined using the multiplication and unit of $A(H,\phi)$ together with the cup and cap of the duality on $V$. 
\begin{proposition}
Let $\psi \in Z^2(G,U(1))$ be a 2-cocycle. The corresponding autoequivalence $\tilde{E}_{\psi}$ of $\CP(C(G))$ is defined as follows on indecomposable objects:
\begin{itemize}
\item On objects: the indecomposable $\F$ $V \otimes A(H,\phi) \otimes V^*$ is taken to the indecomposable $\F$ $V \otimes A(H,\overline{\psi} \phi) \otimes V^*$.
\item On morphisms: a CP map $V_1 \otimes A(H_1,\phi_1) \otimes V_1^* \to V_2 \otimes A(H_2,\phi_2) \otimes V_2^*$ is taken to the identical CP map (i.e. identical as a grading-preserving map on the underlying graded Hilbert spaces) $V_1 \otimes A(H_1,\overline{\psi}\phi_1) \otimes V_1^* \to V_2 \otimes A(H_2,\overline{\psi}\phi_2) \otimes V_2^*$.
\end{itemize}
This transformation is extended to all $\F$s and CP maps in the obvious way.
\end{proposition}
\begin{example}[Transformations on classical channels]
Let $G$ be an finite abelian group (this will make calculation easier). Fibre functors on $\Hilb(G)$ correspond to 2-cohomology classes $[\psi] \in H^2(G,U(1))$.

Consider two indecomposable $G$-graded classical systems. By the classification~\eqref{}, these will be twisted group algebras $A(H_1,1)$, $A(H_2,1)$ , for some abelian subgroups $H_1,H_2<G$. There are two orthonormal bases of interest for these algebras. The first is the graded basis $\{\bar{g}~|~g \in H_i\}$ by which we defined the twisted group algebra. The second is the factor basis (i.e. the basis realising the isomorphism $A(H_i,1) \cong \oplus_{i=1}^{|H_i|}\mathbb{C}$). Explicitly, the factor basis is $\{\bar{\chi}~|~ \chi \in H_i^*\}$, where $\bar{\chi}:= \frac{1}{|H_i|} \sum_{g \in H_i} \chi(g) \bar{g}$. Let $r_i: G^* \to H_i^*$ be the restriction homomorphism; we obtain an action of $G^*$ on $A(H_i,1)$ by $\xi \cdot \bar{\chi} = \overline{r_i(\xi)\chi}$ for all $\xi \in G^*$. In the factor basis, a grading-preserving channel $f: A(H_1,1) \to A(H_2,1)$ is a stochastic matrix $(f_{\bar{\chi}_2,\bar{\chi}_1})_{\chi_i \in H_i^*}$ which is covariant with respect to these actions.

We now compute the transformation. Let $\psi \in Z^2(G,U(1))$ be a 2-cocycle. The new algebras will be $A(H_1,\bar{\psi}|_{H_1})$, $A(H_2,\bar{\psi}|_{H_2})$. Choose a grading-preserving channel $A(H_1,1) \to A(H_2,1)$, and let $f$ be its stochastic matrix. To determine the transformed channel we will express $f$ in the graded basis. Let $(\mu^i_{g,\chi})_{g \in H_i, \chi \in H_i^*}$ be the Fourier transform matrix, i.e. $\mu_{\chi,g} = \frac{1}{|H_i|} \chi(g)$. The basis-changed matrix $f^{\mu}:= \mu^2 f (\mu^1)^{\dagger}$ is diagonal. The transformed channel is now straightforward to compute; in the basis $\{\bar{g}\}$ of the new algebra, it has exactly the same matrix $f^{\mu}$.

We remark that coboundaries and restriction of the global 2-cocycle to subgroups are important. For instance, suppose that $\psi|_{H_1}, \psi|_{H_2}$ are both coboundaries, so the new systems are classical again. Let $i_{i}: A(H_i,\psi|_{H_i}) \to A(H_i,1)$ be the obvious $*$-isomorphism; in the graded basis this is a diagonal matrix. Then in the graded basis the transformed channel is $i_{2} \circ f \circ i_{1}^{\dagger}$. This is in general a different channel.
\end{example}
\noindent
We now determine an entanglement-assisted transformation $F \to F_{\psi}$. Since $C(G)$ is a finite compact quantum group, every fibre functor is accessible from the canonical fibre functor by a unitary pseudonatural transformation~\cite{}. For a 2-cocycle $\psi \in Z^2(G,U(1))$, a UPT $F \to F_{\psi}$ can be straightforwardly calculated using~\cite{}. Explicitly, let $\pi: G \to B(H)$ be an irreducible projective representation of $G$ with cocycle $\psi$.  Let $[g]$ be a simple object of $\Hilb[G]$; we observe that $F(V) = F_{\psi}(V) = \mathbb{C}$. We then define a UPT $(\alpha,H): F \to F_{\psi}$ with components $\alpha_{[g]}: H \to H$ by $\alpha_{[g]}:= \pi(g)$. 

For the following proposition, which is an immediate application of Theorem~\ref{}, we fix some notation. By Theorem~\ref{}, for any system $\tilde{A}$ in $\CP(C(G))$ the algebras $A = \tilde{F}(\tilde{A})$ and $\tilde{F}_{\psi}(\tilde{A})$ are identical as graded Hilbert spaces; the only difference is the twist in the multiplication. Let $A_g$ be the homogeneous subspace of $A$ for the group element $g \in G$. There are embeddings $i_g: A_g \to\tilde{F}(\tilde{A})$ and $i_g': A_g \to \tilde{F}_{\psi}(\tilde{A})$. Let $I_g := i_g' \circ i_g^{\dagger}: \tilde{F}(\tilde{A}) \to \tilde{F}_{\psi}(\tilde{A})$.
\begin{proposition}
Let $\tilde{A}$ be a system in $\CP(C(G))$. Let $\tilde{F}$ be the canonical fibre functor on $\CP(C(G))$, and let $\tilde{F}_{\psi}$ be the fibre functor associated to some 2-cocycle $\psi \in Z^2(G,U(1))$.

Then the entanglement-invertible channel $(M_{\tilde{A}},H): \tilde{F}(\tilde{A}) \to \tilde{F}_{\psi}(\tilde{A})$ and its inverse $(N_{\tilde{A}},H): \tilde{F}_{\psi}(\tilde{A}) \to \tilde{F}(\tilde{A})$ are defined as follows:
\begin{align*}
M_{\tilde{A}}(\rho \otimes \sigma) = \frac{1}{\sqrt{\dim(H)}} \sum_{g \in G} \Tr[\pi(g)\sigma] I_g (\rho)  &&
N_{\tilde{A}}(\rho \otimes \sigma) = \frac{1}{\sqrt{\dim(H)}} \sum_{g \in G} \Tr[\pi(g)^* \sigma] I_g^{\dagger} (\rho)
\end{align*}
\end{proposition}
\begin{remark}
These are not the only transformations associated to a finite group. Indeed, rather than the category $\Rep(C(G))$ of $G$-graded Hilbert spaces one can consider the category $\Rep(G)$ directly; now the systems possess a $G$-action rather than a $G$-grading. In this case the interesting fibre functors arise when $G$ is a \emph{group of central type}. Again, all fibre functors are accessible by a unitary pseudonatural transformation. We will not go into details here as the description of the induced functor $\CP(G) \to \CP$ is more complicated.
\end{remark}

\ignore{
This is already enough to compute $\hat{E}$ as far as we need to for our purposes:
\begin{itemize}
\item The $\F$ $A(H,\psi)$ is mapped to the $\F$ $A(H,\overline{\omega}\psi)$.
\item The right $A(H,\psi)$ module $V \otimes A(H,\psi)$ is mapped to the right $A(H,\overline{\omega}\psi)$-module $V \otimes A(H,\overline{\omega}\psi)$.
\item A module homomorphism $f: V \otimes A(H,\psi) \to W \otimes A(H,\psi)$ is mapped to itself, considered as a module homomorphism $f: V \otimes A(H,\overline{\omega}\psi) \to W \otimes A(H,\overline{\omega}\psi)$.
\end{itemize}
The first bullet point specifies the equivalence $\tilde{E}$ on objects of $\CP(G)$. 

The second and third bullet points can be used to compute the effect of $\tilde{E}$ on morphisms. The Choi theorem~\eqref{} gives a bijective correspondence between CP morphisms $V \otimes A(H_1,\psi_1) \otimes  V^* \to W \otimes A(H_2,\psi_2) \otimes W^*$ and positive elements of $\End_{A(H_2,\psi_2)-\Mod-A(H_1,\psi_1)}[A(H_2,\psi_2) \otimes W^* \otimes V \otimes A(H_1,\psi_1)]$. The equivalence $\hat{E}$ maps these elements to $\End_{A(H_2,\overline{\omega}\psi_2)-\Mod-A(H_1,\overline{\omega}\psi_1)}[A(H_2,\overline{\omega}\psi_2) \otimes W^* \otimes V \otimes A(H_1,\overline{\omega}\psi_1)]$ in the obvious way. It may of course happen that $\bar{\omega} \psi_i$ is not the chosen representative of its cohomology class; in this case there is an obvious isomorphism $A(H_i,\bar{\omega}\psi_i) \cong A(H_i,\underline{\bar{\omega}\psi_i})$, where $\underline{\bar{\omega}\psi_i}$ is the chosen representative. 
}
\ignore{
\noindent
We will now present an example where classical channels are transformed into quantum channels. For this we introduce the following notion. 
\begin{definition}
Let $G$ be a group. We say that a 2-cocycle $\psi \in Z^2(G,U(1))$ is \emph{nondegenerate} if the twisted group algebra $A(G,\psi)$  is a matrix algebra; or equivalently, if $G$ has, up to isomorphism, a single projective representation $H_{\psi}$ (of dimension $\sqrt{|G|}$) with 2-cocycle $\psi$, defining a $*$-isomorphism $A(G,\psi) \cong B(H_{\psi})$.

Nondegeneracy is preserved under multiplication by a coboundary; we call a cohomology class $[\psi] \in H^2(G,U(1))$ of nondegenerate 2-cocycles a \emph{nondegenerate cohomology class}.

A group possessing a nondegenerate cohomology class is called a \emph{group of central type.} We will sometimes find it convenient to abuse language by referring to a pair $(G,[\psi])$, where $G$ is a group of central type and $[\psi] \in H^2(G,U(1))$ is a nondegenerate 2-cohomology class, as a group of central type.
\end{definition}
\noindent 
An abelian group $A$ is of central type when it is of \emph{symmetric type}, i.e. when there exists some group $S$ such that $A \cong S \times S$. Nonabelian groups $G$ of central type do not admit such an easy classification; see~\cite{} for a list of such groups up to order 121.
} 
\ignore{
\begin{example}[Transformations from a classical system]
We consider all transformations of a single classical system arising from a finite group grading. By the classification~\eqref{}, all indecomposable graded commutative f.d. $C^*$-algebras are twisted group algebras $A(G,1)$, for some abelian group $G$. Possible transformations therefore correspond to 2-cohomology classes $[\psi] \in H^2(G,U(1))$.

Since $G$ is abelian, it is convenient to use the obvious isomorphism $\Hilb(G) \cong \Rep(G^*)$. We thereby consider the twisted group algebra $A(G,1)$ as the algebra $C(G^*)$ of functions on the group $G^*$, with basis $\{\delta_{\chi}~|~\chi \in G^*\}$, and an action of $G^*$ by $\chi' \cdot \delta_{\chi} =  \delta_{\chi' \chi}$. A grading-preserving channel $f$ on $A(G,1)$ is simply one covariant with respect to this action, i.e. satisfying $\chi \cdot f(\bar{\chi} \cdot \delta_{y}) = f(\delta_{y})$ for all $y \in G^*$. Such channels therefore correspond to stochastic matrices $(f_{x,y})_{x, y \in G^*}$, where $f(\delta_{y}) = \sum_{x \in G^*} f_{x,y} \delta_x$; the covariance constraint implies that $f_{x,y} = f_{\chi \cdot x, \bar{\chi} \cdot y}$, so the channel is determined by a choice of $|G^*|$ probabilities $\{f_{x,e}\}_{x \in G^*}$ satisfying $\sum_{x \in G^*} f_{x,e} = 1$. 

To compute the transformation we move from the basis $\{\delta_{\chi}\}$ to the basis $\{\bar{g}\}$. This is a Fourier transform, since $\bar{g} = \sum_{\chi \in G^*} \chi(g) \delta_{\chi}$. Let $(\mu_{g,\chi})_{g \in G, \chi \in G^*}$ be the matrix of this Fourier transform, i.e. $\mu_{\chi,g} = \chi(g)$. Then $f^{\mu}:= \mu f \mu^{\dagger}$ is a diagonal matrix with diagonal coefficients $(f^\mu)_{g,g} = \sum $.  Now the transformation associated to some cohomology class $[\bar{\psi}]$ can straightforwardly be computed. The new algebra is $A(G,\psi)$. In the basis $\{\bar{g}\}$ of the new algebra, the transformed channel has exactly the same matrix $f^{\mu}$.
\end{example}
\begin{example}[Transformations on multiple classical systems]
When more than one system is considered, one must take into account the relationship between subgroups of a larger group. Fix some group $G$ and let $H_1,H_2 < G$ be abelian subgroups. We consider transformations arising from the $G$-grading on the classical systems $A(H_1,1)$ and $A(H_2,1)$, which correspond to 2-cohomology classes $[\psi] \in H^2(G,U(1))$.

As before, the 2-cocycle $\bar{\psi} \in Z^2(G,U(1))$ takes $A(H_1,1)$ and $A(H_2,1)$ to $A(H_1,\psi|_{H_1})$ and $A(H_2,\psi|_{H_2})$ respectively. Endo-channels on these algebras are transformed as in Example~\ref{}. A channel $A(H_1,1) \to A(H_2,1)$ is transformed
\end{example}}

\ignore{

For an abelian group $L$, we recall (e.g. from~\cite{}) the correspondence between 2-cocycles $\psi \in Z^2(L)$ and alternating bimultiplicative forms $\rho: L \times L \to U(1)$, which takes a cocycle $\psi$ to the form $\rho_{\psi}$ defined by $\rho_{\psi}(x,y) := \psi(x,y) \overline{\psi(x,y)}$. It is not hard to show that this induces an isomorphism of the cohomology group $H^2(L)$ with the group $\Hom(\Lambda^2 A, U(1))$ of alternating bimultiplicative forms.
We say that a 2-cohomology class $[\psi]$ is \emph{nondegenerate} if the associated form $\rho_{\psi}$ is nondegenerate in the usual sense (i.e. $\rho(x,-): L \to U(1)$ is nontrivial for all $x \neq e$).

In fact, it is not hard to show that all 2-cocyles are induced from nondegenerate 2-cocycles on quotients. Recall that the \emph{radical} of an alternating form $\rho$ on $L$ is the subgroup $L_{\rho}:= \{x \in L~|~ \rho(x,-) = 1\}$. The following lemma is very easy to prove and is left to the reader.
\begin{lemma}
Let $\rho \in \Hom(\Lambda^2 A,U(1))$. Choose any section $\mu$ of the quotient $q: L \to L/L_{\rho}$. Define a map $\tilde{\rho}: L/L_{\rho} \times L/L_{\rho} \to U(1)$ by $\tilde{\rho}(g,h):= \rho(\mu(g),\mu(h))$. 

The map $\tilde{\rho}$ does not depend on the choice of section; it is an nondegenerate alternating bimultiplicative form on $L/L_{\rho}$, such that $\rho$ is obtained by inflating $\tilde{\rho}$ using the quotient $q:L \to L/L_{\rho}$. 
\end{lemma}
\ignore{\begin{proof}
To make everything concrete, let $L \to B(H)$, $g \mapsto U_g$ be some unitary projective representation of $L$ with cocycle $\psi$. Now this reduces to an ordinary linear representation of the $\alpha$-regular elements, which we call $\alpha: L_{\alpha} \to B(H)$. It is straightforwardly seen that the map $c_{\mu}(g,h): L/L_{\alpha} \times L/L_{\alpha} \to B(H)$ defined by $c_{\mu}(g,h):= \alpha(\mu(g) \mu(h) \mu(gh)^{-1})$ obeys the 2-cocycle equation $c_{\mu}(g,h)c_{\mu}(gh,i) = c_{\mu}(h,i)c_{\mu}(g,hi)$.
\ignore{; indeed, we have $\alpha(\mu(g) \mu(h) \mu(gh)^{-1}) \alpha(\mu(gh) \mu(i) \mu(ghi^{-1}) = \alpha( \mu(g) \mu(h) \mu(i) \mu(ghi)^{-1}) = \alpha(\mu(g) \mu(hi) \mu(ghi)^{-1}) \alpha(\mu(h) \mu(i) \mu(hi)^{-1})$.}

We first show that $\tilde{\psi}$ is a nondegenerate 2-cocycle on $L/L_{\alpha}$. Indeed, by multiplying the central term in the two possible orders we see that
\begin{align*}
\psi(\tilde{h},\tilde{i})\psi(\tilde{g},\tilde{hi}) c_{\mu}(h,i) c_{\mu}(g,hi) U_{\tilde{ghi}} = U_{\tilde{g}} U_{\tilde{h}} U_{\tilde{i}} =  \psi(\tilde{g},\tilde{h}) \psi(\tilde{gh},\tilde{i}) c_{\mu}(g,h) c_{\mu}(gh,i)U_{\tilde{ghi}},
\end{align*}
so using that $c_{\mu}$ is a 2-cocycle it follows that $\tilde{\psi}$ is a 2-cocycle also. Nondegeneracy is clear since $\rho_{\tilde{psi}}(g,h) = e$ implies that $\rho_{\psi}(\tilde{g},h) = e$ for all $h \in L$; thus $\tilde{g} \in L_{\alpha}$ and so $g = e$.

We must show that a different choice of section produces a cohomologous 2-cocycle. This is clear, since a different choice of section comes down to $\mu_{1}(g) = \mu_2(g) \phi(g)$ for some coboundary $\phi: L/L_{\alpha} \to U(1)$. Finally we must show that $[\psi]$ is obtained by inflation of $[\tilde{\psi}]$. For this observe $\mu(q(g)) = g \phi(g)$ for some coboundary $\phi: L \to U(1)$, so the result follows. 
\end{proof}}
\noindent
In terms of the fibre functor $F_{\rho}$, we have the following corollary.
\begin{corollary}
Let $A$ be an abelian group, and let $\rho \in \Hom(\Lambda^2 A, U(1))$. Let $q: A \to A/A_{\rho}$ be the quotient by the radical, and let $\tilde{\rho} \in \Hom(\Lambda^2 A/A_{\rho}, U(1))$ be the nondegenerate form. 

Then $F_{\rho} \cong F \circ E \circ Q$, where $Q:\Hilb(A) \to \Hilb(A/A_{\rho})$ is the quotient functor induced by $q$; $E$ is the autoequivalence of $\Hilb(A/A_{\rho})$ induced by $\tilde{\psi}$; and $F$ is the canonical fibre functor on $\Hilb(A/A_{\rho})$. 
\end{corollary}
\noindent
In the abelian case we therefore restrict our attention to fibre functors coming from nondegenerate 2-cocycles $\omega: G \times G \to U(1)$; i.e. those which do not factor through a quotient. A group admitting a nondegenerate 2-cocyle is called a \emph{group of central type}; in the abelian case these are precisely the groups which can be split as $A = S \oplus S^*$.

\begin{example}
Our main example for computations will be the groups $G = Z_{p}^{2n}$ for prime $p$. These are $2n$-dimensional vector space over the finite field $\mathbb{Z}_p$, and alternating bimultiplicative forms are precisely alternating bilinear forms in the usual sense. One can pick a basis $\{e_i\}$ of $G$; the alternating forms then form a $\mathbb{Z}_p$-vector space with basis $\{e_i \wedge e_j~|~i < j\}$.  As we have seen, these correspond to pairs of a subspace $L$ and a nondegenerate form on the quotient $G/L$. The set of nondegenerate alternating forms on $G/L$ is the quotient of the general linear group by the symplectic group. 

The twisted group algebra $A(H,\rho)$ corresponding to a pair of a subspace $H<G$ and an alternating form $\rho$ on $H$ may be described as follows. Let $H_{\rho}$ be the radical and pick an isomorphism $H \cong H_{\rho} \oplus H/H_{\rho}$. This induces an isomorphism $A(H,\rho) \cong A(H_{\rho},1) \otimes A(H/H_{\rho},\tilde{\rho})$.
\end{example}
}

\subsubsection{Application: Entanglement-assisted capacities of quantum channels}

Quantum channels have several distinct capacities, such as the classical capacity $C$ and the quantum capacity $Q$. Following~\cite{}, we are interested in $C_E$, the \emph{entanglement-assisted classical capacity} of a quantum channel. This is a quantum channel's capacity for transmitting classical information with the help of unlimited prior pure entanglement between sender and receiver. (Note that $C_E$ determines the analogous entanglement-assisted quantum capacity $Q_E$, since $Q_E = C_E/2$ by teleportation and dense coding.) 

It is well-known that, for classical channels, $C_E = C$; entanglement cannot increase the classical capacity. However, for quantum channels this no longer holds. In general only lower and upper bounds are known on $C_E$ for quantum channels. The following proposition increases the range of channels for which $C_E$ can be computed precisely. This approach was already used in~\cite{} for the standard Pauli teleportation and dense coding protocol.
\begin{proposition}
Let $G$ be a compact quantum group, and let $\tilde{f}: \tilde{A} \to \tilde{B}$ be a covariant classical channel in $\CP(G)$. Let $F'$ be any fibre functor on $\Rep(G)$ which is accessible by a unitary pseudonatural transformation. Then $C_E[\tilde{F}'(\tilde{f})] = C[\tilde{f}]$. 
\end{proposition}
\begin{proof}
The equation~\eqref{} clearly gives an entanglement-assisted communication scheme which interchanges  $\tilde{f}$ with $\tilde{F}'(\tilde{f})$; therefore $C_E$ is the same for both channels, and since $C_E = C$ for $\tilde{f}$ (as it is a classical channel) the result follows.
\end{proof}
\noindent
One straightforward way to apply Proposition~\ref{} is to take some classical channels whose capacity is known, and which also possess a symmetry. We do this in the following example. 
\begin{example}
Let $X,Y$ be finite sets. A \emph{weakly symmetric} classical channel~\cite{} $f:X \to Y$ is one whose stochastic matrix $p(y|x)$ satisfies the following conditions:
\begin{itemize}
\item All rows are permutations of each other. 
\item The channel is unital (i.e. it preserves the uniform distribution).  
\end{itemize}
The capacity of a weakly symmetric classical channel is well known to be:
$$C = \log(|Y|) - H(\textrm{row of transition matrix})$$
Let us consider how covariance of a channel under a finite group action relates to weak symmetry. 
Fix a group $G$, fix $G$-actions $\pi_X: G \to \Aut(X)$ and $\pi_Y: G \to \Aut(Y)$, and let $f: X \to Y$ be a covariant unital channel. We claim that, provided the $G$-action on $X$ is transitive, the channel $f$ is always weakly symmetric. Indeed, let $X$ have factor basis $\{\ket{1},\dots,\ket{|X|}\}$. Then for any $x \in X$, 
$$f \ket{x} = \sum_{y \in Y} p(y|i) \ket{y}$$. 

We did not present the computation of the fibre functor corresponding to a group action in this work. However, we did present the fibre functor corresponding to a group grading. It follows from what we have already said that any grading-preserving unital channel from an indecomposable $G$-graded classical system is weakly symmetric. Indeed, we showed in Example~\ref{} that the channel preserves the grading precisely when it is covariant for the action of $G^*$ on the systems (which is transitive by indecomposability). All the transformed channels we obtained in Example~\ref{} therefore have entanglement-assisted classical capacity equal to the classical capacity of the original channel. 
\end{example}}
}

\bibliographystyle{alphaurl}
\bibliography{bibliography}

\end{document}